\definecolor{darkred}{rgb}{0.8,0.1,0.1}
\definecolor{darkblue2}{rgb}{0.1,0.2,0.8}
\newcommand{\diff}[1]{\textcolor{black}{ #1}}
\newcommand{\lb}[1]{t(#1)}
\definecolor{darkblue}{RGB}{0,76,156}
\definecolor{darkkblue}{RGB}{0,0,153}
\definecolor{blue2}{RGB}{102,178,255}
\newtheorem{definition}{Definition}
\newtheorem{proposition}{Proposition}
\newtheorem{lemma}[proposition]{Lemma}
\newtheorem{theorem}[proposition]{Theorem}
\newtheorem{corollary}[proposition]{Corollary}
\def\squareforqed{\hbox{\rlap{$\sqcap$}$\sqcup$}}
\def\qed{\ifmmode\squareforqed\else{\unskip\nobreak\hfil
\penalty50\hskip1em\null\nobreak\hfil\squareforqed
\parfillskip=0pt\finalhyphendemerits=0\endgraf}\fi}
\def\endenv{\ifmmode\;\else{\unskip\nobreak\hfil
\penalty50\hskip1em\null\nobreak\hfil\;
\parfillskip=0pt\finalhyphendemerits=0\endgraf}\fi}
\newenvironment{proof}{\noindent \textbf{{Proof.}~}}{\hfill $\blacksquare$}
\mathchardef\ordinarycolon\mathcode`\:
\def\vcentcolon{\mathrel{\mathop\ordinarycolon}}
\def\resetMathstrut@{%
    \setbox\z@\hbox{%
        \mathchardef\@tempa\mathcode`\[\relax
        \def\@tempb##1"##2##3{\the\textfont"##3\char"}%
        \expandafter\@tempb\meaning\@tempa \relax
    }%
    \ht\Mathstrutbox@\ht\z@ \dp\Mathstrutbox@\dp\z@}
\newcommand{\nc}{\newcommand}
\nc{\rnc}{\renewcommand}
\nc{\lbar}[1]{\overline{#1}}
\nc{\bra}[1]{\langle#1|}
\nc{\ket}[1]{|#1\rangle}
\nc{\ketbra}[2]{|#1\rangle\!\langle#2|}
\nc{\braket}[2]{\langle#1|#2\rangle}
\nc{\proj}[1]{| #1\rangle\!\langle #1 |}
\nc{\avg}[1]{\langle#1\rangle}
\nc{\Rank}{\operatorname{Rank}}
\nc{\smfrac}[2]{\mbox{$\frac{#1}{#2}$}}
\nc{\tr}{\operatorname{Tr}}
\nc{\ox}{\otimes}
\nc{\dg}{\dagger}
\nc{\dn}{\downarrow}
\nc{\cA}{{\cal A}}
\nc{\cB}{{\cal B}}
\nc{\cC}{{\cal C}}
\nc{\cD}{{\cal D}}
\nc{\cE}{{\cal E}}
\nc{\cF}{{\cal F}}
\nc{\cG}{{\cal G}}
\nc{\cH}{{\cal H}}
\nc{\cI}{{\cal I}}
\nc{\cJ}{{\cal J}}
\nc{\cK}{{\cal K}}
\nc{\cL}{{\cal L}}
\nc{\cM}{{\cal M}}
\nc{\cN}{{\cal N}}
\nc{\cO}{{\cal O}}
\nc{\cP}{{\cal P}}
\nc{\cQ}{{\cal Q}}
\nc{\cR}{{\cal R}}
\nc{\cS}{{\cal S}}
\nc{\cT}{{\cal T}}
\nc{\cV}{{\cal V}}
\nc{\cX}{{\cal X}}
\nc{\cY}{{\cal Y}}
\nc{\cZ}{{\cal Z}}
\nc{\cW}{{\cal W}}
\nc{\csupp}{{\operatorname{csupp}}}
\nc{\qsupp}{{\operatorname{qsupp}}}
\nc{\var}{{\operatorname{var}}}
\nc{\rar}{\rightarrow}
\nc{\lrar}{\longrightarrow}
\nc{\polylog}{{\operatorname{polylog}}}
\nc{\wt}{{\operatorname{wt}}}
\nc{\dia}{{\diamondsuit }}
\def\x{\xi}
\nc{\SSS}{{{\mathbb S}}}
\nc{\RR}{{{\mathbb R}}}
\nc{\CC}{{{\mathbb C}}}
\nc{\FF}{{{\mathbb F}}}
\nc{\NN}{{{\mathbb N}}}
\nc{\ZZ}{{{\mathbb Z}}}
\nc{\PP}{{{\mathbb P}}}
\nc{\QQ}{{{\mathbb Q}}}
\nc{\UU}{{{\mathbb U}}}
\nc{\EE}{{{\mathbb E}}}
\nc{\id}{{\operatorname{id}}}
\nc{\CHSH}{{\operatorname{CHSH}}}
\nc{\Hom}[2]{\mbox{Hom}(\CC^{#1},\CC^{#2})}
\nc{\rU}{\mbox{U}}
\nc{\ob}[1]{#1}
\nc{\SEP}{{\text{SEP}}}
\nc{\NS}{{\text{NS}}}
\nc{\LOCC}{{\operatorname{LOCC}}}
\nc{\PPT}{{\operatorname{PPT}}}
\nc{\EXT}{{\text{EXT}}}
\nc{\Sym}{{\operatorname{Sym}}}
\nc{\HH}{\mathbb{H}}
\nc{\ERLO}{{E_{\text{r,LO}}}}
\nc{\ERLOCC}{{E_{\text{r,LOCC}}}}
\nc{\ERPPT}{{E_{\text{r,PPT}}}}
\nc{\ERLOCCinfty}{{E^{\infty}_{\text{r,LOCC}}}}
\nc{\Aram}{{\operatorname{\sf A}}}
\let\id\1
\def\grd@save@target#1{%
  \def\grd@target{#1}}
\def\grd@save@start#1{%
  \def\grd@start{#1}}
\tikzset{
  grid with coordinates/.style={
    to path={%
      \pgfextra{%
        \edef\grd@@target{(\tikztotarget)}%
        \tikz@scan@one@point\grd@save@target\grd@@target\relax
        \edef\grd@@start{(\tikztostart)}%
        \tikz@scan@one@point\grd@save@start\grd@@start\relax
        \draw[minor help lines,magenta] (\tikztostart) grid (\tikztotarget);
        \draw[major help lines] (\tikztostart) grid (\tikztotarget);
        \grd@start
        \pgfmathsetmacro{\grd@xa}{\the\pgf@x/1cm}
        \pgfmathsetmacro{\grd@ya}{\the\pgf@y/1cm}
        \grd@target
        \pgfmathsetmacro{\grd@xb}{\the\pgf@x/1cm}
        \pgfmathsetmacro{\grd@yb}{\the\pgf@y/1cm}
        \pgfmathsetmacro{\grd@xc}{\grd@xa + \pgfkeysvalueof{/tikz/grid with coordinates/major step}}
        \pgfmathsetmacro{\grd@yc}{\grd@ya + \pgfkeysvalueof{/tikz/grid with coordinates/major step}}
        \foreach \x in {\grd@xa,\grd@xc,...,\grd@xb}
        \node[anchor=north] at (\x,\grd@ya) {\pgfmathprintnumber{\x}};
        \foreach \y in {\grd@ya,\grd@yc,...,\grd@yb}
        \node[anchor=east] at (\grd@xa,\y) {\pgfmathprintnumber{\y}};
      }
    }
  },
  minor help lines/.style={
    help lines,
    step=\pgfkeysvalueof{/tikz/grid with coordinates/minor step}
  },
  major help lines/.style={
    help lines,
    line width=\pgfkeysvalueof{/tikz/grid with coordinates/major line width},
    step=\pgfkeysvalueof{/tikz/grid with coordinates/major step}
  },
  grid with coordinates/.cd,
  minor step/.initial=.2,
  major step/.initial=1,
  major line width/.initial=2pt,
}
\tikzset{
  treenode/.style = {align=center, inner sep=0pt, text centered,
    font=\sffamily},
  arn_n/.style = {treenode, circle, white, font=\sffamily\bfseries, draw=black,
    fill=black, text width=1.5em},
  arn_r/.style = {treenode, circle, red, draw=red, 
    text width=1.5em, very thick},
  arn_x/.style = {treenode, rectangle, draw=black,
    minimum width=0.5em, minimum height=0.5em}
}
\nc{\Pauli}{{{\operatorname{Pauli}}}}
\nc{\Spec}{{{\operatorname{Spec}}}}
\nc{\Stab}{{{\operatorname{Stab}}}}
\nc{\PWP}{{{\operatorname{CPWP}}}}
 \nc{\SCPO}{{{\operatorname{CSPO}}}}
 \nc{\bu}{{{\textbf{u}}}}
 \nc{\bx}{{{\textbf{x}}}}
  \nc{\bj}{{{\textbf{j}}}}
  \nc{\bz}{{{\textbf{z}}}}
  \nc{\bv}{{{\textbf{v}}}}
    \nc{\bw}{{{\textbf{w}}}}
    \nc{\bs}{{{\textbf{s}}}}
        \nc{\bb}{{{\textbf{b}}}}
\nc{\bc}{{{\textbf{c}}}}
 \nc{\bt}{{{\textbf{t}}}}
  \nc{\bp}{{{\textbf{p}}}}
    \nc{\bq}{{{\textbf{q}}}}
        \nc{\by}{{{\textbf{y}}}}
  \nc{\sn}{{{\operatorname{sn}}}}
\begin{document}
\title{  Lower bound for the $T$ count via unitary stabilizer nullity}
\author{Jiaqing Jiang}
\email{jiaqingjiang95@gmail.com}
\affiliation{Institute for Quantum Computing, Baidu Research, Beijing 100193, China}
\affiliation{Computing and Mathematical Sciences, California Institute of Technology, Pasadena, California, USA}

\author{Xin Wang}
\email{wangxin73@baidu.com}
\affiliation{Institute for Quantum Computing, Baidu Research, Beijing 100193, China}

\begin{abstract}
We introduce magic measures to quantify the nonstabilizerness of multiqubit quantum gates and establish lower bounds on the $T$ count for fault-tolerant quantum computation. First, we introduce the stabilizer nullity of multi-qubit unitary, which is based on the subgroup of the quotient Pauli group associated with the unitary. This unitary stabilizer nullity extends the state-stabilizer nullity by Beverland \emph{et al.} to a dynamic version. In particular, we show this nonstabilizerness measure has desirable properties such as subadditivity under composition and additivity under tensor product. Second, we prove that a given unitary's stabilizer nullity is a lower bound for the $T$ count, utilizing the above properties in gate synthesis. Third, we compare the state- and the unitary-stabilizer nullity, proving that the lower bounds for the $T$ count obtained by the unitary-stabilizer nullity are never less than the state-stabilizer nullity. Moreover, we show an explicit $n$-qubit unitary family of unitary-stabilizer nullity $2n$, which implies that its $T$ count is at least $2n$. This gives an example where the bounds derived by the unitary-stabilizer nullity  strictly outperform the state-stabilizer nullity by a factor of $2$. We finally showcase the advantages of  unitary-stabilizer nullity in estimating the $T$ count of quantum gates with interests.
\end{abstract}  
\date{\today}
\maketitle 

\section{Introduction}
A quantum circuit comprised of only Clifford gates confers no computational advantage since it can be efficiently simulated on a classical computer~\cite{gottesman1998heisenberg,aaronson2004improved}.
However, the addition of $T$ gates into Clifford circuits~\cite{dawson2006solovay,nielsen2002quantum} can achieve universal quantum computation.
The minimum number of $T$ gates used, or the \textit{$T$ count}, plays a role in quantum computation. \diff{The study of $T$ count has applications in quantum simulation, quantum error correction, and quantum circuit synthesis. Specifically, in} the theoretical study of classically simulating quantum computation, $T$ count serves as a quantifier of difficulty for such simulation in many algorithms~\cite{bravyi2016improved,bravyi2016trading,bravyi2019simulation}. It is shown that quantum circuits composed of Clifford gates and $t$ $T$ gates can be simulated on a classical computer in time, which is exponential in $t$~\cite{bravyi2016improved,bravyi2019simulation, seddon2021quantifying, kocia2020improved}.
\diff{Better estimation of  the $T$-count gives better analysis of the performance for those algorithms.}
On the other hand, in the framework of fault-tolerant quantum computation~\cite{zhou2000methodology,Bravyi_2005},  $T$ count dominates the cost of computation. This is because in many promising fault-tolerant architectures, the Clifford gates can be implemented transversally~\cite{gottesman1997stabilizer} and the required $T$ gate for universality can be performed via state injection~\cite{gottesman1999demonstrating,zhou2000methodology}. 
The key of this resolution is to perform
magic state distillation~\cite{Bravyi_2005,bravyi2012magic}, which is a resource-intensive subroutine with a costly overhead~\cite{bravyi2012magic,Campbell2017e} (see Refs.\cite{Hastings2018,Chamberland2018,Wang2018,OGorman2017,Krishna2018a,Fang2019,Regula2019,wang2019quantifying} for recent progress).

Quantum resource theories \cite{Chitambar2018} offer a powerful framework for studying quantum resources for information processing.
\diff{The resource theory of $T$ count has found applications in magic state distillation and non-Clifford gate synthesis~\cite{beverland2020lower}.}  
To better realize quantum computation, there are many \diff{circuit synthesis} algorithms \diff{that} aim to reduce the number of $T$ gates in the implementation of a given unitary operator ~\cite{selinger2015efficient,ross2016optimal,heyfron2018efficient,gosset2014algorithm,Mosca2020}, providing a better upper bound for its $T$ count. 

This work aims to study lower bounds for the $T$ count. \diff{A lower bound for the $T$ count  estimates the  minimum magic resources~\cite{beverland2020lower} required for quantum computation, and gives a lower bound for the runtime of the classical simulation algorithms~\cite{seddon2021quantifying,bravyi2016improved,bravyi2016trading} based on Clifford+$T$ gate sets.}
Specifically, for a given $n$-qubit unitary $U$, write $\lb{U}$ as the minimum number of $T$ gates among all \diff{implementations} of $U$ over the Clifford+$T$ gate set, i.e., the $T$ count equals $\lb{U}$. The task is to calculate a quantity $v(U)$ such that $v(U)\leq \lb{U}$. A good lower bound is a $v(U)$ which is as large as possible, and the time complexity for computing $v(U)$ is as small as possible. From the theoretical perspective, the explicit expressions of $v(U)$ for some $n$-qubit unitary families are useful and desirable. For the decision version of the lower-bound task, Gosset \emph{et al.}~\cite{gosset2014algorithm} gives an algorithm which for any integer $m$ and $n$-qubit unitary $U$, it decides whether the $T$ count for $U$ is less than $m$, with both time and space complexity of $O(2^{nm}poly(m,2^n))$. Although this algorithm can be used to calculate the exact value of the $T$ count, the time complexity is quite high. Even for small $m$ such as $m=n$, 
 the complexity will scale as $O(2^{cn^2})$ for some constant $c$. Other methods for deriving lower bounds for the $T$ count are mostly based on magic monotones. Those magic monotones are primarily designed for quantifying the non-Clifford resources for constructing quantum states, such as quantifying the number of the magic state $\ket{T}$ needed for constructing a given state $\ket{\psi}$. Those magic monotones could also naturally induce a lower bound for the $T$ count for a given unitary. Note that due to the adaptive operations allowed in the model of computation with magic states, which are not allowed in the model of computation with unitary, the lower bound of the $T$ count for $U$ induced by those magic monotones can be strictly smaller than the $T$ count for $U$~\cite{howard2017application,jones2013low,gosset2014algorithm}. 
This subtle distinction is clarified in Appendix \ref{appendix:relation}.

The relevant magic monotones can be summarized as follows. For a pure state $\ket{\psi}$,
\diff{the} stabilizer rank~\cite{bravyi2019simulation} is defined as the minimum number of nonzero coefficients when decomposing $\ket{\psi}$ onto a linear combination of stabilizer states. The stabilizer extent~\cite{bravyi2019simulation,Regula2017a} is defined as the square of the minimum $l_1$ norm of the above coefficients. It relates to the approximate stabilizer rank, that is the stabilizer rank of states, which approximate $\ket{\psi}$ within given precision. The robustness of  nonstabilizerness~\cite{howard2017application} is also related to the minimum $l_1$ norm but defined for mixed states. The stabilizer rank can also be generalized to mixed states in multiple ways~\cite{seddon2021quantifying}. The channel robustness and magic capacity~\cite{seddon2019quantifying} are closely related to the robustness of magic. They can be used for quantifying the non-Clifford resources for  multiqubit operations. All of the above magic monotones are more or less involving optimization over the set of stabilizer states, and thus are costly to compute
since the cardinality of stabilizer states increases as 
$2^{\Omega(n^2)}$~\cite{gross2006hudson}.
Moreover, since they involve optimization, it is hard to derive explicit expressions for those magic monotones for $n$-qubit unitaries or quantum states. Beyond the magic monotones based on optimization, Beverland \emph{et al.}~\cite{beverland2020lower} proposed the state-stabilizer nullity based on quantities related to the Pauli group. More specifically, let $s(\ket{\psi})$ be the size of the stabilizer of $\ket{\psi}$, where the stabilizer is the subgroup of the Pauli group for which $\ket{\psi}$ is a $+1$ eigenvector, the state-stabilizer nullity of $\ket{\psi}$ is defined as $v_s(\ket{\psi})=n-\log_2s(\ket{\psi})$. In general, the state-stabilizer nullity can be computed in time $O(2^{4n})$. 
Moreover, for some useful unitary families $\{U_n\}_n$, the explicit expressions of the state-stabilizer nullity of  $U_n\ket{+}^{\otimes n}$ can be easily computed. And this quantity can be used to lower bound the $T$ count for $U_n$.
 An example is showing that the $n$-qubit Toffoli gate has $T$ count $n$, 
 implying that implementing such a gate needs at least $n$ $\ket{T}$ states or $T$ gates. The state-stabilizer nullity is designed for quantifying non-Clifford resources for constructing states, although it can be used to derive lower bounds for the $T$ count for unitaries, a ``unitary-stabilizer nullity" is desirable for better quantifying the nonstabilizerness and estimate the $T$ count of multiqubit gates.
It is unclear and nontrivial to generalize the definition from states to unitaries. Beyond all the above, there are also magic monotones defined for qudit systems of odd prime dimension, which cannot be directly used for qubit systems~\cite{emerson2014resource,wang2019quantifying}.

In this work, we define the unitary-stabilizer nullity. Instead of counting the operators in the Pauli group,  which stabilize $\ket{\psi}$, for a target $U$  we count the $\pm 1$s in the Pauli transfer matrix. 
In other words, we calculate the size of the intersection between $U\cP_n U^\dagger$ and $\cP_n$, where  $\cP_n$ is the quotient Pauli group, that is   the Pauli-group modulo phases.
Denoting the size as $s(U)$, the unitary-stabilizer nullity is defined as $v(U)=2n- \log_2s(U)$.
Intuitively, $s(U)$ quantifies the similarity between $U$ and Clifford unitaries, where for Clifford unitary $U$, $s(U)=4^n$.
To use $v(\cdot)$ to derive a lower bound for the $T$ count, we prove two key properties: $v(\cdot)$ satisfies faithfullness, i.e. $v(U)\geq 0$ for any $U$, $v(U)=0$ if and only if $U$ is Clifford,  and subadditivity under composition,  i.e. $v(U\cdot V)\leq v(U)+v(V)$ or equivalently, $s(U)s(V)\leq 4^n s(U V)$. We further apply these properties to establish the lower bound for the $T$ count.

We then compare the lower bounds of the $T$ count obtained by the state-stabilizer nullity and the unitary-stabilizer nullity. We show that the bound given by the unitary-stabilizer nullity will never be smaller than the bound given by the state-stabilizer nullity, that is $v(U)\geq v_s(U|\phi\rangle),$ for any stabilizer state $\ket{\phi}$. 
We also prove that for any diagonal $U$, $v(U)= v_s(U|+\rangle^{\otimes n})$, which in turn shows that for any diagonal $U$, the maximum lower bound obtained by the state-stabilizer nullity, i.e.,  $v_s(U|\phi\rangle)$ for some stabilizer state $\ket{\phi}$, is achieved by $|\phi\rangle=|+\rangle^{\otimes n}$.
Moreover, we give an explicit $n$-qubit unitary family, which has unitary-stabilizer nullity $2n$. Since the state-stabilizer nullity is always less than $n$, such a unitary family illustrates that  the lower bound of the $T$ count given by the unitary-stabilizer nullity can be strictly better than the bound given by the state-stabilizer nullity by a factor of $2$. 

In general, the unitary-stabilizer nullity can be computed in time $O(2^{7n})$, which is less than $O(2^{(1/2+o(1))n^2})$ but still costly. However, as mentioned above, for certain $n$-qubit unitary families the explicit expressions of the unitary-stabilizer nullity can be easily computed. We believe this property will make it useful in the theoretical study. At the technical level, most of the proofs are based on connecting $s(U)$ to certain subgroups and applying the Lagrangian theorem~\cite{Dummit2004} and its variants, deriving relationships between the sizes of subgroups. Based on the group theory, an interesting fact is that unlike other magic monotones, both the state- and the unitary-stabilizer nullity are always integers. Note that since the unitary-stabilizer nullity is bounded by $2n$, the lower bound for $T$ count it gives is up to $2n$.

Finally, we connect the unitary-stabilizer nullity and the state-stabilizer nullity with auxiliary systems. Specifically, we prove that $v(U)$ equals the best bound one can obtain by the state-stabilizer nullity with auxiliary systems. That is for any $d$-qubit auxiliary system, for any $(d+n)$-qubit stabilizer state $\ket{\psi}$, we have $v(U)\geq v_s((I_{2^d}\otimes U \ket{\psi})$. In particular, we show that
$v(U)=v_s((I_{2^n}\otimes U)\ket{\phi})$ where $\ket{\phi}$ is the $2n$-qubit maximally entangled state.  We also give numerical results that \diff{compare} the $T$ count lower bound obtained by the state-stabilizer nullity, the unitary-stabilizer nullity, and the stabilizer extent. The results \diff{show} that for several gates of practical interest, the unitary-stabilizer unitary can derive \diff{a} better $T$ count lower bound.

\paragraph*{Outline and main contributions}
The outline and main contribution of this paper can be summarized as follows:
\begin{itemize}
    \item In section \ref{sec:prelim}, we set the notations and  introduce the unitary-stabilizer nullity.
    \item In section \ref{sec:basic}, we prove basic properties of the unitary-stabilizer nullity, that is  faithfulness, invariance under Clifford operations, and additivity under tensor product. Moreover, we show that the unitary-stabilizer nullity satisfies the subadditivity under composition. 
    \item  In section \ref{sec:lowerT}, we show that the unitary-stabilizer nullity can derive a lower bound of the $T$ count for a given unitary $U$. \diff{The method can be generalized to give lower bounds for arbitrary non-Clifford gates beyond  the $T$ gate.}
    We also compare the lower bound  obtained by the unitary- and the state-stabilizer nullity.  We prove that the bound by the unitary-stabilizer nullity  will never perform worse than the state-stabilizer nullity. Moreover, we  show that there is an explicit $n$-qubit unitary  family of unitary-stabilizer nullity $2n$, illustrating our bounds can strictly  outperform the state-stabilizer nullity. Besides, for diagonal unitaries, we prove  our  bounds coincide with the state-stabilizer nullity. Then  connect the unitary-stabilizer nullity with the state-stabilizer nullity with auxiliary systems. 
    \item In section~\ref{subsec: examples}, we investigate the $T$ counts of a plethora of quantum gates of practical interest and the improvements are evident for many cases.
    \item Finally in Section \ref{sec:open}, we \diff{conclude} and discuss open problems. 
\end{itemize}

\section{Preliminaries}\label{sec:prelim}
\subsection{Notations}\label{sec:notation}

In this part, we set the notations.  We use $i$ as the imaginary number. We use $\mathbb{C}$ as the complex field, $\mathbb{N}^+$ as the positive integers.
For a complex number $\lambda$, we use $\lambda^\dagger$ as its conjugate, $|\lambda|$ as its norm.  Let $A,B$ be two subgroups of a group $G$, we use $A\times B$ as the set $A\times B:=\{ab|a\in A, b\in B\}$. Note that when $G$ is Abelian, $A\times B$ is a subgroup of $G$. For any $n$-bit string $\bx\in\{0,1\}^n$, we use $\odot(\bx)$ as the product $x_1x_2...x_n$. We use $1^n$ as the $n$-bit string of all ones. Given two strings $\bx,\by\in\{0,1\}^n,$ we use $\bx\cdot \by$ as their inner product,  $\bx\oplus \by$ as their bitwise XOR operation.
 We abbreviate $x\in\{-1,+1\}$ as $x=\pm 1$. We use $I_d$ as the $d$-dimensional identity matrix. We abbreviate $I_d$ as $I$ when $d$ is clear in the context. We write the state $\frac{1}{\sqrt{2}}(\ket{0}+\ket{1})$ as $\ket{+}$.
For a matrix $U$,
we use $U^T$ as its transpose, $U^\dagger$ as its transpose conjugate. For integers $d$ and $k$, we use $d|k$ to denote $d$ is a factor of $k$. We use the $\delta_{\bu\bv}$ to denote the function that equals $1$ if and only if $\bu=\bv$ and $0$ otherwise.

We use $[n]$ as the set $\{0,1,...,n-1\}.$ We use $4^{[n]}$ as the set of length-$n$ vectors with alphabet $[4]$, i.e. $4^{[n]}:=\{0,1,2,3\}^n$. We use \diff{the} lowercase character to denote a number, and \diff{the} bold lowercase character to denote a vector, that is $u\in [4]$ and $\bu \in [4]^n$ respectively.
We represent Pauli operators as 
\begin{align}
    &\sigma_0=I_2=\begin{bmatrix}
    1&0\\
    0&1
    \end{bmatrix},
    \sigma_1=X=\begin{bmatrix}
    0&1\\
    1&0
    \end{bmatrix},\\
    &\sigma_2=Y=\begin{bmatrix}
    0&-i\\i&0
    \end{bmatrix},
    \sigma_3=Z=\begin{bmatrix}
    1&0\\0&-1
    \end{bmatrix}.
    \end{align}
    We write phase gate\,(S), $T$ gate\,(T) and Hadamard gate\,(H) as
\begin{align}
    S=\begin{bmatrix}
    1 &0\\0 &i
    \end{bmatrix},
    T=\begin{bmatrix}
    1 &0\\0 &e^{\frac{\pi i}{4}}
    \end{bmatrix},
    H=\frac{1}{\sqrt{2}}\begin{bmatrix}1
    &1\\1&-1\end{bmatrix}.
\end{align}
 The CNOT gate is the two-qubit control NOT gate, i.e. $CNOT|x_1\rangle|x_2\rangle=|x_1\rangle|x_1\oplus x_2\rangle$. The $C^{n-1}Z$ is the multicontrolled Z  gate, i.e. $C^{n-1}Z|x_1...x_n\rangle=(-1)^{x_1x_2\cdots x_n}|x_1...x_n\rangle$.

The one-qubit Pauli group is the group generated by the Pauli operators, i.e. $\hat{\cP}_1=\{ i^k \sigma_h | k,h\in [4]\}$. The $n$-qubit Pauli group is defined to be the $n$-fold tensor product of the one-qubit Pauli-group, i.e. $\hat{\cP}_n:=\hat{\cP}_1^{\otimes n}$. 
In the following, we are uninterested in the overall phases. We define the one-qubit Pauli-group modulo phases to be 
$\cP_1=\hat{\cP}_1/\{i^k I_2|k\in[4]\}=\{\sigma_h | h\in[4]\}$,
where the last equality means we use $\sigma_h$ as a representative of elements in $\{i^k \sigma_h|k\in[4]\}$.
Similarly, we define the $n$-qubit Pauli group modulo phases, or the $n$-qubit quotient Pauli group, to be $\cP_n :=\cP_1^{\otimes n}=\{\sigma_h |h\in [4] \}^{\otimes n}$.
Notice that $\cP_n$ is an Abelian group with respect to matrix-product modulo phases. We have $|\cP_1|=4$
and $|\cP_n|=4^n$ where $|\cdot|$ is the size of the group.
Elements in $\cP_n$ can be represented by a string $\bu\in 4^{[n]}$, that is 
$\sigma_\bu=\otimes_{k=1}^n \sigma_{\bu_k}$.

The $n$-qubit Clifford group $\cC_n$ is the set of $n$-qubit unitary $U$ which maps Pauli group to Pauli group, i.e. $U\hat{\cP}_n U^\dagger =\hat{\cP}_n$.
An $n$-qubit state $\ket{\psi}$ is called a stabilizer state if $\ket{\psi}=U\ket{0}^{\otimes n}$ for some $U\in \cC_n$.
One can \diff{verify} that $U\cP_n U^\dagger =\cP_n$, with respect to matrix-product modulo phases. In the following context, when involving the quotient group $\cP_n$, we always assume the operation is matrix product modulo phases.

 For a given unitary $U$, we define its $T$ count $t(U)$  as the minimum number of $T$ gates required when decomposing $U$ into a sequence of gates over  Clifford plus $T$ gate set, without auxiliary qubits and measurement.  Here we view Clifford gates, or Clifford unitaries, as free resources. More clarifications can be seen in the quantum computation with unitary part in Appendix \ref{appendix:relation}. If $U$ cannot be exactly synthesized by  Clifford plus $T$ gate set, we denote its $T$ count as $+\infty$.
\subsection{Pauli function}
In this part, we introduce the state Pauli function and the unitary Pauli function, which are the basic elements for defining the state- and the unitary-stabilizer nullity. 
Those functions are related to the values of the matrix representation of a linear map in the Pauli basis. We adopt the terminology from Ref.~\cite{beverland2020lower}.

We name them the Pauli functions  in order to unify the framework of the state- and the unitary-stabilizer nullity, and to better analyze their properties.

\begin{definition}[State Pauli function]
 For $\bu\in 4^{[n]}$,
the Pauli function of an $n$-qubit pure state $\ket{\psi}$ is defined as
$P_{\ket{\psi}}(\bu):= \tr \proj\psi \sigma_{\bu}$.
\end{definition}
 
\begin{definition}[Unitary Pauli function] For $\bu,\bv\in 4^{[n]}$, 
 the Pauli function of an $n$-qubit unitary $U$ is defined as
$P_{U}(\bu|\bv) =  \tr (\sigma_{\bu}U\sigma_{\bv}U^\dagger)/2^n$.

\end{definition}
The matrix whose $(\bu,\bv)$ th element is $P_U(\bu|\bv)$ is known to be the Pauli transfer matrix of $U$. The Pauli transfer matrix is the matrix representation of  the linear map $U(\cdot)U^\dagger$ in the Pauli basis. 
Equivalently, one can show that
$U\sigma_\bv U^{\dagger}= \sum_\bu P_U(\bu|\bv)\sigma_\bu.$
A useful fact is \diff{that} the Pauli transfer matrix of unitary is orthogonal. That is for unitary $U$, 
\begin{align}\label{eq:orthogonal}
	\!\!\!\sum_{\bu} P_U(\bu|\bv)P_U(\bu|\bv')= \tr(U\sigma_{\bv}U^\dagger U\sigma_{\bv'}U^\dagger)/2^n=\delta_{\bv\bv'}.
\end{align}

Both the state and the unitary Pauli functions take values in $[-1,1]$.
To analyze the properties of the unitary Pauli function, we connect $U$ to a subgroup of the quotient Pauli group.
\begin{definition}[Pauli subgroup associated with $U$]
Given an $n$-qubit unitary $U$, we define the Pauli subgroup associated with $U$ to be
$\cP_U := U\cP_n U^\dagger \cap \cP_n$.
\end{definition}
One can verify that $\cP_U $ is a subgroup of $\cP_n$. It is worth noting that $U\cP_U U^\dagger$ may \textbf{not} be equal to $\cP_U$, and $\cP_U$  is \textbf{not} $U\cP_nU^\dagger$.

\subsection{Stabilizer nullity}\label{sec:nullity}
In this part, we give  the definitions of the state-  and  the unitary-stabilizer nullity.

\begin{definition}[state-stabilizer nullity~\cite{beverland2020lower}]
\label{def:ssn}
For an $n$-qubit state $\ket \psi$, the stabilizer nullity of $\ket \psi$ is defined as
$v_s(\ket\psi):=n-\log_2 s(\ket{\psi})$,
where $s(\ket{\psi})$ is equal to the number of $\pm$1s in the Pauli function of $\ket{\psi}$ when varying $\bu\in 4^{[n]}$. 
\end{definition}

To better explore the power of stabilizer nullity in quantifying the $T$ count (or nonstabilizerness) of quantum dynamics, we introduce the unitary-stabilizer nullity as follows.
\begin{definition}[unitary-stabilizer nullity]\label{def:usn}
For an $n$-qubit unitary $U$, the stabilizer nullity of $U$ is defined as
\begin{align}
v(U) := 2n - \log_2 s(U).
\end{align}
where $s(U)$ is equal to  the number of $\pm 1$s in the Pauli function of $U$ when varying $\bu,
\bv\in 4^{[n]}$.
\end{definition}

Note that Beverland \emph{et al}~\cite{beverland2020lower} 
originally define the state-stabilizer nullity as $\nu(\ket{\psi})=n-\log_2{|\mathrm{Stab}\, \ket{\psi}}|$, where  $\mathrm{Stab}\, \ket{\psi}=\{P\in \hat{\cP}_n: P|\psi\rangle=|\psi\rangle\}$, 
that is $\mathrm{Stab}\, \ket{\psi}$ is the subgroup of the Pauli group $\hat{\cP}_n$ for which  $\ket{\psi}$ is a $+1$ eigenvector. 
Definition \ref{def:ssn} is equivalent to Beverland's definition, that is $v_s(\ket{\psi})=\nu(\ket{\psi})$, as illustrated in Appendix \ref{appendix:equ_state_stab}. 

For any stabilizer state $\ket{\psi}$, Beverland \emph{et al}~\cite{beverland2020lower} show that $v_s(\ket{\psi})=0$. For any Clifford unitary $U$, one can verify that  $v(U)=0$ by
Lemma \ref{lem:Clifford10}.

\section{Basic properties of  unitary-stabilizer nullity}\label{sec:basic}

\subsection{Basic properties of the unitary Pauli function}
We firstly analyze the basic properties of the unitary Pauli function. 
Roughly speaking, the unitary Pauli function measures the similarity between $U$ and the Clifford operators, where the similarity is quantified by  how well the map $U(\cdot)U^{\dagger}$  keeps the Pauli group invariant. To see this intuition, we prove some basic properties of   the unitary Pauli function. 

We give short proofs to Lemma \ref{lem:Clifford10}, Lemma \ref{lem:10s} and Lemma \ref{lem:congs} assuming the familiarity of quantum information. The more detailed proofs are put into Appendix \ref{appendix:proof_Lemma_12}.  

\begin{lemma}\label{lem:Clifford10}
If 
 $U$ is an $n$-qubit Clifford unitary, for $\bu,\bv\in 4^{[n]}$,
 \begin{itemize}
 \item for every $\bu$, there is only one $\bv$ such that $P_{U}(\bu|\bv)=\pm 1$, and for all $\bv'\neq \bv$, $P_{U}(\bu|\bv')=0$.
     \item for every $\bv$,    there is only one $\bu$ such that $P_{U}(\bu|\bv)=\pm 1$, and for all $\bu'\neq \bu$, $P_{U}(\bu'|\bv)=0$. 
 \end{itemize}
 \end{lemma}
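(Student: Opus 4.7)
The plan is to unpack the definition of Clifford and exploit the trace-orthogonality of Pauli operators.

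First I would observe that by the definition of a Clifford unitary (given right before the lemma: $U\hat{\cP}_n U^{\dagger} = \hat{\cP}_n$), for each $\bv \in 4^{[n]}$ there exists a phase $\alpha \in \{1,-1,i,-i\}$ and an index $\bw \in 4^{[n]}$ with $U\sigma_\bv U^{\dagger} = \alpha\sigma_\bw$. Since $\sigma_\bv$ is Hermitian and conjugation by a unitary preserves Hermiticity, the phase $\alpha$ must in fact be real, so $\alpha = \pm 1$ and $U\sigma_\bv U^{\dagger} = \pm \sigma_\bw$ for a uniquely determined $\bw = \bw(\bv)$.

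Next I would plug this into the definition of the Pauli function and use the well-known orthogonality relation $\tr(\sigma_\bu \sigma_\bw) = 2^n\,\delta_{\bu,\bw}$, which follows immediately from $\sigma_\bu \sigma_\bw = \pm \sigma_{\bu \oplus \bw}$ and the fact that a nontrivial tensor product of Pauli matrices is traceless. This gives
\begin{align}
P_U(\bu|\bv) \;=\; \tr(\sigma_\bu U \sigma_\bv U^{\dagger})/2^n \;=\; \pm\,\tr(\sigma_\bu\sigma_{\bw(\bv)})/2^n \;=\; \pm\,\delta_{\bu,\bw(\bv)},
\end{align}
which establishes the second bullet directly: for each fixed $\bv$, the only $\bu$ with $P_U(\bu|\bv)=\pm 1$ is $\bu = \bw(\bv)$, and $P_U(\bu'|\bv) = 0$ for all other $\bu'$.

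For the first bullet I would deduce a row-column symmetry: using cyclicity of the trace,
\begin{align}
P_U(\bu|\bv) \;=\; \tr(\sigma_\bu U \sigma_\bv U^{\dagger})/2^n \;=\; \tr(U^{\dagger}\sigma_\bu U \sigma_\bv)/2^n \;=\; P_{U^{\dagger}}(\bv|\bu).
\end{align}
Since the Clifford group is closed under taking adjoints, $U^{\dagger}$ is also Clifford, and the argument above applied to $U^{\dagger}$ yields that for every $\bu$ there is a unique $\bv = \bw^{-1}(\bu)$ with $P_{U^{\dagger}}(\bv|\bu) = \pm 1$, all other entries vanishing. Translating back via the identity above gives the first bullet. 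There is no real obstacle here; the only subtlety worth flagging is that since both bullets must hold the map $\bv \mapsto \bw(\bv)$ is necessarily a bijection on $4^{[n]}$, a fact that also falls out of this symmetry argument and will be useful when computing $s(U) = 4^n$ for Clifford $U$.
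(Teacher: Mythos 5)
Your proof is correct and follows essentially the same route as the paper's: both hinge on the Clifford property forcing $U\sigma_\bv U^\dagger=\pm\sigma_{\bw}$ (you get the sign from Hermiticity, the paper from preservation of the $\pm1$ eigenvalues) combined with trace orthogonality of Pauli strings, the only organizational difference being that you prove the second bullet directly and transfer it to the first via the adjoint symmetry $P_U(\bu|\bv)=P_{U^\dagger}(\bv|\bu)$, which is a clean way of making the paper's ``the second is similar'' precise, whereas the paper proves the first bullet directly and handles uniqueness by linear independence. One cosmetic nit: the product of two distinct Pauli strings is $i^k\sigma_{\bw}$ for some $k\in[4]$ rather than always $\pm\sigma_{\bu\oplus\bv}$ (e.g.\ $XY=iZ$), but tracelessness of the non-identity result is all you use, so nothing breaks.
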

 
 \begin{proof}
 We  give only proof to the first statement, the second is similar. Recall that
 \begin{align}
      P_U(\bu|\bv) &= \tr(\sigma_\bu U\sigma_\bv U^\dagger )/2^n \label{eq:12},\\
     \tr(\sigma_\bu \sigma_\bv) &= \left\{
     \begin{aligned}\label{eq:c01}
     & 2^n, & \text{ if $\bu=\bv$}\\
     & 0, & \text{ else.}
     \end{aligned}
     \right.
 \end{align}
 
 Since $U$ is Clifford, $U(\cdot)U^\dagger$ is an isomorphism of $\hat{\cP}_n$, i.e.  $U \hat{\cP}_n U^\dagger= \hat{\cP}_n$. In other words, $U(\cdot)U^\dagger$ is a permutation over the Pauli operators $\{\sigma_\bu\}_{\bu \in 4^{[n]}}$, up to a $\pm1$ sign. Thus by Eq.~\eqref{eq:12} and Eq.~\eqref{eq:c01}, we know 
 $P_U(\bu|\bv)=\pm 1$ if and only if $U\sigma_{\bv}U^\dagger =\pm \sigma_\bu$. Thus the lemma holds.	
 \end{proof}

Similar results also hold for general unitary $U$ beyond Clifford operators, which is proved in the following lemma.

\begin{lemma}\label{lem:10s}
For any $n$-qubit unitary $U$,
if there exist $\bu,\bv\in 4^{[n]}$ such that $P_U(\bu|\bv)=\pm 1$. Then 
\begin{align}
&U\sigma_\bv U^\dagger=\pm \sigma_\bu,\\
    &\forall \bu'\neq \bu, P_U(\bu'|\bv)=0.
\end{align}
\end{lemma}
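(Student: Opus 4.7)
The plan is to exploit the Parseval/Plancherel identity for the Pauli basis. Recall that the set $\{\sigma_{\bu'}\}_{\bu' \in 4^{[n]}}$ forms an orthogonal basis of the space of $2^n \times 2^n$ matrices with respect to the Hilbert--Schmidt inner product, satisfying $\tr(\sigma_{\bu'} \sigma_{\bu''}) = 2^n \delta_{\bu', \bu''}$. Hence, for the fixed operator $M \vcentcolon= U \sigma_\bv U^\dagger$, one can expand
\begin{equation}
U \sigma_\bv U^\dagger = \sum_{\bu' \in 4^{[n]}} P_U(\bu' | \bv)\, \sigma_{\bu'},
\end{equation}
with coefficients $P_U(\bu' | \bv) = \tr(\sigma_{\bu'} U \sigma_\bv U^\dagger)/2^n$ that are manifestly real, since both $\sigma_{\bu'}$ and $U \sigma_\bv U^\dagger$ are Hermitian.

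Next I would compute the Hilbert--Schmidt norm of $M$ in two ways. On the one hand, $M^\dagger M = U \sigma_\bv^2 U^\dagger = I$ because every Pauli string squares to the identity, giving $\tr(M^\dagger M) = 2^n$. On the other hand, inserting the Pauli expansion and using orthogonality yields
\begin{equation}
\tr(M^\dagger M) = \sum_{\bu', \bu''} P_U(\bu'|\bv) P_U(\bu''|\bv) \tr(\sigma_{\bu'} \sigma_{\bu''}) = 2^n \sum_{\bu' \in 4^{[n]}} P_U(\bu'|\bv)^2.
\end{equation}
Comparing the two expressions gives the key identity $\sum_{\bu'} P_U(\bu'|\bv)^2 = 1$.

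From here the conclusion is immediate: the assumption $P_U(\bu|\bv) = \pm 1$ forces the $\bu$-term alone to saturate the sum, so $P_U(\bu'|\bv) = 0$ for every $\bu' \neq \bu$. Substituting back into the Pauli expansion of $M$ yields $U \sigma_\bv U^\dagger = \pm \sigma_\bu$, which proves both claims simultaneously. I do not expect any genuine obstacle here; the only mild subtlety to double-check is the reality of the coefficients (so that squaring equals modulus-squaring) and the edge case $\bv = 0^n$, for which $\sigma_\bv = I$ and the identity $\sigma_\bv^2 = I$ still holds, so the same argument goes through uniformly.
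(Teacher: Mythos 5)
Your proof is correct, but it takes a genuinely different route from the paper's. The paper works with the spectral projections: writing $\sigma_\bu = 2\sigma_\bu^+ - I$ and $U\sigma_\bv U^\dagger = 2Q_\bv^+ - I$, it deduces from $P_U(\bu|\bv)=\pm 1$ that $\tr\,\sigma_\bu^+ Q_\bv^+ \in \{0, 2^{n-1}\}$, and then runs an eigenbasis argument to show that two rank-$2^{n-1}$ projections with this trace overlap must be equal or orthogonal, forcing $Q_\bv^+ = \sigma_\bu^{\pm}$ and hence $U\sigma_\bv U^\dagger = \pm\sigma_\bu$; the vanishing of the other coefficients then follows from Pauli orthogonality. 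You instead prove the Parseval identity $\sum_{\bu'} P_U(\bu'|\bv)^2 = 1$ by computing $\tr(M^\dagger M) = 2^n$ for $M = U\sigma_\bv U^\dagger$ two ways, so that a single coefficient equal to $\pm 1$ saturates the sum and kills all the others at once. Both arguments are sound (your reality check on the coefficients and the remark that $M^\dagger M = I$ because $\sigma_\bv^2 = I$ are exactly the points that need verifying), but yours is shorter and yields a strictly stronger byproduct: every column of the Pauli transfer matrix of a unitary is a unit vector in $\ell_2$, which quantifies how far the column is from being ``Clifford-like'' even when no entry equals $\pm 1$. The paper's projection argument, by contrast, gives more direct geometric information about the eigenspaces of $U\sigma_\bv U^\dagger$, which is in the same spirit as the stabilizer formalism used elsewhere in the paper, but it buys nothing extra for this particular lemma.
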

\begin{proof}
Recall that from equation (\ref{eq:orthogonal})
we know the Pauli transfer matrix of $U$, whose $(\bu,\bv)$-th element is $P_U(\bu,\bv)$, is orthogonal. 
Specially, choosing $\bv'=\bv$ in Eq.~\eqref{eq:orthogonal} we know 
\begin{align}
	\sum_{\bu} |P_U(\bu|\bv)|^2= 1. 
\end{align}
 Thus if $P_U(\bu|\bv)=\pm 1$,  we have $\forall \bu'\neq \bu, P_U(\bu'|\bv)=0.$ Since $U\sigma_\bv U^{\dagger}= \sum_\bu P_U(\bu|\bv)\sigma_\bu,$ thus we complete the proof.

\end{proof}

In the following, we give a simple but useful connection between the unitary-stabilizer nullity and the subgroup associated with $U$. Recall that $s(U)$ is the number of $\pm 1$s in the unitary Pauli function, as in Definition \ref{def:usn}.
In the following lemma, we prove that $s(U)$ equals to the size of $\cP_U$, that is the intersection of $U\cP_n U^\dagger$ and $\cP_n$. 
Note that  $\cP_n$  is the quotient Pauli group  rather than  the Pauli group $\hat{\cP_n}$.

\begin{lemma}\label{lem:congs}
For any $n$-qubit unitary $U$, $\cP_U$ is a subgroup of $\cP_n$. What is more,
\begin{align}\label{eq:psu}
    |\cP_U|=|\cP_{U^\dagger}|=s(U).
\end{align}
\end{lemma}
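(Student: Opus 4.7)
The plan is to establish three independent facts in sequence: (i) $\cP_U$ is a subgroup, (ii) $|\cP_U| = s(U)$, and (iii) $|\cP_U| = |\cP_{U^\dagger}|$. I would first recall that, modulo phases, $\cP_n$ is a finite abelian group under matrix multiplication, and the map $\sigma \mapsto U\sigma U^\dagger$ is a group homomorphism on the ambient group of unitaries modulo phase. Hence $U\cP_n U^\dagger$ is a subgroup of that ambient group, and $\cP_U = U\cP_n U^\dagger \cap \cP_n$ is an intersection of subgroups, therefore a subgroup of $\cP_n$. This part is essentially formal.

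Next, for the main identity $|\cP_U| = s(U)$, I would translate the defining condition of $\cP_U$ into a statement about the Pauli function. An element $\sigma_\bu \in \cP_n$ lies in $\cP_U$ iff there exists $\sigma_\bv \in \cP_n$ and a phase $i^k$ with $U\sigma_\bv U^\dagger = i^k \sigma_\bu$; Hermiticity of $U\sigma_\bv U^\dagger$ forces $i^k = \pm 1$, so $\sigma_\bu \in \cP_U$ iff $U\sigma_\bv U^\dagger = \pm \sigma_\bu$ for some $\bv$. By the definition of the Pauli function this is equivalent to the existence of $\bv$ with $P_U(\bu|\bv) = \pm 1$. Now invoke Lemma~\ref{lem:10s}: if $P_U(\bu|\bv) = \pm 1$ then $P_U(\bu'|\bv) = 0$ for all other $\bu'$, so each column $\bv$ of the Pauli transfer matrix contributes at most one $\pm 1$ entry. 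Conversely, if two distinct $\bv \neq \bv'$ both satisfied $U\sigma_\bv U^\dagger = \pm\sigma_\bu$ and $U\sigma_{\bv'}U^\dagger = \pm\sigma_\bu$, then conjugating gives $\sigma_\bv = \pm \sigma_{\bv'}$, contradicting linear independence of distinct Paulis. Hence the assignment $\bv \mapsto \bu$ (where defined) is a bijection from $\{\bv : \exists \bu,\ P_U(\bu|\bv)=\pm 1\}$ onto $\cP_U$. Counting both sides gives $s(U) = |\cP_U|$.

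Finally, for $|\cP_U| = |\cP_{U^\dagger}|$, I would apply conjugation by $U$ to $\cP_{U^\dagger}$:
\begin{align}
U\,\cP_{U^\dagger}\,U^\dagger
= U\bigl(U^\dagger \cP_n U \cap \cP_n\bigr)U^\dagger
= \cP_n \cap U\cP_n U^\dagger
= \cP_U.
\end{align}
Since conjugation by $U$ is a bijection on the group of unitaries modulo phases, it preserves cardinality, yielding $|\cP_{U^\dagger}| = |\cP_U|$.

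The main obstacle is the bijection in step (ii): one must be careful to use Lemma~\ref{lem:10s} in one direction (uniqueness of the row $\bu$ given a column $\bv$) and the linear independence of Paulis in the other direction (uniqueness of $\bv$ given $\bu$). Everything else is bookkeeping about subgroups and conjugation.
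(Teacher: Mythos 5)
Your proposal is correct and follows essentially the same route as the paper's proof: the subgroup claim via intersection of subgroups, the identity $|\cP_U| = |\cP_{U^\dagger}|$ via conjugation by $U$, and the count $|\cP_U| = s(U)$ via the same bijection between elements of $\cP_U$ and the $\pm 1$ entries of the Pauli transfer matrix, invoking Lemma~\ref{lem:10s} in one direction and linear independence (equivalently, trace orthogonality) of distinct Paulis in the other. The only cosmetic difference is that you rule out the phases $\pm i$ by Hermiticity where the paper argues via preservation of the $\pm 1$ eigenvalues under conjugation; both are valid.
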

\begin{proof}
By Lemma \ref{lem:10s}, we know every $\pm1$ in the Pauli transfer matrix will imply a distinct element in $\cP_U$. That is if $P_U(\bu_{\bv}|\bv)=\pm 1$, then $\sigma_{\bu_{\bv}}\in \cP_U$.  Besides, for different $\bv$, such $\bu_{\bv}$ are different, this comes from the fact that the Pauli transfer matrix is orthogonal, thus any row cannot have two $\pm1$s. Thus
$|\cP_U|\geq |s(U)|$. On the other hand, for every element in $\cP_U$, that is a $\sigma_\bu$ such that $U\sigma_\bv U^\dagger =\pm\sigma_\bu$ for some $\bv$, we would have
     $|P_U(\bu|\bv)|=1$, thus $|\cP_U|\leq |s(U)|$.
Thus $|P_U|=s(U)$. Besides, we have $\tr(\sigma_\bu U \sigma_\bv U^\dagger)=\tr(\sigma_\bv U ^\dagger\sigma_\bu U)$,
thus the Pauli transfer matrix of $U^\dagger$ is the transpose of the Pauli transfer matrix of $U$. Thus
$|\cP_{U^\dagger}|=|s(U)|=|\cP_{U}|$. 
\end{proof}

An interesting corollary is that the unitary-stabilizer nullity is  always an integer, although we do not utilize this property in the following sections.
\begin{corollary}
\label{lem:vinteger}
For any $n$-qubit unitary $U$, $v(U)$ must be an integer, or equivalently, $s(U)=2^k$ for some non-negative integer $k$.
\end{corollary}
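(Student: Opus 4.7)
The plan is to invoke Lemma \ref{lem:congs}, which identifies $s(U)$ with $|\cP_U|$, and then exploit the group-theoretic structure of the ambient quotient Pauli group $\cP_n$. Specifically, Lemma \ref{lem:congs} establishes that $\cP_U$ is a subgroup of $\cP_n$ with $|\cP_U| = s(U)$, so the task reduces to showing that every subgroup of $\cP_n$ has order a power of $2$.

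First I would note that $\cP_n = \cP_1^{\otimes n}$ has cardinality $|\cP_n| = 4^n = 2^{2n}$, which is itself a power of $2$. By Lagrange's theorem, the order of any subgroup of a finite group divides the order of the group. Applying this to the subgroup $\cP_U \leq \cP_n$, we conclude that $|\cP_U|$ divides $2^{2n}$, and hence $|\cP_U| = 2^k$ for some integer $k$ with $0 \leq k \leq 2n$. Combining this with Lemma \ref{lem:congs} yields $s(U) = 2^k$.

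Finally, substituting into the definition of the unitary stabilizer nullity gives
\begin{equation}
v(U) = 2n - \log_2 s(U) = 2n - k,
\end{equation}
which is an integer in $\{0, 1, \dots, 2n\}$. There is really no obstacle here beyond correctly citing Lemma \ref{lem:congs} and Lagrange's theorem; the result is essentially an immediate consequence of the fact that $\cP_n$ is a $2$-group. The proof is therefore expected to be very short.
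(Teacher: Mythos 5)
Your proof is correct and follows essentially the same route as the paper: both identify $s(U)$ with $|\cP_U|$ via Lemma \ref{lem:congs} and then apply Lagrange's theorem to the subgroup $\cP_U$ of the $2$-group $\cP_n$ to conclude $s(U)$ divides $4^n$. No differences worth noting.
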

By Lemma \ref{lem:congs}, we know $s(U)=|\cP_U|$.
Since $\cP_U$ is a subgroup of $\cP_n$, where $|\cP_n|=4^n$,  by group theory, we know that $s(U)|4^n$.
Thus $v(U)=2n-\log_2 s(U)$
is always an integer.

\subsection{Basic properties of the unitary-stabilizer nullity }
 
In this section, we  prove the basic properties of the unitary-stabilizer nullity, that is the faithfulness, invariance under Clifford operations,  
additivity under tensor product and subadditivity under composition.

\begin{theorem}[Faithfulness]\label{prop:faith}
Let  $U$ be an $n$-qubit unitary. Then $v(U)\ge 0$. Furthermore,  $v(U)= 0$ if and only if $U$ is a Clifford unitary. Or equivalently, $s(U)\leq 4^n$ and $s(U)=4^n$ if and only if $U$ is a Clifford unitary.
\end{theorem}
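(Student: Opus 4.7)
The plan is to reduce everything to the subgroup $\cP_U = U\cP_n U^\dagger \cap \cP_n$ identified in Lemma~\ref{lem:congs}, and then apply elementary group-theoretic counting together with a short argument promoting equality ``modulo phases'' to equality in $\hat\cP_n$.

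First I would establish the inequality $v(U) \geq 0$. By Lemma~\ref{lem:congs}, $s(U) = |\cP_U|$, and $\cP_U$ is a subgroup of $\cP_n$. Since $|\cP_n| = 4^n$, Lagrange's theorem gives $|\cP_U| \mid 4^n$, hence in particular $s(U) \leq 4^n$, i.e., $v(U) = 2n - \log_2 s(U) \geq 0$. This immediately settles the first half of the statement.

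Next I would handle the ``if'' direction of the equality case. If $U$ is a Clifford unitary, then Lemma~\ref{lem:Clifford10} guarantees that for every $\bu \in 4^{[n]}$ there exists a (unique) $\bv$ with $P_U(\bu|\bv) = \pm 1$, yielding exactly $4^n$ pairs with $P_U(\bu|\bv) = \pm 1$. Hence $s(U) = 4^n$ and $v(U) = 0$.

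The ``only if'' direction is the main step and the place where one must be careful about phases. Assume $s(U) = 4^n$, so $|\cP_U| = 4^n = |\cP_n|$; combined with $\cP_U \subseteq \cP_n$ this forces $\cP_U = \cP_n$, and in particular $\cP_n \subseteq U\cP_n U^\dagger$. Since conjugation by $U$ is a bijection on the quotient group, $|U\cP_n U^\dagger| = 4^n$, and the inclusion must be an equality $U\cP_n U^\dagger = \cP_n$. Thus for each $\bv$ there exist $\bu$ and a phase $\alpha_{\bv}\in \mathbb C$ with $U\sigma_\bv U^\dagger = \alpha_\bv \sigma_\bu$. The operator on the left is Hermitian with spectrum $\{\pm 1\}$, so $\alpha_\bv$ is real with $|\alpha_\bv|=1$, i.e.\ $\alpha_\bv = \pm 1$. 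Therefore $U\sigma_\bv U^\dagger \in \hat\cP_n$ for every $\bv$, which (together with $U(i^k I)U^\dagger = i^k I$) gives $U\hat\cP_n U^\dagger \subseteq \hat\cP_n$, and by cardinality $U\hat\cP_n U^\dagger = \hat\cP_n$. By definition $U$ is Clifford, completing the proof.

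The main obstacle is precisely the last step: the intersection $\cP_U$ lives in the phase-quotient group $\cP_n$, but the definition of Clifford in this paper is phrased for $\hat\cP_n$. The Hermiticity/spectrum argument above is exactly what is needed to lift an equality in $\cP_n$ to an equality in $\hat\cP_n$; I expect any cleaner proof to pass through essentially the same observation that a Pauli-conjugate of a Pauli, being Hermitian with eigenvalues $\pm 1$, can only carry a $\pm 1$ scalar factor.
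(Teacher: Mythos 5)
Your proof is correct, but it takes a slightly different route from the paper's. For $s(U)\leq 4^n$ the paper argues directly from Lemma~\ref{lem:10s}: each $\pm1$ entry of the Pauli transfer matrix forces the other $4^n-1$ entries in its column to vanish, so there can be at most $4^n$ of them; you instead invoke Lemma~\ref{lem:congs} to identify $s(U)=|\cP_U|$ and apply Lagrange's theorem. For the ``$s(U)=4^n\Rightarrow U$ Clifford'' direction the paper again quotes Lemma~\ref{lem:10s} to conclude that every $\sigma_\bv$ is mapped to a signed Pauli, whereas you pass through the subgroup equality $\cP_U=\cP_n$, upgrade it to $U\cP_n U^\dagger=\cP_n$ by injectivity of conjugation, and then explicitly lift from the quotient $\cP_n$ back to $\hat{\cP}_n$ via the Hermiticity/spectrum argument that pins the phase to $\pm1$. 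The two routes are close in substance --- Lemma~\ref{lem:congs} is itself proved from Lemma~\ref{lem:10s}, and the paper performs the same phase-pinning observation inside Lemma~\ref{lem:Clifford10} --- but your version has the merit of making the quotient-versus-$\hat{\cP}_n$ issue fully explicit in the equality case (the paper's one-line ``thus $U$ is a Clifford unitary'' leaves that lift implicit), at the mild cost of routing the inequality through group theory where a direct count suffices. The ``if'' direction via Lemma~\ref{lem:Clifford10} is identical in both.
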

\begin{proof}
For any $n$-qubit unitary $U$, we firstly prove $v(U)\ge 0$, or equivalently, $s(U)\leq 4^n$. By Lemma \ref{lem:10s}, if there exist $\bu,\bv$ such that $P_U(\bu|\bv)=1$ then for any $\bu'\neq \bu, \cP_U(\bu'|\bv)=0$. Thus every $\pm 1$ in the Pauli function must come together with $4^n-1$ number of  $0s$. Thus $s(U)\leq 4^n$.

If $s(U)=4^n$, by  Lemma \ref{lem:10s}, for every Pauli operator $\sigma_\bv$, $U \sigma_\bv U^\dagger$ is a Pauli operator, thus $U$ is a Clifford unitary. On the other hand, if $U$ is a Clifford operator, by Lemma \ref{lem:Clifford10}, it is easy to verify $s(U)=4^n.$
\end{proof}

\begin{theorem}[Invariant under Clifford operation]\label{thm:invClifford}
For  an $n$-qubit unitary $U$ and an $n$-qubit Clifford unitary $V$, 
\begin{align}
v(V U) = v(U),\quad s(V U) = s(U).
\end{align}
Similarly, it also holds that $v(U V) = v(U),s(U V) = s(U)$. 
\end{theorem}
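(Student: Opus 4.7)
The plan is to reduce the invariance claim to a manipulation on the subgroups $\cP_U$ and use Lemma~\ref{lem:congs}, which identifies $s(U)$ with $|\cP_U|$. So it suffices to show $|\cP_{VU}|=|\cP_U|$ and $|\cP_{UV}|=|\cP_U|$ whenever $V$ is an $n$-qubit Clifford unitary.

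The key fact I would invoke is that Cliffords act on the quotient Pauli group by conjugation: since $V\hat{\cP}_n V^\dagger=\hat{\cP}_n$, passing to the quotient modulo phases gives $V\cP_n V^\dagger=\cP_n$ (as subsets of the unitary group, modulo phases). First I would handle the left-multiplication case. Write
\begin{align}
\cP_{VU}&=(VU)\,\cP_n\,(VU)^\dagger\cap\cP_n
      =V\bigl(U\cP_n U^\dagger\bigr)V^\dagger\cap\cP_n.
\end{align}
Using $\cP_n=V\cP_n V^\dagger$, I would rewrite the intersection as
\begin{align}
\cP_{VU}=V\bigl(U\cP_n U^\dagger\bigr)V^\dagger\cap V\cP_n V^\dagger
       =V\bigl(U\cP_n U^\dagger\cap\cP_n\bigr)V^\dagger
       =V\cP_U V^\dagger,
\end{align}
where the middle equality uses that conjugation by the invertible operator $V$ commutes with set intersection. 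Since conjugation by $V$ is a bijection on matrices, $|V\cP_U V^\dagger|=|\cP_U|$, hence $s(VU)=s(U)$ and $v(VU)=v(U)$.

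For the right-multiplication case, the argument is even cleaner. Since $V\cP_n V^\dagger=\cP_n$, I would compute
\begin{align}
\cP_{UV}=(UV)\cP_n(UV)^\dagger\cap\cP_n
       =U\bigl(V\cP_n V^\dagger\bigr)U^\dagger\cap\cP_n
       =U\cP_n U^\dagger\cap\cP_n=\cP_U,
\end{align}
giving $s(UV)=s(U)$ and $v(UV)=v(U)$ directly. The only subtle point, and the place I would be most careful, is verifying the identity $V\cP_n V^\dagger=\cP_n$ at the level of the \emph{quotient} group $\cP_n$: the Clifford property $V\hat{\cP}_n V^\dagger=\hat{\cP}_n$ holds in $\hat{\cP}_n$, but since conjugation preserves the center $\{i^k I\}$, it descends to an automorphism of the quotient $\cP_n=\hat{\cP}_n/\{i^k I\}$, which is exactly what the computations above require. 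No heavy machinery is needed beyond Lemma~\ref{lem:congs} and this quotient-level observation.
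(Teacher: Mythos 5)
Your proof is correct, but it takes a genuinely different route from the paper's. The paper works directly at the level of the Pauli transfer matrix: it writes $P_{VU}(\bu|\bv)=\tr\!\left(V^\dagger\sigma_\bu V\,U\sigma_\bv U^\dagger\right)/2^n$ and uses the fact that $V^\dagger\sigma_\bu V=p_\bu\,\sigma_{\bu'}$ with $p_\bu=\pm1$ to conclude that the multiset of Pauli-function values is only permuted and sign-flipped, so the count of $\pm1$ entries is unchanged; the right-multiplication case is dispatched with a ``similarly.'' You instead route everything through Lemma~\ref{lem:congs} ($s(U)=|\cP_U|$) and show the stronger structural facts $\cP_{VU}=V\cP_U V^\dagger$ and $\cP_{UV}=\cP_U$, from which the cardinality (and hence $s$ and $v$) invariance is immediate. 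Your approach buys a cleaner and fully symmetric treatment of both sides --- the right-multiplication case becomes an identity of subgroups, not just of counts --- and it isolates the one genuinely delicate point, namely that the Clifford condition $V\hat{\cP}_nV^\dagger=\hat{\cP}_n$ descends to $V\cP_nV^\dagger=\cP_n$ on the quotient, which you justify correctly via preservation of the center. The paper's argument is more elementary in that it does not depend on Lemma~\ref{lem:congs} at all, but it has to track signs explicitly and leaves the second case to the reader. Both are complete; yours is arguably the more informative proof since it exhibits the group isomorphism underlying the invariance.
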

\begin{proof}
Recall that the unitary Pauli function
\begin{align}
    P_{V U}(\bu|\bv) &=\tr \sigma_\bu VU \sigma_\bv U^\dagger V^\dagger/2^n\nonumber\\
     &=\tr   V^\dagger \sigma_\bu VU \sigma_\bv U^\dagger/2^n.
\end{align}
Since $V$ is Clifford, then $V^\dagger (\cdot) V$ is an isomorphism over the Pauli group,  i.e. $V^\dagger \hat{\cP}_n V=\hat{\cP}_n$. Further, since $V^\dagger \sigma_\bu V$ does not change the eigenvalue of $\sigma_\bu$, where is $\pm 1$. Thus 
\begin{align}
    V^\dagger \{\sigma_\bu|\bu \in 4^{[n]}\} V=\{p_\bu \sigma_\bu|\bu \in 4^{[n]}, p_\bu=\pm 1\}.
\end{align}
In other words, the set of unitary Pauli functions satisfies
\begin{align}
    \{P_{V U}(\bu|\bv)\}=\{p_\bu P_{U}(\bu|\bv)\, | \, p_\bu=\pm 1\}.
\end{align}
Thus $s(V U)=s(U)$, thus $v(V  U) = s(U)$. Similarly, it  holds that $v(V U) = v(U), s(V U) = s(U)$.
\end{proof}

\begin{theorem}[Additivity under tensor product]\label{thm:addten}
Given $n$-qubit unitary $U$ and $m$-qubit unitary $V$, the following holds
\begin{align}
    v(U\otimes V)=v(U)+v(V).
\end{align}
\end{theorem}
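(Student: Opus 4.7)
The plan is to reduce the claim to a purely group-theoretic identity for the associated subgroups. Namely, using Lemma \ref{lem:congs} to replace each $s(\cdot)$ by $|\cP_{\cdot}|$, the statement $v(U\otimes V)=v(U)+v(V)$ is equivalent to
\begin{align*}
|\cP_{U\otimes V}| = |\cP_U|\cdot |\cP_V|,
\end{align*}
since then $\log_2 s(U\otimes V)=\log_2 s(U)+\log_2 s(V)$ and the leading $2(n+m)=2n+2m$ splits correctly. So the entire proof reduces to establishing this counting identity.

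I would set up the natural map
\begin{align*}
\Phi\colon \cP_U\times\cP_V \;\longrightarrow\; \cP_{U\otimes V},\qquad (\sigma,\tau)\;\longmapsto\; \sigma\otimes\tau,
\end{align*}
working throughout modulo overall phases. The first easy step is well-definedness: if $\sigma=U\sigma' U^\dagger$ (up to sign) with $\sigma\in\cP_n$ and $\tau=V\tau' V^\dagger$ (up to sign) with $\tau\in\cP_m$, then $\sigma\otimes\tau\in\cP_{n+m}$ and
$\sigma\otimes\tau = (U\otimes V)(\sigma'\otimes\tau')(U^\dagger\otimes V^\dagger)$ up to sign, so $\sigma\otimes\tau\in\cP_{U\otimes V}$. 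Injectivity is immediate because a tensor product of nonzero Paulis factors uniquely into its tensor factors (modulo a single overall scalar), and the $\cP_n$, $\cP_m$ representatives are chosen with phases quotiented out.

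The real content, and the step I expect to be the main obstacle, is surjectivity: given $\alpha\in\cP_{U\otimes V}$, I need to show $\alpha$ has the form $\sigma\otimes\tau$ with $\sigma\in\cP_U$ and $\tau\in\cP_V$. Since $\alpha\in\cP_{n+m}=\cP_n\otimes\cP_m$, I can write $\alpha=\sigma_\bu\otimes\sigma_\bw$ at the level of representatives. By definition of $\cP_{U\otimes V}$, there exist $\bu',\bw'$ and a phase $\lambda$ such that
\begin{align*}
(U\sigma_{\bu'}U^\dagger)\otimes (V\sigma_{\bw'}V^\dagger)=\lambda\,\sigma_\bu\otimes \sigma_\bw.
\end{align*}
Using the elementary fact that if a tensor product $A\otimes B$ equals a nonzero tensor product $C\otimes D$ then $A=\lambda_1 C$ and $B=\lambda_2 D$ with $\lambda_1\lambda_2=\lambda$, I get $U\sigma_{\bu'}U^\dagger=\lambda_1\sigma_\bu$ and $V\sigma_{\bw'}V^\dagger=\lambda_2\sigma_\bw$. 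Finally, since $U\sigma_{\bu'}U^\dagger$ is unitarily equivalent to a Pauli, its spectrum is $\{\pm1\}$, forcing $\lambda_1=\pm1$, and similarly $\lambda_2=\pm1$. Therefore $\sigma_\bu\in U\cP_n U^\dagger\cap\cP_n=\cP_U$ and $\sigma_\bw\in\cP_V$, so $\alpha$ lies in the image of $\Phi$.

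Combining well-definedness, injectivity, and surjectivity gives $|\cP_{U\otimes V}|=|\cP_U|\cdot|\cP_V|$, and substituting back into the definition of $v$ yields $v(U\otimes V)=v(U)+v(V)$. The spectrum argument in the last step is the only nontrivial piece; everything else is a bookkeeping of representatives modulo phases.
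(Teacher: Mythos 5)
Your proof is correct, but it takes a different route from the paper's. The paper works directly with the Pauli transfer function: it computes
\begin{align}
P_{U\otimes V}(\bu|\bv)=P_U(\bu^{(1)}|\bv^{(1)})\cdot P_V(\bu^{(2)}|\bv^{(2)})
\end{align}
by factorizing the trace over the two tensor factors, and then reads off $s(U\otimes V)=s(U)s(V)$ from the (unstated but easy) observation that a product of two numbers in $[-1,1]$ equals $\pm 1$ iff both factors do. You instead pass through Lemma \ref{lem:congs} and build an explicit bijection $\cP_U\times\cP_V\to\cP_{U\otimes V}$; the surjectivity step, with its unique tensor factorization and spectrum argument pinning the phases to $\pm 1$, is your analogue of the paper's ``both factors must be $\pm 1$'' observation. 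The paper's route is shorter and stays entirely at the level of scalar function values; yours is slightly heavier but yields the stronger structural statement that $\cP_{U\otimes V}$ is isomorphic to $\cP_U\times\cP_V$ as a group, not merely that the cardinalities multiply. Both arguments are sound and both ultimately rest on the same multiplicativity of tensor products; your spectrum argument correctly handles the one subtle point (that the phase relating $U\sigma_{\bu'}U^\dagger$ to $\sigma_\bu$ is forced to be a sign), mirroring the reasoning already present in Lemma \ref{lem:10s} and Lemma \ref{lem:congs}.
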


\vspace{0.2em}

\begin{proof}
For simplicity, given $\bu\in 4^{[n+m]}$, we write $\bu^{(1)}$ as the first $n$ bits of $\bu$, i.e. $\bu^{(1)}=\bu_1\bu_2...\bu_n$. We write $\bu^{(2)}$ as the last $m$ bits of $\bu$,  i.e. $\bu^{(2)}=\bu_{n+1}\bu_{n+2}...\bu_{n+m}$. 
It suffices to notice that 

\begin{align}
  &P_{U\otimes V}(\bu|\bv) \nonumber\\
  &=\tr \sigma_{\bu^{(1)}}\otimes \sigma_{\bu^{(2)}} \left(U\otimes V\right)  \sigma_{\bv^{(1)}}\otimes \sigma_{\bv^{(2)}} \left( U^\dagger \otimes V^\dagger \right)  / 2^{(n+m)} \nonumber\\
  &= \tr \sigma_{\bu^{(1)}} U \sigma_{\bv^{(1)}} U^\dagger  / 2^n \cdot \tr \sigma_{\bu^{(2)}}V   \sigma_{\bv^{(2)}}  V^\dagger   / 2^m \nonumber\\
    &=P_U(\bu^{(1)}|\bv^{(1)})\cdot P_V(\bu^{(2)}|\bv^{(2)}).
\end{align}
Thus we have
$s(U\otimes V)=s(U)s(V)$,
$2(n+m)-\log_2s(U\otimes V)=(2n-\log_2s(U))+(2m-\log_2s(V))$,
and then $v(U\otimes V)=v(U)+v(V)$.
\end{proof}

 \begin{lemma}\label{lem:leqsuv}
 For any $n$-qubit unitary $U$ and $V$,
 \begin{align}
     |\cP_{U^\dagger}\cap \cP_V|\leq s(UV).
 \end{align}
 \end{lemma}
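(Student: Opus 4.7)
The plan is to exhibit an injective map from $\cP_{U^\dagger}\cap \cP_V$ into $\cP_{UV}$ given by conjugation by $U$, and then invoke Lemma \ref{lem:congs} which identifies $s(UV) = |\cP_{UV}|$. The idea is that membership in $\cP_{U^\dagger}$ tells us conjugating by $U$ preserves the Pauli group (mod phase), while membership in $\cP_V$ tells us the element already sits in $V\cP_n V^\dagger$; conjugating by $U$ then lands us in $UV\cP_n V^\dagger U^\dagger \cap \cP_n$, which is exactly $\cP_{UV}$.

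More concretely, I would first unpack the definitions: $\cP_{U^\dagger} = U^\dagger \cP_n U \cap \cP_n$ and $\cP_V = V\cP_n V^\dagger \cap \cP_n$. Fix any $\sigma_\bu \in \cP_{U^\dagger}\cap \cP_V$. The first containment tells us $U\sigma_\bu U^\dagger \in \cP_n$ (as elements of the quotient group). The second containment tells us there exists $\sigma_\bw \in \cP_n$ with $\sigma_\bu = V \sigma_\bw V^\dagger$, so
\begin{align}
U\sigma_\bu U^\dagger = UV\sigma_\bw V^\dagger U^\dagger \in UV\cP_n V^\dagger U^\dagger.
\end{align}
Combining these two facts, $U\sigma_\bu U^\dagger \in UV\cP_n V^\dagger U^\dagger \cap \cP_n = \cP_{UV}$. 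So the assignment $f:\sigma_\bu \mapsto U\sigma_\bu U^\dagger$ is a well-defined map $\cP_{U^\dagger}\cap \cP_V \to \cP_{UV}$.

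Injectivity of $f$ is automatic, since conjugation by the unitary $U$ is an injective operation on the quotient Pauli group (it maps distinct representatives to distinct representatives mod phase). Therefore $|\cP_{U^\dagger}\cap \cP_V| \leq |\cP_{UV}|$, and the statement follows by applying Lemma \ref{lem:congs} to conclude $|\cP_{UV}| = s(UV)$.

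I do not anticipate any significant obstacle: the argument is a direct manipulation of the defining intersections, and the only subtlety is keeping track of the fact that equalities in $\cP_n$ are mod phase, which is handled uniformly by the quotient-group setup already in place. The key conceptual step is recognizing that $\cP_{U^\dagger}\cap \cP_V$ is precisely the set of Pauli operators for which both $U$-conjugation and $V^\dagger$-conjugation preserve the Pauli group, which is exactly what is needed for the image under $U(\cdot)U^\dagger$ to sit inside $\cP_{UV}$.
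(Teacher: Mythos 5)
Your proof is correct and is essentially the paper's own argument: the paper likewise rewrites $\cP_{UV}=U\left(V\cP_n V^\dagger\cap U^\dagger\cP_n U\right)U^\dagger$, uses injectivity of conjugation by $U$ to pass to cardinalities, and intersects with $\cP_n$ to land on $\cP_{U^\dagger}\cap\cP_V$, then applies Lemma \ref{lem:congs}. Your phrasing via an explicit injective map $\sigma_\bu\mapsto U\sigma_\bu U^\dagger$ is just a restatement of the same steps.
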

Recall that 
\begin{align}
    \cP_{UV} &= UV \cP_n V^\dagger U^\dagger \cap \cP_n\nonumber\\  
    &= U\left(V\cP_n V^\dagger \cap U^\dagger \cP_n U \right) U^\dagger.
\end{align}
Noting that $U(\cdot)U^\dagger$ is an injection,  we have 
\begin{align}
    |\cP_{UV}| &=|V\cP_n V^\dagger \cap U^\dagger \cP_n U| \nonumber\\
    &\geq |V\cP_n V^\dagger \cap U^\dagger \cP_n U\cap \cP_n|\nonumber\\
    &\geq |\cP_V\cap \cP_{U^\dagger}|.
\end{align}
By Lemma \ref{lem:congs} we know $|\cP_{UV}|=s(UV)$, we conclude
$|\cP_{U^\dagger}\cap \cP_V|\leq s(UV)$.

The following lemma is a variation of Lagrange's theorem in group theory. It connects the sizes between subgroups and their intersections. A detailed proof is provided in Appendix.~\ref{appendix:proof_Lemma_12}.

\begin{lemma}\label{lem:pj}
Suppose $A,B$ are subgroups of $H$, then
$|A\times B|=\frac{|A|\cdot|B|}{|A\cap B|}$.
Since $A\times B\subseteq H$ we have
$|A|\cdot |B|\leq |H|\cdot |A\cap B|$. 
\end{lemma}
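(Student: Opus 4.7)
The plan is to prove the product formula via a fiber-counting argument on the natural multiplication map, and then derive the inequality directly from the containment $A\times B\subseteq H$.

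First I would introduce the surjection $\phi\colon A\sqcup B\to A\times B$ (where $A\sqcup B$ denotes the Cartesian product) defined by $\phi(a,b)=ab$. By the paper's definition of $A\times B$ as $\{ab:a\in A,b\in B\}$, this map is surjective onto $A\times B$. Thus $|A|\cdot|B|=|A\sqcup B|=\sum_{c\in A\times B}|\phi^{-1}(c)|$, and the task reduces to showing that every fiber has exactly $|A\cap B|$ elements.

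Next I would analyze a fiber $\phi^{-1}(c)$ for an arbitrary $c\in A\times B$. Fix one representation $c=a_0 b_0$ with $a_0\in A$, $b_0\in B$. For any $d\in A\cap B$, the pair $(a_0 d,\,d^{-1}b_0)$ also lies in $\phi^{-1}(c)$ since $a_0 d\in A$, $d^{-1}b_0\in B$, and their product is $a_0 b_0=c$. Conversely, if $(a,b)\in\phi^{-1}(c)$, then $ab=a_0 b_0$ gives $a_0^{-1}a=b_0 b^{-1}$; the left side lies in $A$ and the right side in $B$, so both equal some $d\in A\cap B$, from which $(a,b)=(a_0 d,\,d^{-1}b_0)$. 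Hence the map $d\mapsto (a_0 d,\,d^{-1}b_0)$ is a bijection $A\cap B\to\phi^{-1}(c)$, so $|\phi^{-1}(c)|=|A\cap B|$. Summing over the $|A\times B|$ fibers yields $|A|\cdot|B|=|A\times B|\cdot|A\cap B|$, which rearranges to the stated equality.

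For the inequality, since $A\times B\subseteq H$ we immediately have $|A\times B|\leq |H|$, so multiplying both sides of the first identity by $|A\cap B|$ gives $|A|\cdot|B|\leq|H|\cdot|A\cap B|$. The main subtlety, and the only place where one must be careful, is in verifying that the bijection between $A\cap B$ and $\phi^{-1}(c)$ is well-defined without assuming $H$ is abelian — but since we are only manipulating elements inside $A$ and $B$ using left/right multiplication, the argument goes through for arbitrary groups. I do not anticipate any real obstacle beyond this bookkeeping, as the result is a standard product formula from elementary group theory.
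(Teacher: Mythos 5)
Your proof is correct and complete: the fiber-counting argument on the multiplication map $\phi(a,b)=ab$ is airtight, the bijection $d\mapsto(a_0d,\,d^{-1}b_0)$ between $A\cap B$ and each fiber is verified in both directions, and the final inequality follows exactly as you say from $A\times B\subseteq H$. The paper reaches the same identity by a different (though closely related) decomposition: it partitions the product set $A\times B$ into disjoint left cosets $aB$ indexed by a maximal set $\hat{A}\subseteq A$ of representatives with $a_2^{-1}a_1\notin B$, giving $|A\times B|=|\hat{A}|\cdot|B|$, and then observes that $\hat{A}$ is also a set of coset representatives for $A\cap B$ in $A$, so $|\hat{A}|=|A|/|A\cap B|$. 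Your version instead partitions the Cartesian product (of size $|A|\cdot|B|$) into fibers of size $|A\cap B|$; the two countings are dual to one another, and both go through without any commutativity assumption. If anything, your argument is slightly more self-contained, since it avoids having to introduce and justify the ``maximal set of coset representatives'' $\hat{A}$; the paper's version has the minor advantage of making the coset structure of $A\times B$ explicit. Your closing remark about non-abelian groups is apt: the lemma is stated for general $H$ even though the paper only ever applies it with $H=\cP_n$ abelian.
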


Utilizing Lemma \ref{lem:pj}, we finally arrive at the statement that the unitary-stabilizer nullity satisfies the subadditivity under composition.  
\begin{theorem}[Subadditivity under composition] \label{thm:comp}
Given $n$-qubit unitary $U$ and $V$, the following holds
\begin{align}
    v(U V)\leq v(U)+v(V).
\end{align}
Or equivalently
$s(U)s(V)\leq 4^n s(U V)$.
\end{theorem}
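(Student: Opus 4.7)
The plan is to combine the two preceding lemmas: Lemma \ref{lem:leqsuv} which bounds $s(UV)$ from below by $|\cP_{U^\dagger} \cap \cP_V|$, and Lemma \ref{lem:pj} which upper-bounds a product of subgroup sizes by the ambient group size times the size of the intersection. The only bookkeeping beyond that is converting between $s(\cdot)$ and $|\cP_{\cdot}|$ via Lemma \ref{lem:congs}, and then translating the multiplicative inequality into the additive form for $v(\cdot)$ by taking $\log_2$.

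Concretely, I would first recall that $\cP_{U^\dagger}$ and $\cP_V$ are both subgroups of the Abelian quotient Pauli group $\cP_n$, which has order $4^n$. Applying Lemma \ref{lem:pj} with $A=\cP_{U^\dagger}$, $B=\cP_V$, and $H=\cP_n$ then gives
\begin{align}
|\cP_{U^\dagger}|\cdot|\cP_V| \leq |\cP_n|\cdot |\cP_{U^\dagger}\cap\cP_V| = 4^n\cdot |\cP_{U^\dagger}\cap\cP_V|.
\end{align}
By Lemma \ref{lem:congs}, $|\cP_{U^\dagger}|=s(U)$ and $|\cP_V|=s(V)$, while Lemma \ref{lem:leqsuv} yields $|\cP_{U^\dagger}\cap\cP_V|\leq s(UV)$. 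Chaining these gives the multiplicative form $s(U)s(V)\leq 4^n s(UV)$.

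For the additive statement, I would just take $\log_2$ on both sides and rearrange: $\log_2 s(U)+\log_2 s(V)\leq 2n+\log_2 s(UV)$, which by Definition \ref{def:usn} is exactly $v(UV)\leq v(U)+v(V)$.

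There is essentially no hard step here, since all the substantive work has been done in Lemmas \ref{lem:leqsuv} and \ref{lem:pj}. The only thing to be a little careful about is the direction of the inequality in Lemma \ref{lem:leqsuv} (it is $|\cP_{U^\dagger}\cap\cP_V|\leq s(UV)$, not the reverse) and that Lemma \ref{lem:pj} is being applied inside the Abelian group $\cP_n$ so that $A\times B$ is indeed a subgroup and the intersection-quotient identity is available.
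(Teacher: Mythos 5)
Your proposal is correct and follows exactly the paper's own proof: apply Lemma \ref{lem:pj} with $A=\cP_{U^\dagger}$, $B=\cP_V$, $H=\cP_n$, then chain with Lemma \ref{lem:leqsuv} and the identifications $|\cP_{U^\dagger}|=s(U)$, $|\cP_V|=s(V)$ from Lemma \ref{lem:congs}, and finally take $\log_2$ to obtain the additive form. No differences worth noting.
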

\begin{proof}
Let $H=\cP_n, A=\cP_{U^\dagger}, B=\cP_V$, we know $A,B$ are subgroups of $H$. By Lemma \ref{lem:pj} we have 
\begin{align}
 |\cP_{U^\dagger}| \cdot |\cP_V| \leq  |\cP_{U^\dagger} \cap \cP_V|\cdot  4^n.
\end{align}
Combine with Lemma \ref{lem:leqsuv}, we have
\begin{align}
  |\cP_{U^\dagger}| \cdot |\cP_V|\leq   |s(UV)|\cdot  4^n.
\end{align}
Further by Lemma \ref{lem:congs}, we know 
\begin{align}
    |\cP_V|=s(V), |\cP_{U^\dagger}|=|\cP_{U}|=s(U).
\end{align}
thus finally we get
\begin{align}
    s(U)s(V)\leq 4^n s(U V).
\end{align}
\end{proof}

\section{Applications}\label{sec:lowerT}
Interestingly, by the properties in section \ref{sec:basic}, especially the subadditivity under composition, 
 we can use $v(\cdot)$ to give a lower bound for the $T$ count.  In section \ref{sec:llt}, we prove this connection. In section \ref{sec:compsu}
 we compare the lower bound for $T$ count obtained by the state- and the unitary-stabilizer nullity, and prove that the unitary-stabilizer nullity never performs worse than the state-stabilizer nullity. We also prove that for diagonal unitary, the two lower bounds coincide.  Finally, in section \ref{sec:expli}, we give an $n$-qubit unitary family which has unitary-stabilizer nullity $2n$. This circuit family  gives an example where the lower bound obtained by the unitary-stabilizer nullity strictly outperforms the state-stabilizer nullity by a factor of $2$.
 Furthermore, in section \ref{sec:aux_better}, we connect the unitary-stabilizer nullity and the state-stabilizer nullity with auxiliary systems, showing that adding auxiliary systems and choosing proper stabilizer states can strictly \diff{improve} the lower bound obtained by the state-stabilizer nullity.

 \subsection{Lower bounds for the $T$ count}
\label{sec:llt}
Recall that for a given unitary $U$, its $T$ count $\lb{U}$ is the minimum number of $T$ gates used among any implementation of $U$ that decomposes $U$ into a sequence of gates from Clifford plus $T$-gate set, without ancillary qubits and measurement. In this section, we would use the unitary-stabilizer nullity to lower bound the $T$ count. The main idea is to use the fact that $v(\cdot)$ satisfies  the subadditivity under composition.
 
\begin{theorem}[Lower bound for the  $T$ count]\label{thm:lbt}
For any $n$-qubit unitary $U$, its $T$ count is lower bounded by its unitary-stabilizer nullity, that is
\begin{align}
    \lb{U} \geq v(U).
\end{align}
\end{theorem}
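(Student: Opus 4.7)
The plan is to combine a single-gate calculation with the composition and invariance properties from Section \ref{sec:basic} via a telescoping argument. If $U$ admits no exact Clifford+T synthesis then $t(U) = +\infty$ by convention and the inequality holds trivially, so assume $t := t(U) < \infty$.

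The only nontrivial calculation is $v(T) = 1$ for the single-qubit $T$ gate. Because $T$ is diagonal and commutes with $Z$, we have $T I T^\dagger = I$ and $T Z T^\dagger = Z$, so $P_T(0|0) = P_T(3|3) = +1$. A direct computation shows
\begin{align}
T X T^\dagger = \tfrac{1}{\sqrt{2}}(X+Y), \qquad T Y T^\dagger = \tfrac{1}{\sqrt{2}}(-X+Y),
\end{align}
so every other value of the Pauli function equals $\pm 1/\sqrt{2}$. Hence $s(T) = 2$ and $v(T) = 2 - \log_2 2 = 1$. Lifting to an arbitrary qubit $j$, define $\widetilde T_j := I^{\otimes j-1} \otimes T \otimes I^{\otimes n-j}$. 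Since $I$ is Clifford, Theorem \ref{prop:faith} gives $v(I) = 0$, and Theorem \ref{thm:addten} (additivity under tensor product) then yields $v(\widetilde T_j) = v(T) = 1$ for every $j$.

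Now pick any exact synthesis of $U$ using $t$ T gates and write
\begin{align}
U = C_{t+1}\, \widetilde T^{(t)}\, C_t\, \widetilde T^{(t-1)}\, C_{t-1} \cdots \widetilde T^{(1)}\, C_1,
\end{align}
where each $C_i$ is a Clifford unitary and each $\widetilde T^{(i)}$ is a single-qubit $T$ embedded on $n$ qubits. Repeatedly stripping Clifford factors from the outside using Theorem \ref{thm:invClifford} (invariance under Clifford operations) and splitting off successive $T$ gates using Theorem \ref{thm:comp} (sub-additivity under composition), induction on $t$ yields
\begin{align}
v(U) \le \sum_{i=1}^{t} v(\widetilde T^{(i)}) = t = t(U),
\end{align}
which is the desired inequality.

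There is no real obstacle: once $v(T) = 1$ is verified, the rest of the proof is a purely mechanical application of the properties already established. The only conceptual point to check is that every $T$ gate appearing in a Clifford+T decomposition can indeed be written as a tensor product $I \otimes \cdots \otimes T \otimes \cdots \otimes I$, which is what allows tensor-product additivity to give the exact contribution $1$ per $T$ gate; once that is noted, the induction closes immediately.
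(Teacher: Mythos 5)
Your proof is correct and follows essentially the same route as the paper: verify $v(T)=1$ from the Pauli transfer matrix, lift to $n$ qubits by tensor-product additivity with $v(I)=0$, and telescope over a Clifford+T decomposition using sub-additivity under composition (the paper absorbs the Clifford factors via $v(C)=0$ and sub-additivity rather than explicitly invoking Clifford invariance, which is an immaterial difference). No gaps.
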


\begin{proof}
Notice that $v(T)=1$ by verifying the unitary Pauli function of $T$
\begin{align}
 &\{P_T(\bu|\bv)\}=
 \begin{bmatrix}
 1&0&0&0\\
    0&\frac{\sqrt{2}}{2}&-\frac{\sqrt{2}}{2}&0\\
        0&\frac{\sqrt{2}}{2}&\frac{\sqrt{2}}{2}&0\\
    0&0&0&1
\end{bmatrix}.
\end{align}
Thus, we have $v(T)=2-\log_2 2=1$.

Besides, since $I_2$ is a Clifford operator, by the faithfulness, i.e. Theorem \ref{prop:faith}, we have 
\begin{align}
    v(I_2)=0.
\end{align}
By the additivity under tensor product, i.e. Theorem \ref{thm:addten}, we have 
\begin{align}
    v(T\otimes I_2 \otimes I_2 ... \otimes I_2)=v(T)+0=1.\label{eq:29}
\end{align}
\vspace{0.5em}
From Theorem \ref{prop:faith} we know that all the Clifford unitaries have unitary-stabilizer nullity $0$.
Suppose $U$ is decomposed into sequence of Clifford gates and $\lb{U}$ $T$ gates,
using the subadditivity of composition sequentially, that is Theorem \ref{thm:comp}, we have
\begin{align}
    v(U)&\leq 0+\lb{U}\cdot v(T) \label{eq:subadd}\\
    &=\lb{U}\nonumber.
\end{align}
\diff{This} concludes the theorem.
\end{proof}

To compare the lower bound got by the unitary-stabilizer nullity and the lower bound by the state-stabilizer nullity, we would use the following theorem from \cite{beverland2020lower}.
\begin{theorem}\cite{beverland2020lower}\label{thm:state_lower}
For any $n$-qubit unitary $U$, for any $(n+t)$-qubit stabilizer state $\ket{\phi}$, the $T$ count of $U$ is lower bounded by
\begin{align}
	t(U)\geq v_s\left( (I_{2^t}\otimes U)\ket{\phi}\right).
\end{align}
\end{theorem}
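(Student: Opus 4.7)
The plan is to lift the argument of Theorem \ref{thm:lbt} from unitaries to states: fix an optimal Clifford+T implementation of $U$, apply it gate-by-gate to $\ket{\phi}$, and track how the state stabilizer nullity evolves.

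First I would fix a decomposition $U = C_{t(U)} T_{i_{t(U)}} C_{t(U)-1} \cdots T_{i_1} C_0$ that achieves the T-count, with each $C_k$ Clifford and each $T_{i_k}$ a T gate on qubit $i_k$. The lifted circuit $I_{2^t}\otimes U$ uses the same number of T gates, since $I_{2^t}\otimes T_{i_k}$ is still a single-qubit T gate on the last $n$ qubits and $I_{2^t}\otimes C_k$ is still a Clifford on $n+t$ qubits. I would then build $(I_{2^t}\otimes U)\ket{\phi}$ step by step from $\ket{\phi}$ and track $v_s$ along the way, using $v_s(\ket{\phi})=0$ as the base case.

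The argument reduces to two single-step inequalities:
\begin{itemize}
\item[(i)] $v_s(W\ket{\psi}) = v_s(\ket{\psi})$ for any $(n+t)$-qubit Clifford $W$;
\item[(ii)] $v_s(T_i\ket{\psi}) \leq v_s(\ket{\psi}) + 1$ for any single-qubit T gate $T_i$.
\end{itemize}
Inequality (i) is immediate because Clifford conjugation permutes the Pauli group, so $\Stab(W\ket\psi) = W\,\Stab(\ket\psi)\,W^{\dagger}$ has the same cardinality as $\Stab(\ket\psi)$. The substantive step is (ii). The Paulis in $\hat{\cP}_{n+t}$ whose tensor factor on qubit $i$ lies in $\{I,Z\}$ form an index-$2$ subgroup, and these are precisely the Paulis commuting with $T_i$ (since $T Z T^\dagger=Z$ while $T X T^\dagger = (X+Y)/\sqrt{2}$ is not Pauli). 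Intersecting this subgroup with $\Stab(\ket{\psi})$ yields a subgroup of index at most $2$ inside $\Stab(\ket{\psi})$, and every element of this intersection continues to stabilize $T_i\ket{\psi}$. Hence $|\Stab(T_i\ket\psi)| \geq |\Stab(\ket\psi)|/2$, giving (ii) after taking logarithms.

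The main obstacle is the careful bookkeeping for (ii): one must check that the ``commutes with $T_i$'' subgroup of $\hat{\cP}_{n+t}$ really has index exactly $2$ (accounting for signs, not just elements of $\cP_{n+t}$) and that no phase issue prevents a commuting stabilizer from surviving past $T_i$. Once (i) and (ii) are in hand, induction along the $t(U)$-gate sequence for $I_{2^t}\otimes U$ gives $v_s((I_{2^t}\otimes U)\ket{\phi}) \leq 0 + t(U)\cdot 1 = t(U)$, establishing the claim.
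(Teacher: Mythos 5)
Your argument is correct, but it follows a genuinely different route from the paper's. The paper does not prove this statement itself: it cites Beverland \emph{et al.}, and the only proof material it supplies is the gate-injection argument of Proposition \ref{pro:T2Ts} in Appendix \ref{appendix:relation}, so the implicit chain is $t(U)\geq w_s((I_{2^t}\otimes U)\ket{\phi})\geq v_s((I_{2^t}\otimes U)\ket{\phi})$, where the second inequality relies on the cited facts that $v_s$ is non-increasing under adaptive stabilizer protocols, is additive, and satisfies $v_s(\ket{T})=1$. Your proof instead stays entirely in the unitary model and mirrors, at the level of states, the proof of Theorem \ref{thm:lbt}: Clifford invariance of $v_s$ plus the single-step bound $\left|\Stab(T_i\ket{\psi})\right|\geq\left|\Stab(\ket{\psi})\right|/2$. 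That step is sound: the elements of $\hat{\cP}_{n+t}$ whose $i$-th tensor factor is $I$ or $Z$ up to phase form an index-$2$ subgroup $N$, so $\Stab(\ket{\psi})\cap N$ has index at most $2$ in $\Stab(\ket{\psi})$ (any two elements outside $N$ differ by an element of $N$), and every $P$ in that intersection satisfies $PT_i\ket{\psi}=T_iP\ket{\psi}=T_i\ket{\psi}$ with no phase subtlety --- the ``obstacle'' you flag is immediate, and you only need the easy inclusion $N\subseteq\{P:PT_i=T_iP\}$, not equality. What each approach buys: yours is self-contained and elementary, never invoking measurements or adaptivity; the paper's route passes through the strictly stronger intermediate statement that even the magic-state cost $w_s$ under adaptive protocols is bounded below by $v_s$, which is extra information (cf.\ Proposition \ref{pro:Tsn2T}) but requires the heavier monotonicity machinery.
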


To generalize our result a little bit, \diff{consider} the task that decomposing $U$ into sequence of Clifford gates and $W$ gates, where $W$ gate  is a gate beyond Clifford. \diff{Using} similar proofs as in the Theorem \ref{thm:lbt}, we can also use the unitary-stabilizer nullity to lower bound the number of $W$ gates.
\begin{theorem}[Lower bound for gate synthesis]\label{thm:general}
For any $n$-qubit unitary $U$, the number of gates $W$ required to realize $U$ under Clifford unitaries, \diff{denoted as $w(U)$}, \diff{satisfies}
\begin{align}
    \diff{w(U)}\geq \frac{v(U)}{v(W)}.
\end{align}
\end{theorem}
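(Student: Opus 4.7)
The plan is to mirror the proof of Theorem~\ref{thm:lbt}, replacing the special role of the T gate with a general ``resource'' gate $W$. Suppose $U$ admits a decomposition that uses $k$ copies of $W$ together with arbitrarily many Clifford unitaries. I would write this decomposition as $U = C_k\,\tilde{W}_k\,C_{k-1}\,\tilde{W}_{k-1}\cdots \tilde{W}_1\,C_0$, where each $C_i$ is an $n$-qubit Clifford unitary and each $\tilde{W}_i$ is the embedding of $W$ onto some chosen subset of the $n$ qubits (with identity acting on the remaining qubits).

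First I would show that each embedded gate satisfies $v(\tilde{W}_i) = v(W)$. If $W$ acts on the first $m$ qubits, so that $\tilde{W}_i = W \otimes I_{2^{n-m}}$, this is immediate from additivity under tensor product (Theorem~\ref{thm:addten}) combined with $v(I_{2^{n-m}}) = 0$ (faithfulness, Theorem~\ref{prop:faith}, since the identity is Clifford). For a copy acting on an arbitrary subset of qubits, I would first conjugate by a qubit-permutation unitary, which is Clifford; two applications of Theorem~\ref{thm:invClifford} (once for each side of the conjugation) show that the nullity is preserved, reducing the general embedding to the tensor-product case.

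Having established $v(\tilde{W}_i) = v(W)$, I would apply the sub-additivity under composition (Theorem~\ref{thm:comp}) repeatedly to the product, while using Theorem~\ref{thm:invClifford} to absorb each Clifford factor $C_i$ at zero cost. This yields
\begin{align}
    v(U) \;\leq\; \sum_{i=1}^{k} v(\tilde{W}_i) \;=\; k\cdot v(W),
\end{align}
and rearranging gives $k \geq v(U)/v(W)$, as claimed.

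The only mildly technical point is the embedding step; the rest is a direct generalization of the argument used for Theorem~\ref{thm:lbt}, which is recovered as the special case $W=T$ with $v(T)=1$. Note that the bound is automatically vacuous when $W$ itself is Clifford (so $v(W)=0$), consistent with the fact that Clifford-only circuits cannot synthesize a non-Clifford $U$; in that regime the statement should be read with the convention $v(U)/0 = +\infty$ whenever $v(U)>0$.
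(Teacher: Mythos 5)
Your proposal is correct and follows exactly the route the paper intends: the paper's entire proof is the remark that it ``follows the similar steps of Theorem~\ref{thm:lbt}'', and you have carried out precisely those steps (additivity under tensor product plus faithfulness to handle the embedded copies of $W$, Clifford invariance to absorb the free gates, and sub-additivity under composition to sum the contributions). Your additional care with permutation embeddings and the degenerate case $v(W)=0$ is a sensible filling-in of details the paper leaves implicit.
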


\diff{
\begin{proof}
Note that the subadditivity under composition~(Theorem \ref{thm:comp}) holds for arbitrary two unitaries. Suppose $U$  is decomposed into Clifford and $W$ gates, using Theorem \ref{thm:comp} we get similar results as Eq.~(\ref{eq:29}) and Eq.~(\ref{eq:subadd}), i.e.
$v(W\otimes I_2 \otimes I_2 ... \otimes I_2)=v(W)$ and $v(U)\leq w(U)v(W)$. Thus we prove the theorem.
\end{proof}
}

\subsection{Comparison between the unitary and state-stabilizer nullity}\label{sec:compsu}

Beverland \emph{et al.}  \cite{beverland2020lower} showed that
for a unitary $U$, for any stabilizer state $\ket{\psi}$, the state-stabilizer nullity can derive a lower bound for $T$ count for $U$ as $v_s(U\ket{\psi})$.
We compare this lower bound with the lower bound obtained by  the unitary-stabilizer nullity from Theorem \ref{thm:lbt}. 
We prove in Theorem \ref{lem:compa} that our method will never perform  worse and provide the detailed proof in Appendix~D.
We further show that our bound can perform strictly better for a certain unitary family by a factor of 2, as in  Corollary \ref{cor0:2n}. We also prove that for diagonal unitary, the lower bounds for $T$ count obtained by the state- and the unitary-stabilizer nullity are the same, as in Theorem \ref{theo:diag}. 

\begin{theorem}\label{lem:compa}
For any $n$-qubit unitary $U$, for any $n$-qubit stabilizer state $\ket{\psi}$,
\begin{align}
    v(U)\geq v_s(U \ket{\psi}).
\end{align}
That is
\begin{align}
    v(U)\geq \max_{C\in \cC_n} v_s(UC\ket{0}^{\otimes n}).
\end{align}
The equality holds if and only if there exists Clifford $C\in \cC_n$ such that
\begin{align}\label{eq:diageqc}
    &\left\{\bu\in 4^{[n]} \,|\, \langle 0^{\otimes n}|C^{\dagger} U^\dagger \sigma_\bu U C|0\rangle^{\otimes n}=\pm 1 \right\}\nonumber\\
    &=\left\{\bu\in 4^{[n]} \,|\, C^\dagger U^\dagger \sigma_\bu U C\in  \pm \{I,Z\}^{\otimes n}\right\}.
\end{align} 
and
\begin{align}
    &C\{I,Z\}^{\otimes n} C^\dagger \times \left( U^\dagger \cP_n U\cap \cP_n\right) =\cP_n . \label{eq:noless}
\end{align}
The equality in equation (\ref{eq:noless}) is with respect to matrix-product modulo phases.
\end{theorem}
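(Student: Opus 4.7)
The plan is to convert both sides into sizes of subgroups of $\cP_n$ and then apply the Lagrange-type inequality of Lemma~\ref{lem:pj}. Taking $\log_2 s$, the target $v(U)\ge v_s(U\ket\psi)$ is equivalent to the multiplicative bound $s(U)\le 2^n\cdot s(U\ket\psi)$, which I would prove directly at the subgroup level. Writing $\ket\psi=C\ket{0}^{\otimes n}$ with $C\in\cC_n$, let $B:=C\{I,Z\}^{\otimes n}C^\dagger\subseteq\cP_n$ be the stabilizer subgroup of $\ket\psi$ (working modulo phases throughout), which has size $2^n$, and let $\Stab(U\ket\psi):=\{\sigma_\bu\in\cP_n:\sigma_\bu U\ket\psi=\pm U\ket\psi\}$, a subgroup of $\cP_n$ of size $s(U\ket\psi)$. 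By Lemma~\ref{lem:congs}, $s(U)=|\cP_{U^\dagger}|$, so the entire statement becomes a purely group-theoretic one inside $\cP_n$.

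Next I would apply Lemma~\ref{lem:pj} with $A=\cP_{U^\dagger}$, $B$ as above, and $H=\cP_n$ to obtain $|\cP_{U^\dagger}|\cdot 2^n\le 4^n\cdot|\cP_{U^\dagger}\cap B|$. The key observation is that, by Lemma~\ref{lem:10s}, conjugation $\sigma_\bw\mapsto U\sigma_\bw U^\dagger$ is a well-defined group isomorphism from $\cP_{U^\dagger}$ onto $\cP_U$, and it restricts to a bijection between $\cP_{U^\dagger}\cap B$ and $\cP_U\cap\Stab(U\ket\psi)$: if $\sigma_\bw\in\cP_{U^\dagger}\cap B$ and $U\sigma_\bw U^\dagger=\pm\sigma_\bu$, then $\sigma_\bu U\ket\psi=\pm U\sigma_\bw\ket\psi=\pm U\ket\psi$, and the reverse direction is analogous. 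Hence $|\cP_{U^\dagger}\cap B|=|\cP_U\cap\Stab(U\ket\psi)|\le s(U\ket\psi)$, and substituting back gives $s(U)\le 2^n\cdot s(U\ket\psi)$, which is the desired inequality.

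For the equality statement, saturation of both of the above inequalities is necessary and sufficient. The Lemma~\ref{lem:pj} bound is tight iff $\cP_{U^\dagger}\times B=\cP_n$, which, unwinding $\cP_{U^\dagger}=U^\dagger\cP_n U\cap\cP_n$ and $B=C\{I,Z\}^{\otimes n}C^\dagger$, is exactly condition~(\ref{eq:noless}). The inequality $|\cP_U\cap\Stab(U\ket\psi)|\le|\Stab(U\ket\psi)|$ is tight iff $\Stab(U\ket\psi)\subseteq\cP_U$, i.e.\ every Pauli that $(\pm 1)$-stabilizes $U\ket\psi$ is itself conjugated to a Pauli by $U^\dagger$. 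Translated through $C$, this is precisely the inclusion LHS $\subseteq$ RHS of~(\ref{eq:diageqc}), while the opposite inclusion is automatic because any operator in $\pm\{I,Z\}^{\otimes n}$ acts on $\ket{0}^{\otimes n}$ with eigenvalue $\pm 1$. Thus equality holds iff both~(\ref{eq:diageqc}) and~(\ref{eq:noless}) hold.

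The main obstacle I foresee is not conceptual but notational: one must carefully track signs and phases when passing between the full Pauli group $\hat\cP_n$ and its phase quotient $\cP_n$, and verify that the conjugation isomorphism really does restrict bijectively to each relevant intersection. Once those identifications are cleanly in place, the theorem reduces to a short combination of Lemmas~\ref{lem:congs}, \ref{lem:10s}, and~\ref{lem:pj}.
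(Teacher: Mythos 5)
Your proposal is correct and follows essentially the same route as the paper: both reduce the claim to Lemma~\ref{lem:pj} applied to the subgroups $\cP_{U^\dagger}$ and $C\{I,Z\}^{\otimes n}C^\dagger$ inside $\cP_n$, bound the intersection by $s(U\ket{\psi})$ (you via the conjugation bijection onto $\cP_U\cap\Stab(U\ket{\psi})$, the paper via the equivalent implication that $C^\dagger U^\dagger\sigma_\bu UC\in\pm\{I,Z\}^{\otimes n}$ forces $P_{U\ket{\psi}}(\bu)=\pm1$), and read off the two equality conditions from tightness of the two inequalities. The identification of your saturation conditions with~(\ref{eq:diageqc}) and~(\ref{eq:noless}) matches the paper's.
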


In the following, we prove several lemmas which will be used to show the unitary-stabilizer nullity and the state-stabilizer nullity coincide for the diagonal unitary.
\begin{lemma}\label{lem:pos}
For any $\bu\in 4^{[n]}$,
\begin{align}
    \sum_{\bx,\by\in \{0,1\}^n} |\langle \bx \sigma_\bu \by\rangle |\, |\bx\rangle \langle \by| \in \{I,X\}^{\otimes n}.\end{align}
    Here $|\langle \bx \sigma_\bu \by\rangle |$ is the norm of the complex number $\langle \bx |\sigma_\bu| \by\rangle$.
    \end{lemma}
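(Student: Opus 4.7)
The plan is a direct calculation exploiting the tensor-product structure of $\sigma_\bu$. I would first observe that matrix elements of $\sigma_\bu = \sigma_{u_1}\otimes\cdots\otimes\sigma_{u_n}$ factorize as $\langle\bx|\sigma_\bu|\by\rangle = \prod_{k=1}^n \langle x_k|\sigma_{u_k}|y_k\rangle$, and since $|\lambda_1\lambda_2| = |\lambda_1|\cdot|\lambda_2|$ for complex numbers, the entrywise absolute value also factorizes. Concretely, if I define the single-qubit map $M(\sigma) := \sum_{x,y\in\{0,1\}} |\langle x|\sigma|y\rangle|\,|x\rangle\langle y|$, then
\begin{align}
\sum_{\bx,\by\in\{0,1\}^n} |\langle\bx|\sigma_\bu|\by\rangle|\,|\bx\rangle\langle\by|
= \bigotimes_{k=1}^n M(\sigma_{u_k}).
\end{align}

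Next I would reduce to the single-qubit problem: it suffices to check that $M(\sigma_h) \in \{I,X\}$ for each $h\in[4]$. This is a direct inspection from the explicit matrix forms of $I,X,Y,Z$ given in the preliminaries:
\begin{align}
M(I) &= I, & M(X) &= X, & M(Y) &= X, & M(Z) &= I,
\end{align}
where $M(I)=M(Z)=I$ since $I$ and $Z$ are diagonal with unit-modulus entries, and $M(X)=M(Y)=X$ since $X$ and $Y$ are antidiagonal with unit-modulus entries.

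Combining the two steps, the full tensor product $\bigotimes_k M(\sigma_{u_k})$ lies in $\{I,X\}^{\otimes n}$, which is exactly the claimed conclusion. There is no real obstacle here; the only thing to be careful about is the correct factorization of the absolute value, which is immediate from multiplicativity of the complex modulus. I would write the proof in two short paragraphs: one for the factorization identity, and one for the four-case single-qubit verification.
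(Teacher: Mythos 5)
Your proposal is correct and follows essentially the same route as the paper: verify the four single-qubit cases and lift to $n$ qubits via the tensor-product factorization of $|\langle\bx|\sigma_\bu|\by\rangle|$. In fact you spell out the factorization step (multiplicativity of the complex modulus) more explicitly than the paper, which only sketches the generalization from $n=1$.
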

\begin{proof}
Firstly we prove this lemma for $n=1$.  We can directly verify that
\begin{align}
 \sum_{\bx,\by\in \{0,1\}} |\langle \bx \sigma \by\rangle |\, |\bx\rangle \langle \by| =\left\{
 \begin{aligned}
 &I, \quad \text{If }\sigma \in\{I, Z\}\\
 & X, \quad \text{If }\sigma \in\{X, Y\}\\
 \end{aligned}
 \right.
\end{align}
It is worth noting that $\sum_{\bx,\by\in \{0,1\}} |\langle \bx \sigma \by\rangle |\, |\bx\rangle \langle \by|$ records the positions of the nonzero values in $\sigma$. 
From this point of view, we can generalize this proof to all $n$.
\end{proof}

By Theorem \ref{lem:compa} we prove that the lower bound for the $T$-count obtained by the unitary-stabilizer nullity will never perform worse than the bound obtained by the state-stabilizer nullity.
 In particular, we prove that for the diagonal unitary, the two bounds coincide.
 
We would use Eq.~\eqref{eq:diageqc} and set $C=H^{\otimes n}$. Before showing the main result of this section, we would like to present the following lemmas with detailed proof in Appendix D.
\begin{lemma}\label{lem:diagplus}
 For any diagonal $n$-qubit unitary $U$,
 \begin{align}
    &\{\bu\in 4^{[n]} \,|\, \langle +|^{\otimes n} U^\dagger \sigma_\bu U |+\rangle^{\otimes n}=\pm 1 \}\nonumber\\
    &=\{\bu\in 4^{[n]} \,|\, U^\dagger \sigma_\bu U \in \pm \{I,X\}^{\otimes n}\}.
\end{align}
\end{lemma}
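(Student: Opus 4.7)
\textbf{Proof plan for Lemma \ref{lem:diagplus}.} The $\supseteq$ direction is immediate: every operator in $\pm\{I,X\}^{\otimes n}$ has $|+\rangle^{\otimes n}$ as a $\pm 1$ eigenvector, so its expectation in $|+\rangle^{\otimes n}$ is $\pm 1$. The content of the lemma is the reverse inclusion, which I plan to establish by exploiting that conjugation by a diagonal unitary leaves the matrix-element support pattern of $\sigma_\bu$ unchanged and only rephases its nonzero entries.

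Concretely, write $U = \sum_{\bx\in\{0,1\}^n} e^{i\theta_\bx}|\bx\rangle\langle \bx|$. Then the matrix elements satisfy
\begin{align}
(U^\dagger \sigma_\bu U)_{\bx,\by} \;=\; e^{-i\theta_\bx}(\sigma_\bu)_{\bx,\by}\, e^{i\theta_\by},
\end{align}
so $U^\dagger \sigma_\bu U$ has exactly the same nonzero positions as $\sigma_\bu$, and each nonzero entry still has modulus $1$. Since $\sigma_\bu$ is a Pauli string, there are exactly $2^n$ nonzero matrix elements, one in each row and one in each column.

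Next I compute the assumed expectation in the $|+\rangle^{\otimes n}$ basis:
\begin{align}
\langle +|^{\otimes n} U^\dagger \sigma_\bu U |+\rangle^{\otimes n} \;=\; \frac{1}{2^n}\sum_{\bx,\by\in\{0,1\}^n}(U^\dagger \sigma_\bu U)_{\bx,\by}.
\end{align}
The right-hand side is a sum of $2^n$ unit-modulus complex numbers divided by $2^n$. By the equality condition in the triangle inequality, this sum has modulus $1$ if and only if all $2^n$ nonzero entries of $U^\dagger \sigma_\bu U$ are equal to a common complex number $c$ of modulus $1$. The assumption that the expectation equals $\pm 1$ forces $c=\pm 1$.

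Finally I invoke Lemma \ref{lem:pos}: the support-pattern matrix $M:=\sum_{\bx,\by}|(\sigma_\bu)_{\bx,\by}|\,|\bx\rangle\langle\by|$ lies in $\{I,X\}^{\otimes n}$. Since $U^\dagger \sigma_\bu U = c\,M$ with $c=\pm 1$, we conclude $U^\dagger \sigma_\bu U \in \pm\{I,X\}^{\otimes n}$, as required. The only mildly delicate step is the triangle-inequality equality condition, but for complex numbers of equal modulus it is standard, so no real obstacle is expected.
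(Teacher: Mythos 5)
Your proposal is correct and follows essentially the same route as the paper's proof: both compute $\langle +|^{\otimes n}U^\dagger\sigma_\bu U|+\rangle^{\otimes n}$ as the normalized sum of the $2^n$ unit-modulus nonzero entries of $U^\dagger\sigma_\bu U$ (whose support pattern is unchanged by diagonal conjugation), force all entries to be $\pm 1$ via the triangle-inequality equality condition, and then invoke Lemma \ref{lem:pos} to identify the resulting matrix as an element of $\pm\{I,X\}^{\otimes n}$. The easy inclusion is likewise handled the same way, via the $\pm1$-eigenvector property of $|+\rangle^{\otimes n}$.
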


\begin{lemma}\label{lem:prodH}
For any diagonal unitary $U$,
\begin{align}
     & \{I,X\}^{\otimes n}  \times \left( U^\dagger \cP_n U\cap \cP_n\right) =\cP_n.
\end{align}
 w.r.t product modulo phases. 
\end{lemma}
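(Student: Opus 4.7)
The plan is to exploit the fact that, for diagonal $U$, the subgroup $\{I,Z\}^{\otimes n}$ sits inside $\cP_{U^\dagger}$, and then to observe that $\{I,X\}^{\otimes n}\times\{I,Z\}^{\otimes n}$ already exhausts $\cP_n$ modulo phases.

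First I would check that every element of $\{I,Z\}^{\otimes n}$ lies in $U^\dagger\cP_n U\cap\cP_n$. Indeed, any $\sigma\in\{I,Z\}^{\otimes n}$ is diagonal, and since $U$ is diagonal, $U^\dagger\sigma U=\sigma\in\cP_n$. Hence $\sigma\in U^\dagger\cP_n U\cap\cP_n=\cP_{U^\dagger}$, giving the inclusion
\begin{align}
\{I,Z\}^{\otimes n}\subseteq \cP_{U^\dagger}.
\end{align}

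Second, I would verify that, modulo phases, every $\sigma_\bu\in\cP_n$ factorizes as a product of an element of $\{I,X\}^{\otimes n}$ and an element of $\{I,Z\}^{\otimes n}$. This is immediate qubit-wise from $I=I\cdot I$, $X=X\cdot I$, $Z=I\cdot Z$, and $Y=iXZ\equiv XZ$ (modulo phase). Taking the tensor product gives
\begin{align}
\{I,X\}^{\otimes n}\times\{I,Z\}^{\otimes n}=\cP_n
\end{align}
as subgroups of $\cP_n$ with respect to matrix product modulo phases.

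Combining the two observations, one has
\begin{align}
\cP_n=\{I,X\}^{\otimes n}\times\{I,Z\}^{\otimes n}\subseteq \{I,X\}^{\otimes n}\times \cP_{U^\dagger}\subseteq\cP_n,
\end{align}
forcing equality. The only delicate point is being careful that the products are interpreted modulo phases (so that, for instance, the identification $Y\equiv XZ$ is legitimate and $U^\dagger\sigma U=\sigma$ for diagonal Paulis $\sigma$ holds in $\cP_n$ rather than in $\hat{\cP}_n$); this is exactly the convention fixed in Section~\ref{sec:notation}, so no further work is needed. I do not anticipate any real obstacle; this lemma is essentially a one-line structural observation about diagonal unitaries dressed up with the group-product notation.
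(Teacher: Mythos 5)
Your proposal is correct and follows essentially the same route as the paper's own proof: showing $\{I,Z\}^{\otimes n}\subseteq U^\dagger\cP_n U\cap\cP_n$ because diagonal $U$ commutes with diagonal Paulis, noting $\{I,X\}^{\otimes n}\times\{I,Z\}^{\otimes n}=\cP_n$ modulo phases, and sandwiching to force equality. Your version just spells out the qubit-wise factorization a bit more explicitly.
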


Now we are ready to establish the result for diagonal unitary as follows.
\begin{theorem}\label{theo:diag}
For any diagonal $n$-qubit unitary $U$, 
\begin{align}
    v(U)= \max_{C\in \cC_n} v_s(U C\ket{0}^{\otimes n}).
\end{align}
\end{theorem}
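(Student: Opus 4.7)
The plan is to invoke Theorem~\ref{lem:compa} with the specific Clifford choice $C = H^{\otimes n}$, which sends $\ket{0}^{\otimes n}$ to $\ket{+}^{\otimes n}$. Theorem~\ref{lem:compa} already gives the inequality $v(U) \geq v_s(UC\ket{0}^{\otimes n})$ for every $C\in\cC_n$, together with an ``if and only if'' characterization of when equality is attained at a given $C$, namely conditions (\ref{eq:diageqc}) and (\ref{eq:noless}). So the entire task reduces to verifying those two equality conditions at $C = H^{\otimes n}$ under the assumption that $U$ is diagonal, and then chaining the inequalities to conclude.

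For this choice of $C$, a direct computation gives $C\{I,Z\}^{\otimes n}C^\dagger = \{I,X\}^{\otimes n}$, so condition (\ref{eq:diageqc}) becomes exactly the identity
\begin{align*}
\{\bu\in 4^{[n]} \,|\, \langle +|^{\otimes n} U^\dagger \sigma_\bu U \ket{+}^{\otimes n}=\pm 1 \} = \{\bu\in 4^{[n]} \,|\, U^\dagger \sigma_\bu U \in \pm\{I,X\}^{\otimes n}\},
\end{align*}
which is the content of Lemma~\ref{lem:diagplus}. Similarly, condition (\ref{eq:noless}) becomes $\{I,X\}^{\otimes n} \times (U^\dagger \cP_n U \cap \cP_n) = \cP_n$, which is exactly Lemma~\ref{lem:prodH}. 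Both have been established for any diagonal $U$, so the equality conditions hold and Theorem~\ref{lem:compa} yields $v(U) = v_s(U\ket{+}^{\otimes n})$.

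Combining this with the outer inequality produces
\begin{align*}
v(U) = v_s(U\ket{+}^{\otimes n}) \leq \max_{C\in\cC_n} v_s(UC\ket{0}^{\otimes n}) \leq v(U),
\end{align*}
and the desired equality follows by squeezing. There is no real obstacle here: the nontrivial content has already been extracted into Lemmas~\ref{lem:diagplus} and~\ref{lem:prodH}. Conceptually, the reason $\ket{+}^{\otimes n}$ is the optimal stabilizer witness for diagonal unitaries is that a diagonal $U$ commutes pointwise with $\{I,Z\}^{\otimes n}$, so the Pauli symmetries of $U$ contributed by the $Z$-type generators are ``trivial'' from the perspective of the bound, and the genuinely new symmetries $U^\dagger\sigma_\bu U \in \cP_n$ must sit in the $\{I,X\}^{\otimes n}$-complement --- which is precisely the stabilizer subgroup fixed by $\ket{+}^{\otimes n}$. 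This is what makes the optimization over $C$ collapse to the Hadamard choice.
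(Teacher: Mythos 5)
Your proposal is correct and follows exactly the paper's own route: setting $C=H^{\otimes n}$ in Theorem~\ref{lem:compa} and verifying the two equality conditions (\ref{eq:diageqc}) and (\ref{eq:noless}) via Lemma~\ref{lem:diagplus} and Lemma~\ref{lem:prodH}, respectively. The translation of the conditions under $H^{\otimes n}$ (turning $\{I,Z\}^{\otimes n}$ into $\{I,X\}^{\otimes n}$) and the final squeeze are both handled as the paper intends.
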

\begin{proof}
By setting $C=H^{\otimes n}$ in Theorem \ref{lem:compa} and then using Lemma \ref{lem:diagplus}, Lemma \ref{lem:prodH} to show that equations $(\ref{eq:diageqc})(\ref{eq:noless})$ are satisfied.
\end{proof}

\begin{corollary}
For any diagonal $n$-qubit unitary $U$, the maximum of its corresponding state-stabilizer nullity, i.e. $v_s(U\ket{\psi})$ over all $n$-qubit stabilizer state $\ket{\psi}$, is achieved by $\ket{\psi}=|+\rangle^{\otimes n}.$ 
\end{corollary}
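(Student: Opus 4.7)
The plan is to recognize this corollary as a direct consequence of Theorem~\ref{theo:diag} together with the explicit choice of Clifford used in its proof. Since any $n$-qubit stabilizer state can be written as $C\ket{0}^{\otimes n}$ for some $C \in \cC_n$, the quantity $\max_{\ket{\psi} \text{ stab}} v_s(U\ket{\psi})$ is exactly $\max_{C \in \cC_n} v_s(UC\ket{0}^{\otimes n})$. Theorem~\ref{theo:diag} asserts that for diagonal $U$ this maximum equals $v(U)$; my goal is then to show that the particular Clifford $C = H^{\otimes n}$ already attains this maximum, since $H^{\otimes n}\ket{0}^{\otimes n} = \ket{+}^{\otimes n}$.

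First, I would unpack Theorem~\ref{theo:diag}: its proof selects $C = H^{\otimes n}$ precisely because Lemma~\ref{lem:diagplus} verifies the first equality condition (\ref{eq:diageqc}) of Theorem~\ref{lem:compa} for this $C$, while Lemma~\ref{lem:prodH} verifies the second equality condition (\ref{eq:noless}). Consequently, the equality case in Theorem~\ref{lem:compa} applies, yielding $v_s(UH^{\otimes n}\ket{0}^{\otimes n}) = v(U)$, i.e.\ $v_s(U\ket{+}^{\otimes n}) = v(U)$. Combined with the inequality $v_s(U\ket{\psi}) \leq v(U)$ guaranteed by Theorem~\ref{lem:compa} for every stabilizer $\ket{\psi}$, this shows that $\ket{+}^{\otimes n}$ achieves the maximum.

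There is essentially no obstacle here: once Theorem~\ref{theo:diag} and the supporting Lemmas~\ref{lem:diagplus} and~\ref{lem:prodH} are in place, the corollary is a one-line consequence, so my write-up would simply compose these ingredients. The only thing worth emphasizing is that the maximizer need not be unique (any Clifford $C$ satisfying both equality conditions of Theorem~\ref{lem:compa} would do); the content of the corollary is merely to highlight $\ket{+}^{\otimes n}$ as a canonical, explicit choice, which is natural since diagonal unitaries act diagonally in the computational basis and so produce the largest number of $\pm 1$ Pauli expectation values when applied to the eigenstate of $\{I,X\}^{\otimes n}$.
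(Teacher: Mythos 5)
Your proposal is correct and matches the paper's (implicit) argument exactly: the corollary follows from Theorem~\ref{theo:diag} together with the observation that its proof instantiates Theorem~\ref{lem:compa} with $C=H^{\otimes n}$, so that $v_s(U\ket{+}^{\otimes n})=v(U)\geq v_s(U\ket{\psi})$ for every stabilizer state $\ket{\psi}$. Nothing further is needed.
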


\subsection{Explicit expressions of the unitary-stabilizer nullity for unitary families}\label{sec:expli}

As proved in the previous sections, for the task of lower bounding the $T$ count for a given unitary, the bound obtained by the unitary-stabilizer nullity is always no less than the bound obtained by the state-stabilizer nullity. 
In this section, we give an example where the unitary-stabilizer nullity performs strictly better.

In Proposition 4.1 of Ref.~\cite{beverland2020lower} it showed that for $n\geq 3$, the state-stabilizer nullity of the states related to the $n$-bit controlled $Z$ gate is $n$, i.e. $v_s(C^{n-1}Z\ket{+}^{\otimes n})=n$ . This implies  the $T$ count for $C^{n-1}Z$ is at least $n$. One may wonder whether the unitary-stabilizer nullity can improve \diff{these} bounds. 
Unfortunately, since  $C^{n-1}Z$ is diagonal, by Theorem \ref{theo:diag} we know its unitary-stabilizer nullity is also $n$. To find examples to differentiate the state- and unitary-stabilizer nullity, we should focus on nondiagonal unitaries.

In the following,  we give a family of $n$-qubit unitary, which has unitary-stabilizer nullity $2n$. This implies constructing such a unitary needs at least $2n$ $T$ gates by Theorem \ref{thm:lbt}. This unitary, or its corresponding quantum circuit, consists \diff{of} an $n$-qubit controlled $Z$ gate, a layer of Hadamard gate, and  an $n$-qubit controlled $Z$ gate. 
\begin{theorem}\label{theo:2n}For $n\geq 3$,
\begin{align} 
    v(C^{n-1}Z\,H^{\otimes n}\, C^{n-1}Z)=2n.
\end{align}
\end{theorem}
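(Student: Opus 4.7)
The plan is to show $v(U) = 2n$ for $U := C^{n-1}Z \cdot H^{\otimes n} \cdot C^{n-1}Z$. Write $A := C^{n-1}Z$ and $B := H^{\otimes n}$, so $U = ABA$. The sub-additivity bound (Theorem~\ref{thm:comp}) immediately gives $v(U) = v(ABA) \le 2 v(A) + v(B) = 2n$, since $B$ is Clifford (so $v(B) = 0$ by Theorem~\ref{prop:faith}) and $v(A) = n$ by Theorem~\ref{theo:diag} combined with $v_s(C^{n-1}Z \ket{+}^{\otimes n}) = n$ from Proposition~4.1 of~\cite{beverland2020lower}. All the content is in the matching lower bound, which by Definition~\ref{def:usn} is equivalent to showing $\cP_U = \{I\}$, i.e., that no non-identity Pauli is sent by $U$-conjugation into $\pm \cP_n$.

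Observe that $A$ and $B$ are Hermitian involutions, hence $U = U^\dagger$ and $U^2 = I$. Parameterize any Pauli (up to phase) as $\sigma = X^a Z^b$ with $a, b \in \{0,1\}^n$. Combining multiplicativity $U\sigma_1\sigma_2 U^\dagger = (U\sigma_1 U^\dagger)(U\sigma_2 U^\dagger)$ with $[A, Z^c] = 0$ and $B Z^c B = X^c$, a direct computation gives
$$U \sigma U = A(\tilde D_a D_b) A \cdot Z^a X^b,$$
where $D_a$ is the diagonal unitary with $D_a \ket{x} = (-1)^{f_a(x)} \ket{x}$ for $f_a(x) := \prod_i x_i + \prod_i (x_i \oplus a_i) \pmod 2$, and $\tilde D_a := H^{\otimes n} D_a H^{\otimes n}$. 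Since $Z^a X^b$ is itself a Pauli, $U\sigma U$ is a Pauli iff $M := A(\tilde D_a D_b) A$ is a Pauli.

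To analyze $M$, a short calculation gives $APA = D_c X^c Z^d$ for any Pauli $P = X^c Z^d$; hence $M \in \pm \cP_n$ iff $\tilde D_a D_b = D_c X^c Z^d$ for some $c, d \in \{0,1\}^n$ (up to phase). Matching computational-basis matrix elements, the left-hand side equals $g_a(x \oplus y)(-1)^{f_b(y)}$ with $g_a(w) := 2^{-n} \sum_z (-1)^{w \cdot z + f_a(z)}$, while the right-hand side is supported only on the single shift $x \oplus y = c$. Support matching forces $g_a$ to be delta-supported at $w = c$; by Walsh--Hadamard (Fourier) uniqueness this is equivalent to $f_a$ being an affine function over $\{0,1\}^n$. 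Matching the diagonal phase factors at $x \oplus y = c$ then analogously forces $f_b$ to be affine.

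The main obstacle is the polynomial check that $f_a(z) = \prod_i z_i + \prod_i (z_i \oplus a_i)$ is affine over $\{0,1\}^n$ iff $a = 0$, for $n \ge 3$. Expanding over $\mathbb{F}_2$, the degree-$n$ top monomials in the two products cancel, but the degree-$(n-1)$ contribution is precisely $\sum_{j : a_j = 1} \prod_{i \ne j} z_i$, which is a nonzero polynomial of degree $n-1 \ge 2$ whenever $a \ne 0$ and $n \ge 3$, contradicting affineness. The same argument applied to $f_b$ rules out the residual case $a = 0$ with $b \ne 0$. Consequently the only Pauli in $\cP_U$ is the identity, giving $s(U) = 1$ and $v(U) = 2n$.
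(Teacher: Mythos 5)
Your proof is correct, but it takes a genuinely different route from the paper's. The paper argues by brute force on the Pauli transfer matrix: it expands $\tr(\sigma_\bu U\sigma_\bv U^\dagger)$ in the computational basis, factors it into a product of two exponential sums $f(\bp,\bt,\bs)f(\bq,\bs,\bp)$, and reads off from an explicit case analysis of $f$ that the product has modulus $4^n$ only when $\bp=\bq=\bs=\bt=0$. You instead exploit the structure $U=ABA$ with $A,B$ Hermitian involutions to reduce the question ``is $U\sigma U^\dagger$ a Pauli?'' to ``is $\tilde D_a D_b$ of the form $D_c X^c Z^d$?'', and then, via Walsh--Hadamard support/uniqueness, to the purely combinatorial statement that $f_a(z)=\prod_i z_i+\prod_i(z_i\oplus a_i)$ is affine over $\FF_2$ only for $a=0$ when $n\ge 3$, which follows from inspecting the degree-$(n-1)$ part of its algebraic normal form. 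Your route is more structural and arguably more illuminating: it isolates exactly where $n\ge 3$ enters (for $n=2$ the function $f_a$ is always affine, consistent with $CZ\,H^{\otimes 2}\,CZ$ failing to have nullity $4$), and it would adapt to other diagonal--Clifford--diagonal sandwiches, whereas the paper's computation is tied to the specific sums. Two minor remarks: your upper bound $v(U)\le 2n$ via sub-additivity is correct but unnecessary, since $P_U(0|0)=1$ always gives $s(U)\ge 1$ and hence $v(U)\le 2n$ by definition; and your use of $U\sigma U$ in place of $U\sigma U^\dagger$ is fine because $U=U^\dagger$, but that justification deserves to be stated where it is first used rather than only in passing.
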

Theorem \ref{theo:2n} implies that the lower bound given by the unitary-stabilizer nullity can outperform the bound given by the stabilizer nullity. The detailed proof is provided in Appendix C. We further have the following corollary.

\begin{corollary}\label{cor0:2n}
There exists a family of $n$-qubit unitary $U_n$ such that 
\begin{align}
    v(U_n)> \max_{C\in \cC_n} v_s(U_n C\ket{0}^{\otimes n})).
\end{align}
\end{corollary}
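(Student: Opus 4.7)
The plan is to exhibit the explicit family from Theorem~\ref{theo:2n}, namely $U_n := C^{n-1}Z \, H^{\otimes n} \, C^{n-1}Z$ for $n \geq 3$, and verify that it witnesses the strict separation. The work has essentially been done already: Theorem~\ref{theo:2n} establishes $v(U_n) = 2n$, so the only remaining task is to argue that the right-hand side $\max_{C \in \cC_n} v_s(U_n C \ket{0}^{\otimes n})$ cannot match this value.

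The key observation is a trivial ceiling for the state stabilizer nullity. By Definition~\ref{def:ssn}, $v_s(\ket{\phi}) = n - \log_2 s(\ket{\phi})$, and since $s(\ket{\phi}) \geq 1$ (the identity $\sigma_{0^n}$ always contributes $P_{\ket{\phi}}(0^n) = 1$), we get $v_s(\ket{\phi}) \leq n$ for any $n$-qubit state $\ket{\phi}$. In particular,
\begin{align}
\max_{C \in \cC_n} v_s(U_n C\ket{0}^{\otimes n}) \leq n < 2n = v(U_n),
\end{align}
which immediately gives the corollary.

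The only step that could be considered non-routine is confirming $v_s(\ket\phi) \le n$ from the definition, but this is an elementary consequence of $1 \in \{P_{\ket{\phi}}(\bu)\}_{\bu}$. There is no genuine obstacle in the proof; the real work was carried out in Theorem~\ref{theo:2n}. Hence the proposal is simply: invoke Theorem~\ref{theo:2n} to get $v(U_n) = 2n$, note the universal bound $v_s \leq n$, and compare. This also quantifies the separation as a factor of $2$, matching the claim made in the introduction and in Section~\ref{sec:compsu}.
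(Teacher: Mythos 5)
Your proposal is correct and follows essentially the same route as the paper: instantiate $U_n = C^{n-1}Z\,H^{\otimes n}\,C^{n-1}Z$, invoke Theorem~\ref{theo:2n} to get $v(U_n)=2n$, and compare against the universal bound $v_s(\cdot)\leq n$ (which the paper records immediately after Definition~\ref{def:ssn}). Your extra remark justifying $v_s(\ket{\phi})\leq n$ via $P_{\ket{\phi}}(0^n)=1$ is a harmless elaboration of the same argument.
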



\subsection{Adding auxiliary systems improves the state-stabilizer nullity}\label{sec:aux_better}

In this section, we reconnect the state-stabilizer nullity and the unitary-stabilizer nullity. For any $n$-qubit unitary $U$, instead of comparing $v_s(U\ket{\psi})$ and $v(U)$ for some $n$-qubit stabilizer state $\ket{\psi}$ as in section \ref{sec:compsu}, we  add auxiliary systems and carefully choose the stabilizer states in the computation of $v_s(\cdot)$, i.e. $v_s(I_{2^d}\otimes U |\phi\rangle)$ for some $d$-qubit auxiliary system and $(d+n)$-qubit stabilizer state $\ket{\phi}$.
By Theorem \ref{thm:com_aux} and Corollary \ref{cor:aux_better},
we prove that adding auxiliary systems can strictly increase the value of the state-stabilizer nullity, thus strictly improving  the lower bound for the $T$ count through  Theorem \ref{thm:state_lower}. Furthermore, we prove that the unitary-stabilizer nullity $v(U)$ equals the best lower bound for the $T$ count which one can get through this method, i.e. computing the state-stabilizer nullity with auxiliary systems, as in Theorem \ref{thm:no_more}.

It is worth noting that although the state-stabilizer nullity $v_s(\cdot)$ can derive a lower bound for the $T$ count,  it is not the $T$ count itself, and it holds different properties to the $T$ count. Thus although  the $T$ count for $I_{2^d}\otimes U$ is no more than the $T$ count for $U$, the state-stabilizer nullity for $I_{2^d}\otimes U$ (with respect to some stabilizer state) can be strictly larger than the state-stabilizer nullity for $U$, as in Corollary \ref{cor:aux_better}.

\begin{theorem}\label{thm:com_aux}
	Let $\ket{\Phi}$ be the maximally entangled states on $2n$-qubit systems, i.e. $\ket{\Phi}=\sum_{\bx\in\{0,1\}^n} \frac{1}{\sqrt{2^n}}\ket{\bx}\ket{\bx}$. For any $n$-qubit unitary $U$,
	\begin{align}
		v(U)=v_s\left( I_{2^n}\otimes U \ket{\Phi} \right).
	\end{align}
\end{theorem}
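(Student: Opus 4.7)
The plan is to reduce the identity to the equation $s(U) = s\bigl((I_{2^n}\otimes U)|\Phi\rangle\bigr)$, which, by the definitions $v(U) = 2n - \log_2 s(U)$ and $v_s(|\psi\rangle) = 2n - \log_2 s(|\psi\rangle)$ for a $2n$-qubit state, is equivalent to the claim. Everything then comes down to matching Pauli expectation values on $(I\otimes U)|\Phi\rangle$ with entries of the unitary Pauli function $P_U(\bu|\bv)$.

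First I would compute the state Pauli function $P_{|\psi\rangle}(\bu\bv)$ for $|\psi\rangle = (I_{2^n}\otimes U)|\Phi\rangle$ and a Pauli $\sigma_\bu \otimes \sigma_\bv$ on the combined $2n$-qubit system, getting
\begin{align}
P_{|\psi\rangle}(\bu\bv) = \langle \Phi |\, \sigma_\bu \otimes (U^\dagger \sigma_\bv U)\, |\Phi\rangle.
\end{align}
Next I would use the standard identity $\langle \Phi |\, A\otimes B\, |\Phi\rangle = \tfrac{1}{2^n}\tr(A^T B)$, which follows directly from writing $|\Phi\rangle$ in the computational basis, to obtain
\begin{align}
P_{|\psi\rangle}(\bu\bv) = \tfrac{1}{2^n}\tr\!\bigl(\sigma_\bu^T\, U^\dagger \sigma_\bv U\bigr) = \tfrac{1}{2^n}\tr\!\bigl(\sigma_\bv\, U \sigma_\bu^T\, U^\dagger\bigr),
\end{align}
using cyclicity of the trace.

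The key observation is that $\sigma_\bu^T = \pm \sigma_\bu$, with the sign being $(-1)^{(\text{number of }Y\text{'s in }\bu)}$, because $X,Z,I$ are symmetric and $Y^T = -Y$. Consequently
\begin{align}
P_{|\psi\rangle}(\bu\bv) = \pm\, P_U(\bv|\bu),
\end{align}
and in particular $|P_{|\psi\rangle}(\bu\bv)| = |P_U(\bv|\bu)|$. This gives a bijection between the pairs $(\bu,\bv)$ for which $P_{|\psi\rangle}(\bu\bv)=\pm 1$ and the pairs $(\bv,\bu)$ for which $P_U(\bv|\bu)=\pm 1$, so $s((I_{2^n}\otimes U)|\Phi\rangle) = s(U)$, which is exactly what we need.

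The only subtle point (and the one that most needs care rather than being truly hard) is tracking the overall sign from the $Y$-transpose: it matters for whether individual entries are $+1$ or $-1$, but since the stabilizer nullity counts $\pm 1$ entries without distinguishing sign, it does not affect $s(\cdot)$. Everything else is a direct application of the maximally entangled state identity, so the argument is short once this bookkeeping is in place.
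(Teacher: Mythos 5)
Your proposal is correct and follows essentially the same route as the paper: both reduce the claim to $s(U)=s\bigl((I_{2^n}\otimes U)\ket{\Phi}\bigr)$ by showing $P_{(I\otimes U)\ket{\Phi}}(\bu\bv)=\pm P_U(\bv|\bu)$, with the sign coming from $\sigma_\bu^T=\pm\sigma_\bu$ (your identity $\braandket{\Phi}{A\otimes B}{\Phi}=\tr(A^TB)/2^n$ is just the paper's transpose trick in a different packaging). The sign bookkeeping and the observation that it does not affect the count of $\pm1$ entries match the paper's argument exactly.
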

\begin{corollary}\label{cor:aux_better} For the $2n$-qubit maximally entangled states  $\ket{\Phi}=\sum_{\bx\in\{0,1\}^n} \frac{1}{\sqrt{2^n}}\ket{\bx}\ket{\bx}$, for any $n$-qubit unitary $U$,
\begin{align}
	v_s(I_{2^n}\otimes U \ket{\Phi})
\geq \max_{C\in \cC_n} v_s(UC\ket{0}^{\otimes n}).
\end{align}
The above inequality can be strict, that is 
\begin{align}
	\exists U \text{ such that } v_s(I_{2^n}\otimes U \ket{\Phi})
> \max_{C\in \cC_n} v_s(UC\ket{0}^{\otimes n})\label{eq:aux_strict}.
\end{align}
\end{corollary}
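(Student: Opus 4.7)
The plan is to chain together the two main structural results already established, namely Theorem~\ref{thm:com_aux} (which equates the unitary stabilizer nullity with a particular instance of the state stabilizer nullity on an enlarged system) and Theorem~\ref{lem:compa} (which says $v(U)$ dominates $v_s(UC\ket{0}^{\otimes n})$ for every Clifford $C$). Concretely, I would first rewrite the left-hand side using Theorem~\ref{thm:com_aux} as
\begin{align}
v_s\bigl(I_{2^n}\otimes U\,\ket{\Phi}\bigr)=v(U).
\end{align}
Then Theorem~\ref{lem:compa} gives directly
\begin{align}
v(U)\geq \max_{C\in\cC_n} v_s(UC\ket{0}^{\otimes n}),
\end{align}
which establishes the first (non-strict) inequality of the corollary.

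For the strictness claim, I would exhibit an explicit $U$ where the two sides actually differ. The natural candidate is the family from Theorem~\ref{theo:2n}, namely $U_n = C^{n-1}Z\,H^{\otimes n}\,C^{n-1}Z$ for $n\ge 3$, which has $v(U_n)=2n$. Since the state stabilizer nullity of any $n$-qubit pure state is bounded above by $n$ (by Definition~\ref{def:ssn}), we have the chain
\begin{align}
v_s\bigl(I_{2^n}\otimes U_n\,\ket{\Phi}\bigr)=v(U_n)=2n>n\geq \max_{C\in\cC_n} v_s(U_nC\ket{0}^{\otimes n}),
\end{align}
which proves~\eqref{eq:aux_strict}. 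The only ingredient being combined non-trivially is the already-proved Theorem~\ref{thm:com_aux}; everything else is bookkeeping.

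There is no real obstacle here since the heavy lifting has been done in the cited results. The only point worth being careful about is that the maximum on the right-hand side of~\eqref{eq:aux_strict} is taken over $n$-qubit stabilizer states (so its value is indeed capped by $n$), while the left-hand side lives on a $2n$-qubit system (and so can reach $2n$); the reader should be reminded that this asymmetry in ambient dimension is exactly what allows auxiliary systems to strictly improve the bound, and it is not a gap in the argument.
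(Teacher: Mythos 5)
Your proposal is correct and matches the paper's own argument: the paper proves this corollary by combining Theorem~\ref{thm:com_aux}, Theorem~\ref{lem:compa}, and Corollary~\ref{cor0:2n}, and your strictness step (invoking $v(U_n)=2n$ from Theorem~\ref{theo:2n} together with the bound $v_s\leq n$ for $n$-qubit states) is exactly the content of Corollary~\ref{cor0:2n}.
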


Before giving  proofs to the theorem and the corollary, we want to make several clarifications:
\begin{itemize}[leftmargin=*]
	\item One can easily check the following argument holds as in Appendix \ref{sec:non_incre}
	\begin{align}\label{eq:aux_trivial}
		\max_{C'\in \cC_{2n}} v_s \left(I_{2^n}\otimes U\right) C'\ket{0^{\otimes 2n}})\geq \max_{C\in \cC_n} v_s(UC\ket{0}^{\otimes n}).
	\end{align} 
		That is when adding the auxiliary systems, the best lower bounds for the $T$ count we can get by the state-stabilizer nullity, is never less than the case when there is no auxiliary system. Compared to this easy argument,  Theorem \ref{thm:com_aux} has two key differences. The first one is finding the best bound in the left side of Eq.~(\ref{eq:aux_trivial}), or equivalent enumerating all possible $2n$-qubit stabilizer state $C'\ket{0^{2n}}$, need at least 
		$O(2^{\Omega(n^2)})$ 
		time and thus is hard. In contrast, in Corollary \ref{cor:aux_better} and Theorem \ref{thm:no_more} we show that to find the best bound,  it suffices to set the $2n$-qubit stabilizer state to be a fixed state, that is the maximally entangled state. The second difference is Eq.~(\ref{eq:aux_trivial}) shows only $\geq$, Corollary \ref{cor:aux_better} shows that this inequality can be strict $>$ in some cases. Combined with Theorem \ref{thm:state_lower},  this implies analyzing the state-stabilizer nullity with auxiliary systems can strictly improve the lower bound for the $T$ count.
		\item One may expect whether using higher-dimensional auxiliary systems or choosing other stabilizer states other than the maximally entangled states can further improve the bound by the state-stabilizer nullity. The answer is no. See as in Theorem \ref{thm:no_more}.
	\item Since in general the state-stabilizer nullity does not satisfy the subadditivity under composition (counterexample can be seen as in Appendix \ref{sec:not_subadd}), thus using properties of the state-stabilizer nullity and Theorem \ref{thm:com_aux} is not sufficient to imply the subadditivity under composition of the unitary-stabilizer nullity, that is Theorem \ref{thm:comp}.
		\item  Corollary \ref{cor:aux_better} and equation (\ref{eq:aux_strict}) do not violate the property that the state-stabilizer nullity is invariant under Clifford unitaries, i.e. for any unitary $V$, one \diff{has} $v_s(C V|\psi\rangle)=v_s(V|\psi\rangle)$ for any Clifford unitary $C$ and stabilizer state $\ket{\psi}$. More clarifications can be seen at Appendix \ref{sec:not_vio_inv}.
	\end{itemize}

\begin{proof}[of Theorem \ref{thm:com_aux}] 
By the symmetry of the maximally entangled states,
	for any $n$-qubit operator $M\in \mathbb{C}^{2^n\otimes 2^n}$, we have $(M\otimes I_{2^n})\ket{\Phi}=(I_{2^n}\otimes M^T) \ket{\Phi}$, which is known as the transpose trick. 
	
	For simplicity, in the following we abbreviate $I_{2^n}$ as $I$.
	 Suppose 
	the $2n$-qubit maximally entangled state  $\ket{\Phi}$ is on systems $A\otimes B$, where both $A,B$ are  $n$-qubit systems. Then for any $2n$-qubit Pauli operator $\sigma_\bu\otimes \sigma_\bv$, where $\bu,\bv\in 4^{[n]}$, the corresponding state Pauli function satisfies
	\vspace{0.5em}
	\begin{align}
		&P_{(I\otimes U)\ket{\Phi}}(\bu\bv)\nonumber\\
		&=\tr_{AB}\left(\sigma_\bu\otimes \sigma_\bv \, (I\otimes U) \,\ket{\Phi}\bra{\Phi} \,(I\otimes U^\dagger)\right) \nonumber\\
		&= \tr_{AB}\left( (I\otimes U^\dagger)\, \sigma_\bu\otimes \sigma_\bv \, (I\otimes U) \,\ket{\Phi}\bra{\Phi}
		\right) 
		\nonumber\\
		&= \tr_{AB}\left(\,(I\otimes U^\dagger \sigma_\bv U)\,(\sigma_\bu \otimes I)\, \ket{\Phi}\bra{\Phi}\right) &\nonumber\\
		&= \tr_{AB}\left(\,(I\otimes U^\dagger \sigma_\bv U \sigma_{\bu}^T)\ket{\Phi}\bra{\Phi}\right)\nonumber\\
		&= \gamma \tr_{AB}\left(\,(I\otimes U^\dagger \sigma_\bv U \sigma_{\bu})\ket{\Phi}\bra{\Phi} \right).
	\end{align}
	The last two are obtained by the transpose trick and the fact that $\gamma\in\{-1,+1\})$.
	Further, notice that
	\begin{align}
		\ket{\Phi}\bra{\Phi}&=\frac{1}{2^n}\sum_{\bx,\by\in\{0,1\}^n} \ket{\bx}\ket{\bx}\bra{\by}\bra{\by}.
	\end{align}
	The above equations can be further simplified by
	\begin{align}
		&P_{(I\otimes U)\ket{\Phi}}(\bu\bv)\nonumber\\
		&= \gamma \tr_{AB}\left(\,(I\otimes U^\dagger \sigma_\bv U \sigma_{\bu})\frac{1}{2^n}\sum_{\bx,\by\in\{0,1\}^n} \ket{\bx}\ket{\bx}\bra{\by}\bra{\by}\right) &\nonumber\\
		&= \gamma \tr_{B}\left( \, U^\dagger \sigma_\bv U \sigma_{\bu} \frac{1}{2^n}\sum_{\bx\in\{0,1\}^n} \ket{\bx}\bra{\bx}\right) \nonumber\\
		&= \gamma \tr_{B}\left( \, U^\dagger \sigma_\bv U \sigma_{\bu}\frac{1}{2^n}I\right) \nonumber\\
		&=\frac{\gamma }{2^n} \tr_{B}\left(  \sigma_\bv U \sigma_{\bu}U^\dagger\right) &\nonumber\\
		&={\gamma }\, P_U(\bv|\bu).
	\end{align}

In other words, if we ignore the $\gamma=\pm 1$ factor, then there is a one-to-one correspondence between the values of state Pauli functions  and the unitary Pauli function. Thus the two Pauli functions have the same number of $\pm 1$s, that is
$s(U)= s((I\otimes U)\ket{\Phi})$.
Since $\ket{\Phi}$ is a $2n$-qubit state, we obtain $2n - s(U)= 2n- s((I\otimes U)\ket{\Phi})$.
Then we have $v(U)=v_s\left( (I\otimes U) \ket{\Phi} \right)$.
\end{proof}

\vspace{0.5em}
Moreover, the proof of Corollary \ref{cor:aux_better} can be proved by combining Theorem \ref{thm:com_aux}, Theorem \ref{lem:compa} and Corollary \ref{cor0:2n}.

In the following, we find that $v(U)$ equals the best lower bound for the $T$ count that can be obtained from the state-stabilizer nullity with auxiliary systems. In the following theorem the maximization is over $d$-qubit auxiliary systems for any positive integer $d$. The detailed proof is provided in Appendix D.

\begin{theorem}\label{thm:no_more}For any $n$-qubit unitary $U$, we have
    \begin{align}
        v(U) =\max_{d\in \mathbb{N}^+} \max_{C\in\cC_{d+n}} v_s\left( (I_{2^d}\otimes U)\, C\ket{0}^{\otimes (d+n)} \right).
    \end{align}
\end{theorem}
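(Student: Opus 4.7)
The plan is to split the equality into the two natural inequalities and derive each from results already established in the preceding sections.

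For the upper bound $v(U) \geq \max_{d,C} v_s\big((I_{2^d}\otimes U)\, C\ket{0}^{\otimes(d+n)}\big)$, the idea is to apply Theorem~\ref{lem:compa} to the $(d+n)$-qubit unitary $I_{2^d}\otimes U$ and to the $(d+n)$-qubit stabilizer state $C\ket{0}^{\otimes(d+n)}$. Theorem~\ref{lem:compa} then directly gives $v(I_{2^d}\otimes U) \geq v_s\big((I_{2^d}\otimes U)\, C\ket{0}^{\otimes(d+n)}\big)$. Additivity under tensor product (Theorem~\ref{thm:addten}) together with faithfulness (Theorem~\ref{prop:faith}), which supplies $v(I_{2^d})=0$, then collapses the left-hand side to $v(U)$. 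Taking the maximum over all $d\in\mathbb{N}^+$ and $C\in\cC_{d+n}$ completes this direction.

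For the lower bound $v(U) \leq \max_{d,C} v_s\big((I_{2^d}\otimes U)\, C\ket{0}^{\otimes(d+n)}\big)$, the plan is to invoke Theorem~\ref{thm:com_aux}, which pinpoints the $2n$-qubit maximally entangled state as a saturating choice: $v(U) = v_s\big((I_{2^n}\otimes U)\ket{\Phi}\big)$. Because $\ket{\Phi}$ is itself a stabilizer state, prepared explicitly by $\ket{\Phi} = \big(\prod_{i=1}^{n}\mathrm{CNOT}_{i,n+i}\big)(H^{\otimes n}\otimes I^{\otimes n})\ket{0}^{\otimes 2n}$, there is a Clifford $C_0 \in \cC_{2n}$ with $\ket{\Phi}=C_0\ket{0}^{\otimes 2n}$. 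Specialising the right-hand side at $d=n$ and $C=C_0$ then recovers $v(U)$ exactly, so the maximum is at least $v(U)$.

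I do not foresee a genuine technical obstacle. The only bookkeeping to check is that Theorem~\ref{lem:compa} is indeed stated for arbitrary qubit counts, so that applying it with the dimension parameter reinterpreted as $d+n$ is legitimate; this is immediate from the statement. The substantive content, namely that the maximally entangled state is an optimal auxiliary resource and that no auxiliary system of any dimension can push the state stabilizer nullity past $v(U)$, is already packaged in Theorems~\ref{thm:com_aux}, \ref{thm:addten}, and \ref{lem:compa}; the present theorem assembles these ingredients into a single equality.
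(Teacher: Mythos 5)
Your proposal is correct and follows essentially the same route as the paper: both directions rest on Theorem~\ref{lem:compa} applied to $I_{2^d}\otimes U$ together with additivity (Theorem~\ref{thm:addten}) and faithfulness (Theorem~\ref{prop:faith}) for the upper bound, and on Theorem~\ref{thm:com_aux} with the maximally entangled state (a stabilizer state) as the explicit witness for the lower bound. Your version is in fact marginally more streamlined, since by bounding each term $v_s\bigl((I_{2^d}\otimes U)C\ket{0}^{\otimes(d+n)}\bigr)$ by $v(U)$ individually you avoid the paper's auxiliary monotonicity-in-$d$ lemma from Appendix~\ref{sec:non_incre}.
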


\subsection{Estimating the $T$ count for quantum gates of interests}\label{subsec: examples}
In this section, we showcase the specific applications of unitary-stabilizer nullity in gate synthesis.
We present numerical results of using 
 the state-stabilizer nullity, the unitary-stabilizer nullity, and the stabilizer extent~\cite{bravyi2019simulation} to derive lower bounds for the $T$ count. We compare their performance and summarize the results in Table \ref{table:lowerbound}. 
  The definition and properties of stabilizer extent \diff{are} put in Appendix \ref{appendix:stabext}.
The quantum Fourier transform QFT$_n$ is the unitary where  QFT$_n |j\rangle := \frac{1}{\sqrt{2^n}} \sum_{k=0}^{2^n-1} e^{2\pi i jk/2^n}\ket{k}$.

As shown in table \ref{table:lowerbound}, when compared with state-stabilizer nullity, 
the lower bounds obtained by unitary-stabilizer nullity 
is the same as for diagonal gates, and better in the general case.
When compared with stabilizer extent, for small $n$, the unitary-stabilizer nullity  outperforms for some gates. For larger $n$, the stabilizer extent might perform better, but it is harder to compute  since the cardinality of stabilizer states increases as $2^{\Omega(n^2)}$~\cite{gross2006hudson}.

\begin{table*}[ht] 
    \centering
    \begin{tabular}{c|c|c|c|c}
          Unitary  & Corresponding state & \multicolumn{3}{|c}{\makebox[0pt]Lower bounds obtained by}  \\
          & & state-stabilizer nullity & unitary-stabilizer nullity & stabilizer extent\\
          \hline
          $T$    &  $T\ket{+}$ &  1  &  1  & 1\\
          $\sqrt{T}$ & $\sqrt{T}\ket{+}$ &  1  &  1  &0.7549\\
          $CCZ$ & $CCZ \ket{+}^{\otimes 3}$ &  3  &  3  & 3.6349\\
          $C^3Z$ & $C^3Z \ket{+}^{\otimes 4}$ & 4 & 4 & 5.1226\\
          $\sqrt{T}H\sqrt{T}$ & $\sqrt{T}H\sqrt{T}\ket{+}$ &  1   & 2 &0.9109\\
          $THT$ & $THT \ket{+}$ &  1   &  2  & 1.4542\\
          QFT$_3$ & QFT$_3\ket{x}$ &     &   & \\
         &  $x \equiv 0 \mod 2 $ &  0   & 4  & 0.0000\\
         &  $x \equiv 1 \mod 2 $ &  1   & 4  & 1.0000\\
         QFT$_4$ & QFT$_4\ket{x}$ &     &   & \\
           & $x \equiv 1 \mod 2 $ &  2  &  6  & 1.7555\\ 
             & $x \equiv 0 \mod 4 $ &  0  &  6  & 0.0000\\
               & $x \equiv 2 \mod 4 $ &  1  &  6  & 1.0000\\
    \end{tabular}
	\caption{Lower bounds for $T$ count for the given unitary. The first column is the given unitary $U$. The second column is the corresponding state $\ket{\psi}$. The next three columns are the lower bounds obtained by state-stabilizer nullity $v_s$, the unitary-stabilizer nullity $v$, and the stabilizer extent $\xi$. The corresponding value are $v_s(\ket{\psi})$, $v(U)$ and $\frac{\log(\xi(\ket
	{\psi})}{\log \xi(\ket{T})}$,  as proved in Theorem~\ref{thm:lbt}, Theorem~\ref{thm:state_lower}. Since QFT$_n\ket{+}^{\otimes n}=\ket{0}$,  similar to Ref.~\cite{beverland2020lower} instead we use QFT$_n\ket{x}$ as its corresponding states.}   \label{table:lowerbound}
\end{table*}

\section{Conclusion}\label{sec:open}

\diff{Resource theory for magic monotones has wide applications in quantum information. In this work, we introduce a magic monotone for unitaries, and use it to lower bound the $T$ count in the circuit synthesis task. This lower bound quantifies the minimum magic resources required for the computation,  and lower bounds the runtime of classical simulation algorithms based on the Clifford+$T$ gate set. We also give  numerical results which show that our magic monotone can give better $T$ count lower bounds for some gates when compared with other magic monotones.}

\diff{Specifically,} in this paper, for any $n$-qubit unitary $U$, we introduce its unitary-stabilizer nullity as $v(U)=2n-s(U)$, where $s(U)$ is the number of $\pm 1s$ in the Pauli transfer matrix. We prove that the unitary-stabilizer nullity satisfies several desirable properties, and in particular establish that it is a lower bound for the $T$ count of $U$. We show that the lower bound obtained by the unitary-stabilizer nullity never performs worse than the state-stabilizer nullity. We also give an $n$-qubit unitary family of unitary-stabilizer nullity $2n$, providing examples  where the unitary-stabilizer nullity strictly outperforms.  For the diagonal unitaries, we prove the lower bounds obtained by the state- and the unitary-stabilizer nullity are the same. Besides, we connect the unitary-stabilizer nullity and the state-stabilizer nullity with auxiliary systems. Moreover, the results in this work may shed lights for the study of resource theories of quantum dynamics (e.g., Refs.~\cite{Diaz2018,Wang2019,Rosset2017a,Li2018c,Regula2021,Liu2019b,Wang2018g}).

\diff{Meaningful magic monotones with desirable properties are very useful to exploit the power of nonstabilizer resources in 
quantum computation. Our work has, in particular, shown the applications of unitary-stabilizer nullity in estimating the $T$ count, which is central for resource estimation for fault-tolerant computation~\cite{Bravyi_2005}, performance analysis for classical-simulation algorithms~\cite{seddon2021quantifying,bravyi2016improved,bravyi2016trading},   
and even for the complexity of the 
parameterized Quantum Merlin-Arthur problem
(QMA problem)~\cite{arunachalam2022parameterized}.
There are many other applications beyond. For one thing, we could apply the unitary-stabilizer nullity to benchmark the gate synthesis task under stabilizer operations~\cite{welch2014efficient,heyfron2018efficient,ross2016optimal}. Using the result in Theorem 13, the benchmark method can be generalized for any non-Clifford gates. For another, one might use the unitary-stabilizer nullity to quantify the nonstabilizerness of quantum circuits~\cite{oliviero2022measuring,haug2022scalable} on near-term quantum devices, as we could extract the unitary Pauli function on the quantum hardware, via firstly measuring the states $U(\sigma_v+I)/{2^n}U^\dagger$ and $U(I-\sigma_\bv)/{2^n}U^\dagger$ with measurement operator $\sigma_\bu$, and then obtain the Pauli function.
Protocols for evaluating the nonstabilizerness experimentally~\cite{oliviero2022measuring,haug2022scalable} can be used to analyze the performance of quantum devices.}

Beyond this work, there are still various interesting open questions 
on the quantification of non-Clifford resources:  
\begin{itemize}
 \item For the task of lower bounding $T$ gates for a given unitary, 
 the unitary-stabilizer nullity gives a bound up to $2n$. However, in general the $T$ count can be much larger than $2n$. For example, it is known that only unitaries with elements in certain rings can be exactly synthesized by Clifford and $T$ gates~\cite{giles2013exact}. Is it possible to get a superlinear bound by devising alternative magic monotones? 
    \item Here we define the $T$ count to be the minimum number of $T$ gates used for exactly synthesizing a given unitary $U$. Is it possible to lower bound the number of $T$ gates for unitary $V$ which approximates $U$, i.e. $|V-U|\leq \epsilon$? Beverland \emph{et al.}~\cite{beverland2020lower} have considered this setting for single-qubit unitaries.
\end{itemize}

\textbf{Acknowledgements.} 
We thank the anonymous reviewers for their valuable suggestions, which help us better present the results and proofs. This work was done when J. J. was a research intern at Baidu Research.



\vspace{2cm}
\onecolumngrid
\vspace{2cm}

\appendix

\section{Equivalence between the definitions of the state-stabilizer nullity}\label{appendix:equ_state_stab}

In this section, we show Definition \ref{def:ssn}  is equivalent  to the definition of the state-stabilizer nullity in the original paper~\cite{beverland2020lower}.
Recall that  for an $n$-qubit state $\ket{\psi}$,
Beverland \emph{et al}~\cite{beverland2020lower} define the state-stabilizer nullity as
\begin{align}
	\nu(\ket{\psi})=n-\log_2{|\mathrm{Stab}\, \ket{\psi}}|,
\end{align}
 where $\mathrm{Stab}\, \ket{\psi}$ is the stabilizer of $\ket{\psi}$, that is
 \begin{align}
 	\mathrm{Stab}\, \ket{\psi}=\{P\in \hat{\cP}_n: P|\psi\rangle=|\psi\rangle\}.
 \end{align}
 Here we prove our definition $v_s(\ket{\psi})$ is equivalent to $\nu(\ket{\psi})$, that is $v_s(\ket{\psi})=\nu(\ket{\psi})$, by building a one-to-one correspondence between $\bu\in 4^{[n]}$ where 
$P_{\ket{\psi}}(\bu)=\pm 1$, and $P\in\hat{\cP}_n$ where $P\ket{\psi}=\ket{\psi}$. The key is noticing that the state Pauli functions are ranging over the quotient group $\cP_n$, while the $\mathrm{Stab}$ is ranging over the Pauli group $\hat{\cP}_n$ where
\begin{align}
	&\cP_n=\{\sigma_\bu|\bu\in 4^{[n]}\}\label{eq:A3},\\
	&\hat{\cP}_n=\{\pm \sigma_\bu, \pm i \sigma_\bu|\bu\in 4^{[n]}\} \label{eq:A4}.
\end{align}
More precisely,

$\rightarrow$ For one direction, if any $\bu$ satisfies that 
$P_{\ket{\psi}}(\bu)=\gamma$ where $\gamma=\pm 1$, since $\sigma_\bu$ only has $\pm1$ eigenvalues and $\ket{\psi}$ is unit, we must have 
\begin{align}
	\sigma_\bu \ket{\psi} =\gamma\ket{\psi}.
\end{align}
Thus we find $P^{(\bu)}:=\gamma \sigma_\bu \in \hat{\cP}_n$ such that
\begin{align}
	P^{(\bu)}\in \mathrm{Stab}\, \ket{\psi}.
\end{align}
Further notice that for $\bu\neq\bv$ where $P_{\ket{\psi}}(\bu)=P_{\ket{\psi}}(\bv)=\pm 1$,  we have $P^{(\bu)} \neq P^{(\bv)}$. Thus the map from $\bu$ to $P^{(\bu)}$ is \diff{a bijection}.

$\leftarrow$ For the other direction,  notice that  
since $i\sigma_\bu$ has eigenvalues $\pm i$, we have 
\begin{align}
	 i\sigma_\bu \ket{\psi}\neq \ket{\psi}, \forall \ket{\psi}.
\end{align} Then for any $P\in \mathrm{Stab}\,\ket{\psi}$, there exists $\bu_P\in 4^{[n]}$ such that $P=\gamma\sigma_{\bu_P}$ where $\gamma=\pm 1$. We can verify that 
\begin{align}
	P_{\ket{\psi}}(\bu_p)=\gamma.
\end{align}
Also notice that by the linearity, if $P\in\mathrm{Stab}\,\ket{\psi}$, then $-P\not\in\mathrm{Stab}\,\ket{\psi}$. Thus 
 for any $P\neq P'\in\mathrm{Stab}\,\ket{\psi} $, $\bu_P\neq \bu_{P'}$. Thus the map from $P\in \mathrm{Stab}\,\ket{\psi}$ to $\bu_P$ is also a bijection.

\diff{Combing the above argument that there is a bijection from $\bu$ where $P_{\ket{\psi}}(\bu)=\pm 1$ to the elements of $\mathrm{Stab}\,\ket{\psi}$ and a bijection vice versa,
we conclude that 
\begin{align}v_s(\ket{\psi})=\nu(\ket{\psi}).
\end{align}
}
\section{Non-Clifford resources in two computational models}\label{appendix:relation}

In this section, we specify two  computation models which we mentioned  in this manuscript, that is the quantum computation model with magic states, and the quantum computation model with unitary. 
We would use examples from~\cite{howard2017application,jones2013low,gosset2014algorithm} to  clarify the relationship between the number of non-Clifford resources used in the two models, that is  Proposition \ref{pro:T2Ts} and Proposition \ref{pro:Tsn2T}.

To begin with, we firstly clarify the two computation models. 
 We write $T$ gate as $T=\begin{bmatrix}1&0\\0& e^{i\pi/4}\end{bmatrix}$, T state as $\ket{T}=T\ket{+}.$

\subsection{Two computation models}

\noindent\textbf{Quantum computation with magic state~\cite{Bravyi_2005}.}
In this model, the computational task is to construct a target $n$-qubit quantum state $\rho$, either a pure state or a mixed state. 
The valid operations are as follows.

\begin{itemize}
    \item Initiate qubits in state $\ket{0}^{\otimes m}\ket{T}^{\otimes s}$ for some integer $m, s$.
    \item Perform stabilizer operations,  including using Clifford unitaries and qubit measurement in the computational basis. Here the Clifford unitary is performed adaptively, that is the Clifford unitary used  can depend on previous measurement outcomes.
    \item After the computation, use the first $n$ qubits as the output, which should be in state $\rho$.
\end{itemize}

There are variants of this model which allow using catalyst states in the initiation, we do not consider those variants here. For readers who are interested, more can be seen in \cite{beverland2020lower}.

\vspace{1em}

\noindent\textbf{Quantum computation with unitary.} In this model, the computational task is to construct a target $n$-qubit unitary $U$ using basic gates. More specifically, the task is to decompose $U$ into a sequence of gates from the Clifford+$T$ gate set, without using ancillary qubits and measurement. Mathematically, we aim to find unitary $U_1,...,U_l$ such that
\begin{align}\label{eq:T}
    & U= U_1U_2...U_l,
\end{align}
where each $U_i$ corresponds to either a  Clifford gate, or a $T$ gate. Here $l$ is an integer corresponding to the size of the circuit. 

\vspace{0.7em}

\noindent\textbf{Differences:} Note that in  the quantum computation with magic state, one can use auxiliary qubits, do measurements and perform Clifford unitary adaptively, while in the quantum computation with unitary, one cannot use auxiliary qubits, cannot do measurements and there are no adaptive operations. 
\subsection{Non-Clifford resources used in the two models}
In the quantum computation with magic state model, the T state $\ket{T}$ is the non-Clifford resource, while the stabilizer operations \diff{are} viewed as free operations. For an $n$-qubit state $\rho$,
 we use $w_s(\rho)$ to denote the minimum $s$, that is the minimum number of $\ket{T}$ needed to  construct $\rho$. We denote $w_s(\rho)=+\infty$ if $\rho$ cannot be exactly constructed through the above process. 
 
In the quantum computation with unitary model, the $T$ gate is the non-Clifford resources, while the Clifford \diff{gates}, or Clifford \diff{unitaries}, are viewed as free resources. 
We use  $t(U)$ to denote the minimum number of $T$ gates used in Eq.~(\ref{eq:T}). We denote $t(U)=+\infty$ if $U$ cannot be exactly synthesized with Clifford+$T$ gate set. We also call $t(U)$ the $T$ count of $U$.

\subsection{Relation between number of non-Clifford resources in two models}

As we inexplicitly used in the manuscript, firstly the lower bound of $\ket{T}$ can derive a lower bound of $T$. This fact is used in \cite{beverland2020lower}.

\begin{proposition}\cite{beverland2020lower}\label{pro:T2Ts}
For an $n$-qubit unitary $U$, for any $n$-qubit stabilizer state $\ket{\phi}$,
\begin{align}
    t(U)\geq w_s(U\ket{\phi}).
\end{align}
\end{proposition}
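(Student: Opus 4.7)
The plan is to show that any implementation of $U$ in the unitary model using $t(U)$ T-gates can be converted, essentially line-for-line, into a procedure in the magic-state model that prepares $U\ket{\phi}$ while consuming only $t(U)$ copies of $\ket{T}$; the inequality then follows from the definition of $w_s$ as a minimum.

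First, I would fix an optimal Clifford+T decomposition $U = U_1 U_2 \cdots U_l$ realizing $U$ with exactly $t(U)$ of the $U_i$'s being (a single-qubit) $T$ gate applied to one of the $n$ system qubits, and the rest being Clifford. Then I would describe a magic-state protocol that produces $U\ket{\phi}$: initialize the $n$ system qubits in $\ket{0}^{\otimes n}$ together with $t(U)$ fresh copies of $\ket{T}$; using Clifford unitaries and computational-basis measurements, first prepare the stabilizer state $\ket{\phi}$ on the system register (this is free because $\ket{\phi}$ is a stabilizer state, so there is a Clifford $C_\phi$ with $C_\phi\ket{0}^{\otimes n} = \ket{\phi}$); then process the gate sequence $U_1, \ldots, U_l$ in order, applying Clifford gates directly and replacing each $T$ gate by the standard T-gate teleportation gadget, which uses one fresh $\ket{T}$ ancilla, one CNOT, a computational-basis measurement on the ancilla, and an adaptive $S$ correction on the system qubit conditioned on the measurement outcome. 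Because the magic-state model explicitly permits adaptive Clifford operations and computational-basis measurements, this entire procedure is admissible.

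After all $l$ steps, the post-measurement state on the first $n$ qubits is exactly $U\ket{\phi}$, and the total number of magic states consumed is exactly $t(U)$. By the definition of $w_s(U\ket{\phi})$ as the minimum number of $\ket{T}$ states needed to prepare $U\ket{\phi}$ via the allowed operations, this yields $w_s(U\ket{\phi}) \le t(U)$. If $t(U) = +\infty$ the bound is vacuous, and if $U\ket{\phi}$ cannot be prepared at all then necessarily $U$ admits no Clifford+T decomposition and $t(U) = +\infty$, so the inequality holds in all cases.

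The only nontrivial point is verifying that the teleportation gadget indeed implements $T$ deterministically modulo a Clifford correction, i.e.\ that on input $\alpha\ket{0}+\beta\ket{1}$ and ancilla $\ket{T} = T\ket{+}$, performing CNOT and measuring the ancilla yields $T(\alpha\ket{0}+\beta\ket{1})$ up to an $S$ correction on outcome $1$; this is a standard identity that one can verify by direct computation, and it is precisely what makes the Clifford-correction capability of the magic-state model sufficient to absorb all measurement-dependent Paulis. Nothing else in the argument is delicate: the subtlety flagged in the main text (that the magic-state T-count can be strictly smaller than the unitary T-count) is consistent with the direction of the inequality proved here, since adaptivity can only help, never hurt.
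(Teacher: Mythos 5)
Your proposal is correct and follows essentially the same route as the paper: take an optimal Clifford$+$T decomposition of $U$ and convert it into a magic-state protocol by preparing $\ket{\phi}$ with a free Clifford and replacing each $T$ gate with the standard state-injection gadget consuming one $\ket{T}$, which gives $w_s(U\ket{\phi})\le t(U)$. The paper's proof is just a one-line appeal to gate injection; you have simply spelled out the same argument in more detail.
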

\begin{proof}
This proof is direct. Suppose we  decompose $U$ into sequences of gates in Clifford+$T$ gate set with in total $t(U)$ $T$ gates. Then use gate injection~\cite{zhou2000methodology,fowler2012surface} we can construct state $U\ket{\phi}$ by using $t(U)$ copies of $\ket{T}$.
\end{proof}

\diff{Furthermore, in} the other direction, 
one may wonder if $t(U)$ can be completely characterized by $w_s(\cdot)$, via enumerating all stabilizer states $\ket{\phi}$. We emphasize here this is not true. 
\begin{proposition}\label{pro:Tsn2T}~\cite{howard2017application,jones2013low,gosset2014algorithm}
There exists $3$-qubit unitary $U$ such that 
\begin{align}
    t(U)> \max_{\ket{\phi}} w_s(U\ket{\phi}),
\end{align}
where the maximum is over all $3$-qubit stabilizer states $\ket{\phi}$.
\end{proposition}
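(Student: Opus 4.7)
The plan is to exhibit a concrete three-qubit unitary $U$ for which the stated separation holds, and to verify the two sides of the inequality by invoking the results already cited in the statement. Following \cite{jones2013low,howard2017application,gosset2014algorithm}, I would take $U$ to be the Toffoli gate, which is Clifford-equivalent (via a Hadamard on the target qubit) to the doubly-controlled-$Z$ gate $C^{2}Z$.

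For the upper bound $\max_{\ket\phi} w_s(U\ket\phi)\leq 4$, I would invoke the measurement-based construction of Jones~\cite{jones2013low}. That protocol realizes the Toffoli gate on an \emph{arbitrary} three-qubit input by consuming four copies of $\ket{T}$, together with Clifford unitaries and computational-basis measurements whose outcomes are fed back classically into subsequent Clifford corrections. Applying the protocol with input any stabilizer state $\ket\phi$ produces $\text{Toffoli}\,\ket\phi$ while consuming exactly four magic states, so $w_s(\text{Toffoli}\,\ket\phi)\leq 4$ uniformly over stabilizer $\ket\phi$. For the matching lower bound I would cite the exact T-count analysis of Gosset \emph{et al.}~\cite{gosset2014algorithm}, or equivalently the robustness-of-magic argument of Howard and Campbell~\cite{howard2017application}, which together establish $t(\text{Toffoli})=7$. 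Combining the two estimates yields $t(\text{Toffoli}) = 7 > 4 \geq \max_{\ket\phi} w_s(\text{Toffoli}\,\ket\phi)$, as required.

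The main obstacle to a fully self-contained proof is the lower bound on $t(U)$: the unitary stabilizer nullity introduced in this work delivers only $v(\text{Toffoli})=3$. Indeed, since Toffoli is Clifford-equivalent to the diagonal gate $C^{2}Z$, Theorem~\ref{thm:invClifford}, Theorem~\ref{theo:diag}, together with Proposition~4.1 of \cite{beverland2020lower}, give $v(\text{Toffoli})=v(C^{2}Z)=v_s(C^{2}Z\,\ket{+}^{\otimes 3})=3$. Hence our own magic monotone is insufficient to close the gap, and the argument must proceed through either the exhaustive-search decision procedure of \cite{gosset2014algorithm} or the optimization-based robustness-of-magic bound of \cite{howard2017application}. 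The conceptual reason the two sides differ is precisely the extra power of adaptive Clifford corrections and intermediate measurement available in the magic-state model but forbidden in the unitary model.
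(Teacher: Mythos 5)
Your proposal is correct and follows essentially the same route as the paper's own proof: take $U$ to be the Toffoli gate, cite Jones's adaptive four-$\ket{T}$ construction for $w_s(\text{Toffoli}\,\ket{\phi})\leq 4$, and cite Gosset \emph{et al.} for $t(\text{Toffoli})=7$. The added remarks on why the unitary stabilizer nullity only yields $v(\text{Toffoli})=3$ are accurate but not needed for the claim.
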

\begin{proof}
This fact is implied by~\cite{howard2017application,jones2013low,gosset2014algorithm}. 
Here $U$ is set to be the Toffoli gate. 
Using algorithm which outputs exactly $w(\cdot)$, Gosset shows $t(U)=7$~\cite{gosset2014algorithm}. On the other hand, by utilizing the adaptive computation, 
$w_s(U\ket{\phi})\leq 4$~\cite{jones2013low}.
\end{proof}

\subsection{Background of stabilizer extent}
\label{appendix:stabext}
For completeness, here we write down the definition and properties of stabilizer extent, which is from \cite{beverland2020lower}.

\begin{definition}[\cite{beverland2020lower}.]\label{def:stabext}
	For an arbitrary pure state $\ket{\psi}$, the stabilizer extent $\xi(\ket{\psi})$ is defined as
	\begin{align}
	\xi(\ket{\psi}) = \min \|(c_1,c_2,...,c_k)\|_1^2 \text{ s.t } \ket{\psi}=\sum_{\alpha=1}^k c_\alpha \ket{\phi_\alpha},
	\end{align}
	where the minimization is over all complex linear combination of stabilizer states $\{\ket{\phi_{\alpha}}\}$.
\end{definition}

\begin{theorem}[\cite{beverland2020lower}] \label{thm:lbstabext}
For any $n$-qubit unitary $U$, for any single-qubit stabilizer states $\ket{\psi_1},...,\ket{\psi_n}$,  we have
\begin{align}
 t(U) \geq 
 \frac{   \log\xi(
           U \ket{\psi_1}...\ket{\psi_n}
           )
        }
        {\log \xi( \ket{T})}. 
\end{align}
\end{theorem}

\section{Clarification on properties of the state-stabilizer nullity}\label{sec:not_vio_inv}
\subsection{Corollary \ref{cor:aux_better} does not violate properties of the state-stabilizer nullity}

Corollary \ref{cor:aux_better} and equation (\ref{eq:aux_strict}) does not violate the property that the state-stabilizer nullity is invariant under Clifford unitaries, i.e., for any $n$-qubit unitary $V$, for any  $n$-qubit Clifford unitary $C\in\cC_n$ and any $n$-qubit stabilizer state $\ket{\psi}$, one have 
\begin{align}\label{eq:C1}
	v_s(C V|\psi\rangle)=v_s(V|\psi\rangle).
\end{align}
  What the corollary implies is slightly different. Specifically, what it implies is $\exists V, C$ such that
 \begin{align}
 	v_s(VC\ket{\psi})\neq v_s(V\ket{\psi}),
 \end{align}
 Or equivalently there may exist different $n$-qubit  stabilizer states $\ket{\psi_1}$ and $\ket{\psi_2}$ such that
 \begin{align}\label{eq:C2}
 	v_s(V\ket{\psi_1})\neq v_s(V\ket{\psi_2}).
 \end{align}
 
 Write $\ket{\Phi}$ to be the $2n$-qubit maximally entangled states,
Corollary \ref{cor:aux_better} implies $\exists$ $n$-qubit unitary $U$ such that setting $V=(I_{2^n}\otimes U)$
\begin{align}
	v_s(V\ket{\Phi})> v_s(V(I_{2^n}\otimes C'')\ket{0^{\otimes 2n}}),
\end{align}
for any $n$-qubit Clifford $C''\in \cC_n$.

\subsection{The state-stabilizer nullity does not satisfy the subadditivity under composition}\label{sec:not_subadd} 
In this section, we show that for any $1$-qubit stabilizer state $\ket{\psi}$, there exist single-qubit unitaries $U,V$ such that 
\begin{align}\label{eq:C5}
	v_s(UV\ket{\psi})> v_s(U\ket{\psi})+v_s(V\ket{\psi}).
\end{align}
Thus the state-stabilizer nullity does not satisfy the subadditivity under composition. 
 In particular, suppose 
\begin{align}
	\ket{\psi} = C\ket{0}, C\in \cC_1.
\end{align}
Set 
\begin{align}
 U=e^{iX} H C^{-1}, V= CHC^{-1}.
\end{align}
Since $\ket{+}$ is an eigenstate of $e^{iX}$, one can verify that  
\begin{align}
&v_s(U\ket{\psi})=v_s(e^{iX}\ket{+})=0.	
\end{align}
Since $CH$ is Clifford, then 
\begin{align}
&v_s(V\ket{\psi})=v_s(CH\ket{0})=0.
\end{align}
Finally, one can verify by calculating the state Pauli function \diff{that}
\begin{align}
v_s(UV\ket{\psi})=v_s(e^{iX}\ket{0})=1.	
\end{align}
Thus Eq.~(\ref{eq:C5}) holds.
This counterexample can be generalized to $n$-qubit case.

\subsection{Non-decreasing of state-stabilizer nullity with higher-dimensional systems}\label{sec:non_incre}
\begin{lemma} For any $n$-qubit unitary $U$,
	for any $d\geq d'$, we have 
	\begin{align}
		\max_{C\in \cC_{d+n}} v_s\left( (I_{2^d}\otimes U) C\ket{0}^{\otimes d+n}\right)\geq \max_{C'\in \cC_{d'+n}} v_s\left( (I_{2^{d'}}\otimes U) C'\ket{0}^{\otimes d'+n}\right).
	\end{align}
\end{lemma}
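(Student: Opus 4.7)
The plan is to show that every admissible configuration on the $d'$-qubit auxiliary side can be ``lifted'' to an admissible configuration on the $d$-qubit auxiliary side without decreasing the state stabilizer nullity. Concretely, given any Clifford $C' \in \cC_{d'+n}$ achieving the right-hand side, I will construct a Clifford $C \in \cC_{d+n}$ such that $(I_{2^d}\otimes U)C\ket{0}^{\otimes(d+n)}$ equals $\ket{0}^{\otimes(d-d')}\otimes (I_{2^{d'}}\otimes U)C'\ket{0}^{\otimes(d'+n)}$, up to a permutation of tensor factors. This immediately reduces the claim to additivity of $v_s$ under tensor product together with the fact that $v_s(\ket{0}^{\otimes(d-d')})=0$.

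The first step is the padding construction. Let $C = I_{2^{d-d'}}\otimes C'$, viewed as a unitary on the combined system where the extra $(d-d')$ auxiliary qubits sit alongside the original $(d'+n)$ qubits. Since $C'$ is Clifford and the identity is Clifford, $C$ is a Clifford in $\cC_{d+n}$. A direct computation gives
\begin{align}
(I_{2^d}\otimes U)C\ket{0}^{\otimes (d+n)} = \ket{0}^{\otimes(d-d')}\otimes (I_{2^{d'}}\otimes U)C'\ket{0}^{\otimes(d'+n)},
\end{align}
after grouping the extra $(d-d')$ auxiliary qubits separately from the rest.

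The second step is to observe that $v_s$ is additive under tensor product. This follows because the state Pauli function factorizes, $P_{\ket{\psi_1}\otimes\ket{\psi_2}}(\bu_1\bu_2) = P_{\ket{\psi_1}}(\bu_1)\, P_{\ket{\psi_2}}(\bu_2)$, so $\pm 1$ entries in the product Pauli function correspond exactly to pairs of $\pm 1$ entries in the factors. Hence $s(\ket{\psi_1}\otimes\ket{\psi_2}) = s(\ket{\psi_1})s(\ket{\psi_2})$, and subtracting the total qubit count yields $v_s(\ket{\psi_1}\otimes\ket{\psi_2}) = v_s(\ket{\psi_1}) + v_s(\ket{\psi_2})$. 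Since $\ket{0}^{\otimes(d-d')}$ is a stabilizer state, $v_s(\ket{0}^{\otimes(d-d')})=0$, so the padded state has the same state stabilizer nullity as the original one.

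Combining these two steps, for every $C'\in \cC_{d'+n}$ there exists $C\in \cC_{d+n}$ with $v_s\bigl((I_{2^d}\otimes U)C\ket{0}^{\otimes(d+n)}\bigr) = v_s\bigl((I_{2^{d'}}\otimes U)C'\ket{0}^{\otimes(d'+n)}\bigr)$. Taking the max over $C'$ on the right and bounding the left by its max over $C$ gives the claimed inequality. I do not expect any substantive obstacle here: the construction is essentially bookkeeping with tensor factors, and the only ingredient slightly beyond what is explicitly stated in the paper is the tensor-product additivity of $v_s$, which is the direct state-version analogue of Theorem~\ref{thm:addten} and is proved by the same factorization argument in one line.
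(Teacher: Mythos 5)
Your proposal is correct and follows exactly the paper's own argument: pad $C'$ to $C=I_{2^{d-d'}}\otimes C'$, then invoke additivity of $v_s$ under tensor product and $v_s(\ket{0}^{\otimes(d-d')})=0$ (the paper cites Beverland \emph{et al.} for these two properties rather than re-deriving the factorization, but that is the only difference). No gaps.
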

\begin{proof}
	\diff{It suffices to notice that, for any $C'\in \cC_{d'+n}$}, let
	\begin{align}
		C:=I_{2^{d-d'}}\otimes C'.
	\end{align}
	we have 
	\begin{align}
		v_s\left( 
		(I_{2^d}\otimes U) C\ket{0}^{\otimes d+n}\right)
		&=v_s\left( 
		(I_{2^{d-d'}}\otimes I_{2^{d'}}\otimes U) C\ket{0}^{\otimes d+n}\right) \nonumber\\
				&=v_s\left(\ket{0}^{d-d'}\otimes (I_{2^{d'}}\otimes U)C'\ket{0}^{\otimes d'+n}\right) \nonumber\\
				&=v_s\left(\ket{0}^{d-d'}\right)
+ v_s\left( (I_{2^{d'}}\otimes U)C'\ket{0}^{\otimes d'+n}\right)\label{eq:C17}\\
&=v_s\left( (I_{2^{d'}}\otimes U)C'\ket{0}^{\otimes d'+n}\right)\label{eq:C18}.
	\end{align} 
	\diff{The  Eq.~(\ref{eq:C17}) is obtained by the fact that the state-stabilizer nullity is  additive under tensor product. The Eq.~(\ref{eq:C18}) is by the fact that $v_s(\ket{\phi})=0$ for stabilizer state $\ket{\phi}$.} The proof of the two \diff{facts} can be seen in \cite{beverland2020lower}.
\end{proof}

\subsection{Detailed proofs of Lemma \ref{lem:Clifford10}, Lemma \ref{lem:10s}, and Lemma \ref{lem:pj}}\label{appendix:proof_Lemma_12}

 \begin{proof} [of Lemma \ref{lem:Clifford10}] We only give proof to the first statement, the second is similar. Recall that
 \begin{align}
    P_U(\bu|\bv) &= \tr(\sigma_\bu U\sigma_\bv U^\dagger )/2^n,\\
     \tr(\sigma_\bu \sigma_\bv) &= \left\{
     \begin{aligned}
     & 2^n, &\text{ if $\bu=\bv$}\\
     & 0, &\text{ else.}
     \end{aligned}
     \right.
 \end{align}
 
 Since $U$ is Clifford, $U(\cdot)U^\dagger$ is an isomorphism of $\hat{\cP}_n$, i.e.  $U \hat{\cP}_n U^\dagger= \hat{\cP}_n$. Thus $\forall \bu\in 4^{[n]}$, there exists  $\bv\in 4^{[n]}, k\in [4]$ such that $U i^k \sigma_\bv U^\dagger= \sigma_\bu,$ or equivalently $U  \sigma_\bv U^\dagger= (-i)^k \sigma_\bu $. Furthermore, 
 since the conjugate operation would not change the eigenvalues of $\sigma_\bv$, which \diff{are} $\pm 1$. Thus the global phase $(-i)^k$ must be $\pm 1$. Thus there exists at least one $\bv$ such that $P_{U}(\bu|\bv)=\pm 1$.
 
 On the other hand, by the linearity of $U(\cdot)U^\dagger$, if there exists $\bv\neq\bv'$ such that 
 \begin{align}
     U\sigma_\bv U^\dagger &= (-i)^{k_\bv} \sigma_\bu,\\
     U\sigma_{\bv'} U^\dagger &= (-i)^{k_{\bv'}} \sigma_\bu.
 \end{align}
 Then we have
     $U \left( (-i)^{k_{\bv'}} \sigma_\bv- (-i)^{k_{\bv}} \sigma_{\bv'}\right)  U^\dagger =0$,
 which leads to a contradiction by  the fact that $\sigma_\bv$ and $\sigma_{\bv'}$ are linear independent.
 Thus we prove that there \diff{is} only one such $\bv$ which satisfies $P_{U}(\bu|\bv)=\pm 1$. Since $U(\cdot)U^\dagger$ is an isomorphism of $\hat{\cP}_n$ by Eq.~(\ref{eq:c01}) we know for any $\bv'\neq \bv$, we have $P_U(\bu|\bv')=0$.
 \end{proof}
 
\vspace{3em}

\begin{proof}[of Lemma \ref{lem:10s}]
Since $\sigma_\bu$ is a Pauli operator,  $2^{n-1}$ number of its eigenvalues are $+1$, and $2^{n-1}$  eigenvalues are $-1$. We write
\begin{align}
    &\sigma_\bu=\sigma_\bu^+-\sigma_\bu^-,
\end{align}
where $\sigma_\bu^\pm$ are the projections onto the $\pm1$ eigenspace of $\sigma_\bu$. \diff{Similarly}, we write 
\begin{align}
    \sigma_\bv=\sigma_\bv^+-\sigma_\bv^-, \quad
    Q_\bv^+=U\sigma_\bv^+ U^\dagger, \quad
        Q_\bv^-=U\sigma_\bv^- U^\dagger.
\end{align}
 Then
\begin{align}
    P_U(\bu|\bv)&=\tr \sigma_\bu U \sigma_\bv U^\dagger/2^n\\
    &=\tr (2\sigma_\bu^+-I)(2Q_\bv^+-I)/2^n \text{ (Since $\sigma_\bu^+ +\sigma_\bu^-=I$)} \nonumber\\
    &=(4\tr \sigma_\bu^+Q_\bv^+-2^n)/2^n.
\end{align}
Since $P_U(\bu|\bv)=\pm 1$, we have either 
\begin{align}\label{equ:0}
    \tr \sigma_\bu^+Q_\bv^+=0.
\end{align}
or
\begin{align}\label{equ:1}
     \tr \sigma_\bu^+Q_\bv^+=2^{n-1}.
\end{align}
Note that the eigenvalues of $\sigma_\bu^+$ (or $Q_\bv^+$) are $2^{n-1}$ $1$-eigenvalue and $2^{n-1}$ $0$-eignenvalue. 

If equation (\ref{equ:0}) holds, we must have $Q_\bv^+=\sigma_\bu^-$. To prove this, let $\bx_1,...,\bx_{2^{n-1}}$ be an orthogonal basis of the $+1$-eigenspace of $Q_\bv$. Since $Q_\bv^+$ and $\sigma_\bu^-$ are all projections, it suffices to prove vector spaces $\{\bx|Q_\bv^+ \bx=\bx\}=\{\bx|\sigma_\bu^- \bx=\bx\}$. Since the two spaces have the same dimension of $2^{n-1}$, it suffices to prove 
\begin{align}
    \sigma_\bu^-\bx_i=\bx_i,\forall i.
\end{align}
or by the fact $\sigma_\bu^+ +\sigma_\bu^-=I$, equivalently to prove
\begin{align}
   \sigma_\bu^+\bx_i=0,\forall i.
\end{align}
If the above is not true, $\exists i_0,  \sigma_\bu^+\bx_{i_0}\neq 0$. Extend $\{\bx_i\}_{i=1}^{2^{n-1}}$ to the basis of the whole ($2^n$-dimensional) space, we have 
\begin{align}
    \tr \sigma_\bu^+Q_\bv^+&=\sum_{i=1}^{2^n} \langle \bx_i|\sigma_\bu^+Q_\bv^+|\bx_i\rangle
    =\sum_{i=1}^{2^{n-1}} \langle \bx_i|\sigma_\bu^+|\bx_i\rangle
    \geq \langle \bx_{i_0}|\sigma_\bu^+|\bx_{i_0}\rangle
    >0,
\end{align}
which leads to a contradiction. Thus we conclude $Q_\bv^+=\sigma_\bu^-$, that is 
\begin{align}
    U \sigma_\bv U^\dagger &=2Q_\bv^+- I
    = 2\sigma_\bu^- -I
    = -\sigma_\bu.
\end{align}
Then $\forall \bu'\neq \bu$, $P_U(\bu'|\bv)=0.$

If equation (\ref{equ:1}) holds, notice that
\begin{align}
    \tr \sigma_\bu^- Q_\bv^+&=\tr(I-\sigma_\bu^+) Q_\bv^+
    =2^{n-1}-2^{n-1}
    =0.
\end{align}
Similarly, we conclude $Q_\bv^+=\sigma_\bu^+$.
\end{proof}

\begin{proof}[of Lemma \ref{lem:congs}]
Notice that 
\begin{align}
    |\cP_U|&=| U\cP_n U^\dagger\cap \cP_n|= |U(\cP_n\cap U^\dagger \cP_n U)U^\dagger| \nonumber\\
    &= |\cP_n\cap U^\dagger \cP_n U|=|\cP_{U^\dagger}|.
\end{align}
In the following we show that $ |\cP_U|=s(U)$.
Since $\cP_U$ is an intersection of two groups, thus $\cP_U$ is a group. Since $\cP_U\subseteq\cP_n$, thus further $\cP_U$ is a subgroup of $\cP_n$. 
To prove $ |\cP_U|=s(U)$, it suffices to prove that there is an one-to-one correspondence between $\sigma_\bu \in \cP_U$ and a pair $(\bu,\bv)$ such that $P_U(\bu|\bv)=\pm 1$.

The  proof is as follows.
On one hand if $\sigma_\bu\in \cP_U = U\cP_n U^\dagger \cap \cP_n$,
then by definition there exists $\bv_\bu \in 4^{[n]}$ such that 
\begin{align}
	&U\sigma_{\bv_\bu} U^\dagger = \pm \sigma_\bu \label{eq:vu}.
	\end{align}
	Thus, we have  $P_U(\bu|\bv_\bu)=\pm 1$.
Further we know such $\bv_{\bu}$ is unique since  
Eq.~(\ref{eq:vu}) implies $U^\dagger \sigma_{\bu} U=\pm \sigma_{\bv_{\bu}}$, and for any $\bv\neq {\bv_{\bu}} $ we have $\tr(\sigma_{\bv_{\bu}}\sigma_\bv)=0$.

One the other hand, for any $(\bu,\bv)$ such that $P_U(\bu|\bv)=\pm 1$, by Lemma \ref{lem:10s} we have $U\sigma_\bv U^\dagger=\pm \sigma_\bu$ thus
\begin{align}
	\sigma_\bu\in \cP_U = U\cP_n U^\dagger \cap \cP_n.
\end{align}
\end{proof}

\begin{proof}[of Lemma \ref{lem:pj}]
For any $a_1,a_2\in A$, notice that
\begin{align}
    &a_2^{-1}a_1\in B \rightarrow a_1B=a_2B,\\
    & a_2^{-1}a_1\not\in B \rightarrow a_1B\cap a_2B=\emptyset.
\end{align}
Thus $A\times B$ can be written as the union of disjoint cosets: If we write $\hat{A}$ as the set of coset representatives, that is
a maximum set which satisfies 
\begin{align}
    \left\{
    \begin{aligned}
    & \hat{A}\subseteq A,\\
    & \forall a_1,a_2 \in \hat{A},\, a_2^{-1}a_1\not\in B.
    \end{aligned}
    \right.
   \label{eq:hatA}
   \end{align}
   Then we have
   \begin{align}
       & A\times B=\cup_{a\in\hat{A}} aB, \\
       & \forall a_1,a_2\in \hat{A}, a_1B\cap a_2B =\emptyset.
   \end{align}
   Thus 
   \begin{align}
       |A\times B|=|\hat{A}|\cdot|B|
       \label{eq:coset1}
   \end{align}
   Since $A,B$ are groups, thus $A\cap B$ is a group. Further notice that since $\hat{A}\subseteq A$, Eq.~\eqref{eq:hatA} is equivalent to
   \begin{align}
    \left\{
    \begin{aligned}
    & \hat{A}\subseteq A,\\
    & \forall a_1,a_2 \in \hat{A},\, a_2^{-1}a_1\not\in B\cap A.
    \end{aligned}
    \right.
   \end{align}
   This implies $\hat{A}$ is also the set of coset representative for $A\cap B$ in $A$. Thus 
   \begin{align}
       |\hat{A}|=\frac{|A|}{|A\cap B|}
       \label{eq:coset2}.
   \end{align}
   Combine Eq.~\eqref{eq:coset1} and Eq.~\eqref{eq:coset2}   we finally conclude that
   \begin{align}
         |A\times B|=\frac{|A|\cdot|B|}{|A\cap B|}.
   \end{align}
   \end{proof}

\section{Detailed proofs of main results}
\subsection{Detailed proof for Theorem~\ref{lem:compa}}
\begin{proof} [Proof of Theorem~\ref{lem:compa}]
Fix any  $n$-qubit stabilizer state, that is
    $\ket{\psi}=C\ket{0}^{\otimes n}$
 for some Clifford operator $C\in \cC_n.$
For any $\bu\in 4^{[n]}$, notice that the condition that $\ket{0}^{\otimes n}$ is an eigenvector of $C^\dagger U^\dagger \sigma_\bu U C$, would imply $ P_{U\ket{\psi}}(\bu)=\pm 1,$ then
\begin{align}\label{eq:ss}
   C^\dagger U^\dagger \sigma_\bu U C  \in  \pm \{I,Z\}^{\otimes n}\rightarrow  P_{U\ket{\psi}}(\bu)=\pm 1 .
\end{align}
  Note that in the following  all the matrix product is the  matrix product modulo phases. For Clifford operator $C\in \cC_n$, define 
  \begin{align}
     H&:= \cP_n \nonumber,\\
      A &:=  C \{I,Z\}^{\otimes n} C^\dagger \nonumber,\\
      B &:= \cP_{U^\dagger} = U^\dagger \cP_n U \cap \cP_n.
  \end{align}
  Notice that $A$, $B$ and $A\cap B$ are subgroups of $H$, with respect to the matrix-product modulo phases.  Besides, we have
  \begin{align}
      |H| &=4^n, & \nonumber\\
      |A| &=2^n, & \nonumber\\
      |B| &= |\cP_{U^\dagger}|=|\cP_U|= s(U). &\text{ (by Lemma \ref{lem:congs})} 
  \end{align}
Further, we have 
\begin{align}
    |A\cap B| &= |C \{I,Z\}^{\otimes n} C^\dagger \cap U ^\dagger \cP_n U \cap \cP_n| & \nonumber\\
    & =| \{I,Z\}^{\otimes n}  \cap  C^\dagger U^\dagger \cP_n U C \cap \cP_n| \label{eq:a}\\
    &= | \{I,Z\}^{\otimes n}  \cap  C^\dagger U^\dagger \cP_n U C | \label{eq:b}\\
    &\leq s(U\ket{\psi}) \label{eq:ine}. 
\end{align}
The \diff{Eqs.~(\ref{eq:a})~(\ref{eq:b})~(\ref{eq:ine})} use the fact that   $C^\dagger \cP_n C=\cP_n$, $\{I,Z\}^{\otimes n}\in \cP_n$ and Eq.~\eqref{eq:ss} respectively.
By Lemma \ref{lem:pj}, we conclude
\begin{align}
    2^n \cdot s(U) &\leq |A\cap B|\cdot 4^n \label{eq:97},\\
    &\leq s(U\ket{\psi})\cdot 4^n \nonumber,\\
    n-\log_2s(U\ket{\psi}) &\leq 2n -\log_2s(U) \nonumber.
\end{align}
Thus
\begin{align}
    v_s(U\ket{\psi})\leq v(U).\label{eq:D15}
\end{align}

The equality \diff{in (\ref{eq:D15})} holds if and only if equality holds for Eq.~\eqref{eq:ine} and Eq.~\eqref{eq:97}, that is 
\begin{align}
    &\{\bu\in 4^{[n]} \,|\, \langle 0^{\otimes n}C^{\dagger} U^\dagger \sigma_\bu U C|0\rangle^{\otimes n}=\pm 1 \}\nonumber\\
    &=\{\bu\in 4^{[n]}| C^\dagger U^\dagger \sigma_\bu U C\in \pm \{I,Z\}^{\otimes n}\},
\end{align}
and $A\times B=H$.
\end{proof}

\subsection{Detailed proof of Lemma ~\ref{lem:diagplus}}
 \begin{proof}[Proof of Lemma~\ref{lem:diagplus}]
 We prove the inclusion relationships for both sides.
 
 \noindent$\leftarrow$ Suppose $\bu\in 4^{[n]}$ satisfies that $U^\dagger \sigma_\bu U\in \pm \{I,X\}^{\otimes n}$. Since
 \begin{align}
     H^{\otimes n}\{I,X\}^{\otimes n} H^{\otimes n} =\{I,Z\}^{\otimes n}.
 \end{align}
We have
 \begin{align}
     \langle + |^{\otimes n} U^\dagger \sigma_\bu U |+\rangle^{\otimes n} 
     &= \pm 1.
 \end{align}
 
 \noindent$\rightarrow$ Since $U$ is a diagonal unitary, we can write
 \begin{align}
     &U=diag(\cdots, \lambda_\bx,\cdots), \bx\in\{0,1\}^n, |\lambda_\bx|=1.
 \end{align}
 Since
 \begin{align}
    U \ket{+}^{\otimes n}=\frac{1}{\sqrt{2^n}}\sum_{\bx \in \{0,1\}^{n}} \lambda_{\bx}\ket{\bx},
 \end{align}
  then
 \begin{align}
     \langle +|^{\otimes n} U^\dagger \sigma_\bu U\ket{+}^{\otimes n}&=\frac{1}{2^n}\sum_{\bx,\by\in\{0,1\}^n} \lambda_\bx^\dagger \lambda_\by \langle \bx \sigma_\bu \by\rangle.\label{eq:122}.
 \end{align}
 Note that  $|\langle \bx \sigma_\bu \by\rangle|\in\{0,1\}$ and Eq.~\eqref{eq:122} has in total $2^n\times 2^n$ terms. However,  for fixed $\bu$, for any $ \bx\in \{0,1\}^n$, there exists  exactly one $\by_{\bx}\in \{0,1\}^n$ such that $|\langle \bx \sigma_\bu \by_{\bx}\rangle|=1$,
 and other terms are all $0$, that is 
 \begin{align}
     &|\langle \bx \sigma_\bu \by_{\bx}\rangle|=1 \nonumber,\\
     &|\langle\bx \sigma_{\bu}\by\rangle|=0, \forall  \by\neq \by_{\bx}.
 \end{align}
 Thus if $\langle +|^{\otimes n}U^\dagger \sigma_\bu U\ket{+}^{\otimes n}= 1$ (The case for $-1$ is similar), we must have all the nonzero term $\lambda_\bx^\dagger \lambda_{\by_{\bx}} \langle \bx \sigma_\bu \by_{\bx}\rangle=1$. Together with the fact that   $\langle \bx \sigma_\bu \by\rangle=0, \forall \by\neq \by_\bx$. Thus we have
 \begin{align}\label{eq:abseq}
     \lambda_\bx^\dagger \lambda_\by \langle \bx \sigma_\bu \by\rangle=|\lambda_\bx^\dagger \lambda_\by \langle \bx \sigma_\bu \by\rangle|,\quad \forall \bx,\by\in\{0,1\}^{\otimes n}.
 \end{align}
 Note that for any operator $(\cdot)$, the following  holds
 \begin{align}\label{eq:opedec}
     (\cdot)=\sum_{\bx,\by\in\{0,1\}^n} \langle\bx(\cdot) \by\rangle |\bx\rangle \langle \by|. 
     \end{align}
Thus 
\begin{align}
    U^\dagger \sigma_\bu U &=\sum_{\bx,\by\in\{0,1\}^n} \langle\bx( U^\dagger \sigma_\bu U) \by\rangle |\bx\rangle \langle \by| \nonumber\\ 
    &= \sum_{\bx,\by\in\{0,1\}^n} \lambda_\bx^\dagger \lambda_\by  \langle \bx \sigma_\bu \by\rangle |\bx\rangle \langle \by| \nonumber\\ 
    & =\sum_{\bx,\by\in\{0,1\}^n} |\lambda_\bx^\dagger \lambda_\by  \langle \bx \sigma_\bu \by\rangle|\cdot |\bx\rangle \langle \by| \nonumber\\
    & =\sum_{\bx,\by\in\{0,1\}^n} |  \langle \bx \sigma_\bu \by\rangle| \cdot |\bx\rangle \langle \by| \nonumber \\
    &\in\{I,X\}^{\otimes n}.
    \end{align}
The above four equalities and the final inclusion are obtained by Eq.~(\ref{eq:opedec}), the fact that $U\ket{\by}=\lambda_\by \ket{\by}$,  Eq.~(\ref{eq:abseq}), the fact that $|\lambda_\bx|=|\lambda_\by|=1$ and Lemma \ref{lem:pos} respectively.
\end{proof}

\subsection{Detailed proof of Theorem \ref{theo:2n}}
\begin{proof}[of Theorem \ref{theo:2n}]
We prove this theorem by directly calculating the unitary Pauli functions. 
Recall that for any $\bx\in\{0,1\}^n$,
$\odot(\bx):=x_1x_2...x_n$, and $1^n$ is the $n$-bit string of all ones. If we ignore  a global phase $\pm 1$, we may write 
 any $\sigma_\bu,\sigma_\bv\in \cP_n$ as
\begin{align}
    \sigma_\bu =X^{\bs}Z^{\bt},
    \sigma_\bv = X^{\bp}Z^{\bq}.
\end{align}
 where  $\bs,\bt,\bp,\bq\in\{0,1\}^n$. Notice that ignoring the global phase $\pm1$ does not influence the number of $\pm1$s in the Pauli functions.
Write
\begin{align}
    U=C^{n-1}Z\,H^{\otimes n}\, C^{n-1}Z.
\end{align}
\\
\\
One can verify that for any $\bx\in\{0,1\}^n$,
\begin{align}
    &\langle \bx| \sigma_\bu U \sigma_\bv U^\dagger |\bx\rangle \nonumber\\
    &=\frac{1}{2^n} (-1)^{\bs\cdot \bp+\bs\cdot \bt}(-1)^{\bx\cdot(\bp+\bt)}(-1)^{\odot(\bx)+\odot(\bs+\bx)}\nonumber\\
    &\,\quad\times\!\!\!\!\sum_{\by\in\{0,1\}^n}(-1)^{(\bq+\bs)\cdot \by}(-1)^{\odot(\by)+\odot(\by+\bp)}.
\end{align}
Define 
\begin{align}
    f(\bq,\bs,\bp)=\sum_{\by\in\{0,1\}^n}(-1)^{(\bq+\bs)\cdot \by}(-1)^{\odot(\by)+\odot(\by+\bp)}.
\end{align}
We can see that
\begin{align}\label{eq:2n}
    \tr(\sigma_\bu U \sigma_\bv U^\dagger) &= \sum_{\bx\in\{0,1\}^n}  \langle \bx| \sigma_\bu U \sigma_\bv U^\dagger |\bx\rangle \\
    & = \frac{1}{2^n} (-1)^{\bs\cdot \bp+\bs\cdot \bt} f(\bp,\bt,\bs)f(\bq,\bs,\bp).
\end{align}
To further simplify the equations, similar as in \cite{beverland2020lower} we notice that 
\begin{align}
    &  (-1)^{\odot(\by)+\odot(\by+\bp)}=\left\{
    \begin{aligned}
    &  1 \text{\quad$\forall \by$, if $\bp=0$};\\
    &  1  \text{\quad if $\bp\neq 0$;\,  $\by\neq 1^n$ and $\by\neq \bp\oplus 1^n$};\\
    & -1 \text{\quad if $\bp\neq 0$;\, $\by=1^n$ or $\bp+1^n$. }
    \end{aligned}
    \right.
    \end{align}
Combine the fact that
\begin{align}
    \sum_{\by\in\{0,1\}^n} (-1)^{\bz\cdot \by}=\left\{
    \begin{aligned}
    &2^n, \text{ if $\bz=0$,}\\
    &0, \text{ else.}
    \end{aligned}
    \right.
\end{align}
We can see that
\begin{align}\label{eq:fpsq}
    f(\bq,\bs,\bp)=\left\{
    \begin{aligned}
    & \bullet \quad 2^n, \text{ if $\bp=0,\bq=\bs$};\\
    & \bullet \quad 0,  \text{ if $\bp=0,\bq\neq\bs$};\\
    & \bullet \quad 2^n-4,  \text{ if $\bp\neq 0,\bq=\bs$};\\
    &\bullet \quad -2(-1)^{(\bq+\bs)\cdot 1^n}-2(-1)^{(\bq+\bs)\cdot(1^n+\bp)},\\  
    &\quad\quad \text{ if $\bp\neq 0,\bq\neq \bs$.}\\
    \end{aligned}
    \right.
\end{align}
Combine with Eq.~(\ref{eq:2n}) we know that  the unitary Pauli function $\tr(\sigma_\bu U \sigma_\bv U^\dagger)/2^n=\pm 1$ if and only if both $f(\bp,\bt,\bs)=2^n, f(\bq,\bs,\bp)=2^n$, thus by Eq.~(\ref{eq:fpsq}) we conclude this implies 
\begin{align}
    \bp=\bt=\bq=\bs=0.
\end{align}
Thus we conclude that 
\begin{align}
    &s(U)=1 \nonumber,\\
    &v(U)=2n-\log_2s(U)=2n.
\end{align}
\end{proof}

\subsection{Detailed proof of Theorem~\ref{thm:no_more}}

\begin{proof}[Proof of Theorem~\ref{thm:no_more}]
    Since
	for any $d\geq d'$,  we have 
	\begin{align}
		&\max_{C\in \cC_{d+n}} v_s\left( (I_{2^d}\otimes U) C\ket{0}^{\otimes d+n}\right)\nonumber\\
		&\geq \max_{C'\in \cC_{d'+n}} v_s\left( (I_{2^{d'}}\otimes U) C'\ket{0}^{\otimes d'+n}\right). \\
		&\quad\quad \text{ (Appendix \ref{sec:non_incre})}\nonumber
    \end{align}
    It suffices  to give proofs for sufficiently large $d$, namely  $d\geq n$. For any $d\geq n$, by Theorem \ref{lem:compa} we know that
    \begin{align}
    	v(I_d\otimes U)\geq \max_{C\in \cC_{d+n}} v_s\left( (I_{2^d}\otimes U) C\ket{0}^{\otimes d+n}\right)\label{eq:177}.
    \end{align}
  Besides notice that since the unitary-stabilizer nullity satisfies the additivity under tensor product and faithfulness, we have
  \begin{align}
  	v(I_d\otimes U) &= v(I_d) + v(U), &\text{ (Theorem \ref{thm:addten})} \nonumber\\
  	&= 0+v(U). &\text{ (Theorem \ref{prop:faith})}
  \end{align}
  Thus, we have
  \begin{align}
  v(U)\geq \max_{C\in \cC_{d+n}} v_s\left( (I_{2^d}\otimes U) C\ket{0}^{\otimes d+n}\right).\label{eq:greater}
  \end{align}
  Moreover, by Theorem \ref{thm:com_aux} we have
\begin{align}
    v(U) &=v_s\left( I_{2^n}\otimes U \ket{\phi} \right) \nonumber\\
    &\leq \max_{C\in \cC_{2n}} v_s\left( (I_{2^n}\otimes U) C\ket{0}^{\otimes 2n}\right) \nonumber\\
    &\leq \max_{C\in \cC_{d+n}} v_s\left( (I_{2^d}\otimes U) C\ket{0}^{\otimes d+n}\right)\label{eq:smaller}.
  \end{align}
  The last inequality is obtained by noticing $d\geq n$ and using Eq.~(\ref{eq:177}).
  \diff{Combine the above inequalities (\ref{eq:greater}) and (\ref{eq:smaller})}
   we conclude for $d\geq n$,
  \begin{align}
  v(U)= \max_{C\in \cC_{d+n}} v_s\left( (I_{2^d}\otimes U) C\ket{0}^{\otimes d+n}\right).
  \end{align}
  Combine with Eq.~(\ref{eq:177}), we conclude 
  \begin{align}
  v(U) =\max_{d\in \mathbb{N}^+} \max_{C\in\cC_{d+n}} v_s\left( (I_{2^d}\otimes U)\, C\ket{0}^{\otimes (d+n)} \right).
  \end{align}
\end{proof}

\twocolumngrid
%


 \end{document}